\newcommand{\cmark}{\textcolor{ForestGreen}{\ding{51}}}%
\newcommand{\xmark}{\textcolor{Maroon}{\ding{55}}}%
\newcommand{\namark}{\textcolor{Maroon}{\textbf{?}}}%
\newtheorem{remark}{Remark}
\newtheorem{definition}{Definition}
\newtheorem{theorem}{Theorem}
\newtheorem{proposition}{Proposition}
\newtheorem{lemma}{Lemma}
\newtheorem{assumption}{Assumption}
\newcommand{\Int}{\operatorname{Int}}
\newcommand{\Inthull}{\operatorname{IntHull}}
\DeclareMathOperator*{\rank}{rank}
\DeclareMathOperator*{\rows}{rows}
\DeclareMathOperator*{\relind}{relind}
\DeclareMathOperator*{\aff}{aff}
\title{Efficient Verification and Falsification of ReLU Neural Barrier Certificates}
\author{
    %Authors
    % All authors must be in the same font size and format.
    % Written by AAAI Press Staff\textsuperscript{\rm 1}\thanks{With help from the AAAI Publications Committee.}\\
    % AAAI Style Contributions by Pater Patel Schneider,
    % Sunil Issar,\\
    % J. Scott Penberthy,
    % George Ferguson,
    % Hans Guesgen,
    % Francisco Cruz\equalcontrib,
    % Marc Pujol-Gonzalez\equalcontrib
    Dejin Ren\textsuperscript{\rm 1, 2}\equalcontrib, Yiling Xue\textsuperscript{\rm 1, 2, 3}\equalcontrib, Taoran Wu\textsuperscript{\rm 1, 2}, and Bai Xue\textsuperscript{\rm 1, 2, 3}\thanks{Corresponding author}
}
\title{My Publication Title --- Single Author}
\author {
    Author Name
}
\title{My Publication Title --- Multiple Authors}
\author {
    % Authors
    First Author Name\textsuperscript{\rm 1,\rm 2},
    Second Author Name\textsuperscript{\rm 2},
    Third Author Name\textsuperscript{\rm 1}
}
\begin{document}

\maketitle

\begin{abstract}
Barrier certificates play an important role in verifying the safety of continuous-time systems, including autonomous driving, robotic manipulators and other critical applications. Recently, 
ReLU neural barrier certificates---barrier certificates represented by the ReLU neural networks---have attracted significant attention in the safe control community due to their promising performance.
However, because of the approximate nature of neural networks, rigorous verification methods are required to ensure the correctness of these certificates. This paper presents  a necessary and sufficient condition for verifying  the correctness of ReLU
neural barrier certificates. The proposed condition can be encoded  as either an Satisfiability Modulo Theories (SMT) or  optimization problem, enabling both verification and falsification.  To the best of our knowledge, this is the first approach
capable of falsifying ReLU neural barrier certificates. Numerical experiments demonstrate the validity and 
effectiveness of the proposed method in both verifying and falsifying such certificates.
\end{abstract}

% Uncomment the following to link to your code, datasets, an extended version or similar.
% You must keep this block between (not within) the abstract and the main body of the paper.
\begin{links}
    \link{Code}{https://github.com/YilingXue/evf-rnbc}
    % \link{Datasets}{https://aaai.org/example/datasets}
    % \link{Extended version}{https://aaai.org/example/extended-version}
\end{links}

\section{Introduction}
\label{sec:intro}
% \laode{\begin{itemize}
%     \item sound and complete, 
%     \item lower complexity 
%     \item $\mathbb{R}^n$
% \end{itemize}}

% Safety is a crucial property for autonomous systems, including automatic driving, robotic manipulators and other vital applications. In control theory, positive invariance can be treated as a formal description of safety, that is given an admissible region, under the dynamic system, any trajectory initialized from the admissible region will never exit. In practical, the admissible region is usually expressed as the 0-superlevel (or sublevel) set of a function, which is called \textit{barrier certificate}.

% Safety is a crucial property for autonomous systems, including automatic driving, robotic manipulators and other vital applications. Given the initial set and unsafe set, the formal description of safety is that  any trajectory starting from the initial set will never reach the unsafe set. In practical, a \textit{barrier certificate} can provide a theoretical guarantee for the safety. The 0-superlevel (or sublevel) set of a barrier certificate formalizes a positively invariant set, where trajectories starting from will remain in it indefinitely. If this positively invariant set includes the initial set and has no intersection with the unsafe set, then the safety property of the system is confirmed.

Safety is a crucial property for continuous-time systems, including autonomous driving, robotic manipulators and other vital applications.
Formally, a system is safe if every trajectory starting from the initial set never enters the unsafe set. In practice, a \textit{barrier certificate} offers a theoretical guarantee of safety. The 0-superlevel (or sublevel) set of a barrier certificate defines a positive invariant set --- meaning that trajectories starting within it remain there indefinitely. If this positive invariant set contains the initial set and does not intersect the unsafe set, the safety of the system is ensured.

% positive invariance can be treated as a formal description of safety, that is given an admissible region, under the dynamic system, any trajectory initialized from the admissible region will never exit. In practical, the admissible region is usually expressed as the 0-superlevel (or sublevel) set of a function, which is called \textit{barrier certificate}.

% In recent years, with the sum-of-squares (SOS) technique, polynomial barrier certificates can be synthesized to certify positive invariance \cite{ames2019control,clark2021verification}. However, the SOS method is limited to polynomial systems and restricts the expressive capability of the resulting barrier certificates. To overcome these drawbacks, barrier certificates defined by neural networks \cite{dawson2023safe,zhao2020synthesizing,qin2021learning,abate2021fossil,zhao2021learning,liu2023safe}, known as neural barrier certificates, have been proposed. Using their universal approximation capability, neural barrier certificates have shown promising performance in applications such as robot control \cite{dawson2022safe,xiao2023barriernet}. 
In recent years, the sum-of-squares (SOS) technique has been widely used to synthesize polynomial barrier certificates for certifying positive invariance \cite{ames2019control,clark2021verification}. However, SOS methods are restricted to polynomial systems and limit the expressive power of the resulting certificates. To address these limitations, barrier certificates defined by neural networks—known as neural barrier certificates—have been introduced \cite{dawson2023safe,zhao2020synthesizing,qin2021learning,abate2021fossil,zhao2021learning,liu2023safe}. Leveraging their universal approximation capability, neural barrier certificates have demonstrated promising performance in applications such as robot control \cite{dawson2022safe,xiao2023barriernet}. 
Nonetheless, due to the approximate nature of neural networks, they may fail to guarantee positive invariance. Therefore, verification methods are essential to certify the correctness of learned neural barrier certificates.
% Unfortunately, the neural barrier certificates may not guarantee the positive invariance since their approximation nature, therefore the verification methods are wanted to certificate the correctness of learned neural barrier certificates.

This paper focuses on the verification and falsification of neural barrier certificates using Rectified Linear Unit (ReLU) activation functions, due to their widespread use in the safe control community \cite{dawson2023safe, zhao2021synthesizing, mathiesen2022safety}.
However, ReLU neural barrier certificates are not differentiable, making traditional methods that rely on Lie derivative conditions inapplicable \cite{dai2017barrier, ames2019control}.
% With the assumption that the derivative of ReLU activation function is the Heaviside step function (this assumption lacks mathematical rigor in verifying positive invariance, we refer readers to appendix for details), 
Under the assumption that the derivative of the ReLU activation function is the Heaviside step function (an assumption that lacks mathematical rigor for verifying positive invariance; see the appendix for details),
some works
\cite{zhao2022verifying,hu2024verification} over-approximate the possible values of the Lie derivative and then verify the Lie derivative condition over these over-approximations, leveraging either mixed integer programming \cite{zhao2022verifying} or symbolic bound propagation \cite{hu2024verification}.

Recently, \cite{zhang2023exact} proposed a necessary and sufficient condition for verifying positive invariance based on the Bouligand tangent cone, 
which was further used to synthesize ReLU neural barrier certificates in \cite{zhang2024seev}. 
Their condition avoids assuming that the derivative of the ReLU activation is the Heaviside step function. 
%Building on this, they developed a breadth-first search algorithm to enumerate the linear regions intersecting the boundary of the 0-superlevel set of the ReLU neural barrier certificate and their combinations \cite{zhang2024seev}. 
However, The most computationally intensive step in their method is enumerating all possible intersection combinations of linear regions intersecting the boundary of the 0-superlevel set of the ReLU neural barrier certificate.
%which is referred to as ``hinges'' in \cite{zhang2024seev}. 
This enumeration procedure has exponential complexity with respect to the number of linear regions containing boundary points (see Remark \ref{remark: compare to hinge}). 
%Although \cite{zhang2024seev} introduced a heuristic based on a sufficient condition to reduce this burden, they remain facing this troublesome issue when the heuristic fails.
Additionally, 
% \yiling{mathematical description of Bouligand tangent cone} 
the Bouligand tangent cone condition  cannot be encoded exactly in optimization problems due to the presence of strict inequalities (see Remark \ref{remark: Bouligand cannot opt}). As a result, the condition must be relaxed to a sufficient one to enable incorporation into optimization formulations.

In summary, all existing methods for verifying ReLU neural barrier certificates rely on derivative assumption or sufficient conditions, which can be overly conservative and lead to false negatives in practical applications. Moreover, none of these methods can be used for falsification, leaving a critical gap in real-world deployment.
In this paper, we propose a novel necessary and sufficient condition for certifying the positive invariance of 0-superlevel sets of continuous piecewise linear functions (CPLFs) under continuous-time systems. This condition is applicable not only to ReLU neural barrier certificates but also to other networks with piecewise linear activation functions, such as leaky ReLU \cite{maas2013rectifier} and PReLU \cite{he2015delving}, since these networks are inherently CPLFs.
Our proposed condition states that the 0-superlevel set of a CPLF is positively invariant if and only if, in each valid linear region (see Definition \ref{def:Valid linear region}), the inner product between the region's linear coefficient vector and the vector field is non-negative. Compared to the condition in \cite{zhang2023exact}, our condition requires verification in significantly fewer regions, as it avoids enumerating all possible  intersection combinations of  linear regions that  intersect boundary. This reduction  makes the proposed condition more efficient and practical for real-world applications.

% A verification algorithm is proposed based on our sufficient and necessary condition. Firstly we search for an initial valid linear region leveraging Interval Bound Propagation (IBP) technique, once find one, the boundary propagation algorithm is carried out to enumerate all the valid linear regions in the neighborhood. Iteratively executing the boundary propagation algorithm will ensure all the valid linear regions are enumerated. Finally, on each valid linear region, we utilize an nonlinear programming to verify the positive invariance condition, the non-negative optimal value implies successful verification, in the opposite, when a negative optimal value occurs in a valid linear region, the positive invariance is falsified. Numerical experiments demonstrated the validity of our algorithm for both verifying and falsifying.

We propose a verification algorithm based on our necessary and sufficient condition. The algorithm begins by identifying an initial valid linear region using the Interval Bound Propagation (IBP) technique. Once such a region is found, a boundary propagation algorithm is employed to enumerate all neighboring valid linear regions. By iteratively applying this propagation step, the algorithm ensures that all valid linear regions are covered.
For each valid linear region, we can translate the proposed condition into Satisfiability Modulo Theories (SMT) and optimization problems for verifying and falsifying the ReLU neural barrier certificate. Finally,
Numerical experiments demonstrate the validity and effectiveness of our method in both verification and falsification.
% \laode{
% %We compare our method with existing approaches. For falsification, to the best our knowledge, there is no work addressing it. For verification, since our method is both sound and complete, comparing to the incomplete works, either using heuristic approach or assuming the Heaviside step function for the derivative of the ReLU activation function, our method may have higher computational complexity. Whereas, in the experiments, our method demonstrates relative small gap to the existing methods.
% We compare our method with existing approaches. For falsification, to the best of our knowledge, no prior work has explicitly addressed this problem. For verification, our method is both sound and complete. In contrast, existing approaches are generally incomplete, either relying on heuristics or assuming the Heaviside step function as the derivative of the ReLU activation function. As a result, our approach may involve higher computational complexity. Nevertheless, experimental results show that it achieves a relatively small performance gap compared to existing verification methods.
% }
The main contributions of this paper is summarized as follows.
\begin{itemize}
    \item 
    % We propose a  necessary and sufficient condition for certifying the positive invariance of 0-superlevel sets of CPLFs under continuous-time systems. This condition doesn't rely the assumption that  the derivative of the ReLU activation is the Heaviside step function like \cite{zhao2022verifying,hu2024verification}. Besides, compared to existing condition proposed in \cite{zhang2023exact}, our condition  significantly reduces computational complexity for implementation.
    We propose a necessary and sufficient condition for certifying the positive invariance of 0-superlevel sets of CPLFs under continuous-time systems. Unlike \cite{zhao2022verifying,hu2024verification}, our condition does not rely on the assumption that the derivative of the ReLU activation is the Heaviside step function. Besides, compared to the condition in \cite{zhang2023exact}, it significantly reduces the computational complexity of implementation.
    \item  
    %The proposed condition can be encoded into SMT and optimization problems. Based on these, both verification and falsification of ReLU neural barrier certificates can be performed.  To the best of our knowledge, this is the first method capable of falsifying ReLU neural barrier certificates.
    The proposed condition can be encoded into SMT and optimization problems, enabling both verification and falsification of ReLU neural barrier certificates. To the best of our knowledge, this is the first method capable of falsifying such certificates.
    
    % \item 
    % We develop a boundary propagation algorithm that enables verification of the positive invariance condition over the entire space ($\mathbb{R}^n$). In contrast, existing methods \cite{zhang2023exact, hu2024verification, abate2021fossil} typically restrict verification to a specific bounded region.
\end{itemize}

\section{Preliminaries}
\label{sec:pre}
\subsection{Notation}

$\mathbb{R}^{n}$ represents  $n$-dimensional real space; $\mathbb{R}^{m\times n}$ represents space of $m \times n$ real matrices;
 $\mathbb{N}_{[m,n]}$ represents the non-negative integers in $[m,n]$. Vectors and matrices are denoted as boldface lowercase and uppercase respectively. For a vector $\bm{x}$, $\bm{x}(i)$ represents its $i$-th entry; $\Vert \bm{x} \Vert$  represents its norm. $ \bm{x} \cdot \bm{y} $ represents inner product of vectors $\bm{x}$ and $\bm{y}$. For a matrix $\bm{M}$,  $\bm{M}(i)$  represents its  $i$-th row vector; $\text{rows}(\bm{M})$ and $\text{cols}(\bm{M})$ denote the number of its rows and columns respectively; $\rank(\bm{M})$ represents the rank of $\bm{M}$; $[\bm{M}; \bm{x}^\top]$ represents adding  $\bm{x}$ to  the last row  of $\bm{M}$. $\bm{0}$ (or $\bm{1}$) represents the vector (or matrix) whose entries are all zero (or one) with appropriate dimensions in the context. For a set $\mathcal{S}$, its complement, interior, closure, boundary, cardinality and  power set   are denoted by $\mathcal{S}^c, \Int\mathcal{S}, \overline{\mathcal{S}}$, $\partial \mathcal{S}$, $\vert\mathcal{S}\vert$ and $2^{\mathcal{S}}$, respectively. For two sets $\mathcal{S}_1, \mathcal{S}_2$, 
$\mathcal{S}_1 \setminus \mathcal{S}_2$ represents the set $\{s:s\in \mathcal{S}_1 \wedge s\notin \mathcal{S}_2\}$. $B(\bm{x},\delta)= \{\bm{x}^\prime \in \mathbb{R}^n:\Vert \bm{x}^\prime -\bm{x} \Vert \leq \delta\}$ represents the closed $\delta$-ball around the vector $\bm{x}$.
% For a scalar-valued differentiable function $f:\mathbb{R}^n \rightarrow \mathbb{R}$, $\nabla f$ represents its gradient.

\subsection{ReLU Neural Network}
\label{subsec: ReLU}
We introduce notations to describe a neural network (NN) with $L$ hidden layers, where the $i$-th layer contains $M_i$ neurons. Let $\bm{x}\in \mathbb{R}^n$  denote the input to the network, $z_{i j}$ the output of the $j$-th neuron in the $i$-th layer, and $y$ the one-dimensional network output. 
We use $\bm{z}_i$ to represent the vector of neuron outputs in the  $i$-th layer. The outputs are computed as 
$$
z_{i j}=\left\{\begin{array}{ll}
\sigma\left(\bm{w}_{ij}^\top \bm{x}+b_{i j}\right), & i=1 \\
\sigma\left(\bm{w}_{ij}^\top \bm{z}_{i-1}+b_{i j}\right), & %i \in\{2, \ldots, L\} 
2\leq i \leq L
\end{array}~y=\bm{\omega}^\top \bm{z}_L+\phi\right.$$
where $\sigma: \mathbb{R} \rightarrow \mathbb{R}$ is the activation function. The input to $\sigma$ is the pre-activation value to the neuron:
for the $j$-th neuron in the first layer, this value is given by $\bm{w}_{1j}^\top \bm{x} + b_{1j}$; for the $j$-th neuron in the $i$-th hidden layer ($i > 1$), it is given by $\bm{w}_{ij}^\top \bm{z}_{i-1} + b_{ij}$. Here, $\bm{w}_{ij} \in \mathbb{R}^n$ when $i = 1$, and $\bm{w}_{ij} \in \mathbb{R}^{M_{i-1}}$ when $i > 1$.
% , and is given by $\bm{w}_{1 j}^\top \bm{x}+b_{1 j}$ for the $j$-th neuron at the first layer and $\bm{w}_{ij}^\top \bm{z}_{i-1}+b_{i j}$ for the $j$-th neuron at the $i$-th hidden layer. $\bm{w}_{ij}$ has dimensionality $n$ for $i=1$ and $M_{i-1}$ for $i>1$. 
%$\bm{W}_i$ is an $n \times M_i$ matrix.
Throughout this paper, we assume  $\sigma$ is the ReLU function $\sigma(z)=\max \{0, z\}$. 
The final output of the network is given by $y=\bm{\omega}^\top \bm{z}_L+\phi$, where $\bm{\omega} \in \mathbb{R}^{M_L}$ and $\phi \in \mathbb{R}$. A neuron is said to be \textit{activated} by an input $\bm{x}$ if its pre-activation value is non-negative, and \textit{inactivated} if it is non-positive. If the pre-activation value is exactly zero, the neuron is considered both activated and inactivated. 
%and \yiling{is referred to as \textit{unstable}}.

% The output of the network is given by $y=\bm{\omega}^\top \bm{z}_L+\phi$, where $\bm{\omega} \in \mathbb{R}^{M_L}$ and $\phi \in \mathbb{R}$. The $j$-th neuron at the $i$-th layer is called activated by a particular input $\bm{x}$ if its pre-activation input is nonnegative, inactivated if the pre-activation input is nonpositive. Note that if the pre-activation input is zero, the neuron can be seen as both activated and inactivated, which is called an unstable neuron.

An \textit{activation indicator} is an $L$-tuple $\mathscr{C} = \langle\bm{s}_1, \bm{s}_2, \ldots, \bm{s}_L\rangle$, where each $\bm{s}_i = (s_{i1}, s_{i2}, \ldots, s_{iM_i})^\top$ is a binary vector of length $M_i$. Each entry $s_{ij} \in \{0, 1\}$ indicates whether the $j$-th neuron in the $i$-th layer is inactivated ($0$) or activated ($1$).

% An activation indicator is an $L$-tuple $\mathscr{C} = \{ \bm{s}_1, \bm{s}_2, \cdots, \bm{s}_L\}$ which contains $L$ vectors, each vector $\bm{s}_i= (s_{i1}, s_{i2}, \cdots, s_{ij},\cdots,s_{iM_i})$ has $M_i$ entries with the value either 1 or 0, indicating that the $j$-th neuron of $i$-th layer is activated or not.

% For a given activation indicator $\mathscr{C}$, if $\mathscr{C}$ is activated by $\bm{x}$, then the pre-activation input to each neuron and the overall output of the network are affine in $\bm{x}$, with the affine mapping determined by $\mathscr{C}$ as follows. For the first layer, we define
For a given activation indicator $\mathscr{C}$, if an input $\bm{x}$ activates $\mathscr{C}$, the pre-activation values of all neurons, as well as the overall network output, are affine functions of $\bm{x}$. The corresponding affine mapping is determined by $\mathscr{C}$ as follows.
For the first layer, we define:
$$
\bar{\bm{w}}_{1 j}(\mathscr{C})=\left\{\begin{array}{ll}
\bm{w}_{1 j}, & s_{1j} = 1 \\
0, & s_{1j} = 0
\end{array} \quad \bar{b}_{1 j}(\mathscr{C})= \begin{cases}b_{1 j}, & s_{1j} = 1 \\
0, & s_{1j} = 0\end{cases}\right.
$$
So that the output of the $j$-th neuron in the first layer is given by $\bar{\bm{w}}_{1 j}(\mathscr{C})^\top x+\bar{b}_{1 j}(\mathscr{C})$. 
%We recursively define $\bar{\bm{w}}_{i j}(\mathscr{C})$ and $\bar{b}_{i j}(\mathscr{C})$ by letting $\overline{\bm{W}}_i(\mathscr{C})$ be a matrix with columns $\bar{\bm{w}}_{i 1}(\mathscr{C}), \ldots, \bar{\bm{w}}_{i M_i}(\mathscr{C})$ and
We recursively define $\bar{\bm{w}}_{ij}(\mathscr{C})$ and $\bar{b}_{ij}(\mathscr{C})$ for $i > 1$ by letting $\overline{\bm{W}}_i(\mathscr{C})$ be the matrix whose columns are $\bar{\bm{w}}_{i1}(\mathscr{C}), \ldots, \bar{\bm{w}}_{iM_i}(\mathscr{C})$, and setting:
\begin{equation*}
    \begin{aligned}
 &\bar{\bm{w}}_{i j}(\mathscr{C})=\left\{\begin{array}{ll}
\overline{\bm{W}}_{i-1}(\mathscr{C}) \bm{w}_{ij}, & s_{ij} = 1 \\
0, & s_{ij} = 0
\end{array}\right.   \\ &\bar{b}_{i j}(\mathscr{C})= \begin{cases}\bm{w}_{ij}^\top \bar{\bm{b}}_{i-1}(\mathscr{C})+b_{i j}, & s_{ij} = 1\\
0, & s_{ij} = 0\end{cases}       
    \end{aligned}
\end{equation*}
% where $\bar{\bm{b}}_i(\mathscr{C})$ is the vector with elements $\bar{b}_{i j}(\mathscr{C}), j=1, \ldots, M_i$.
where $\bar{\bm{b}}_i(\mathscr{C})$ is the vector of bias terms $\bar{b}_{ij}(\mathscr{C})$ for $j = 1, \ldots, M_i$.
We define $\bm{w}(\mathscr{C})=\overline{\bm{W}}_L(\mathscr{C}) \bm{\omega}$ and $b(\mathscr{C})=\bm{\omega}^\top \bar{\bm{b}}_L(\mathscr{C})+\phi$. Based on these notations, if the input $\bm{x}$ activates the activation indicator $\mathscr{C}$, the output of each neuron and the final network output are given by:
$z_{i j}=\bar{\bm{w}}_{i j}(\mathscr{C})^\top \bm{x}+\bar{b}_{i j}(\mathscr{C})$ and $y=\bm{w}(\mathscr{C})^\top \bm{x}+b(\mathscr{C})$.
Next, we characterize the activated region corresponding to a given activation indicator $\mathscr{C}$.
%when an input $\bm{x}$ activates the set $\mathscr{C}, z_{i j}=\bar{\bm{w}}_{i j}(\mathscr{C})^\top x+\bar{b}_{i j}(\mathscr{C})$ and $y=\bm{w}(\mathscr{C})^\top \bm{x}+b(\mathscr{C})$. Next we present the activated region  corresponding to an  activation indicator $\mathscr{C}$.

\begin{lemma}[\cite{zhang2023exact}]
% Let $\mathcal{X}(\mathscr{C})$ denote the set of inputs $\bm{x}$ that activate a particular set of neurons $\mathscr{C}$, and let $\overline{\bm{W}}_0(\mathscr{C})$ be equal to the identity matrix, and $\bar{\bm{b}}_0$ to be zero vector. Then
Let $\mathcal{X}(\mathscr{C})$ denote the set of inputs  that activate a particular set of neurons represented by the activation indicator $\mathscr{C}$. For notational consistency, we define $\overline{\bm{W}}_0(\mathscr{C})$ as the identity matrix and $\bar{\bm{b}}_0(\mathscr{C})$ as the zero vector. Then
     % Let $\mathcal{X}(\mathscr{C})$ denote the set of inputs $\bm{x}$ that activate a particular set of neurons $\mathscr{C}$, and let $\overline{\bm{W}}_0(\mathscr{C})$ be equal to the identity matrix, and $\bar{\bm{b}}_0$ to be zero vector. Then
\begin{equation*}
    \begin{gathered}
\mathcal{X}(\mathscr{C})=\bigcap_{i=1}^L\Bigg(\bigcap_{j=1}^{M_i}\{\bm{x}\in \mathbb{R}^n: \bm{w}_{ij}^\top\left(\overline{\bm{W}}_{i-1}(\mathscr{C})^\top \bm{x} \right.\\ \left. +\bar{\bm{b}}_{i-1}\right)+b_{i j} \geq 0, \bm{s}_i(j)=1\}\cap \bigcap_{j=1}^{M_i}\left\{\bm{x}\in \mathbb{R}^n:\right.\\ \left.\bm{w}_{ij}^\top\left(\overline{\bm{W}}_{i-1}(\mathscr{C})^\top \bm{x} +\bar{\bm{b}}_{i-1}\right)+b_{i j} \leq 0, \bm{s}_i(j)=0\right\}\Bigg).        
    \end{gathered}
\end{equation*}
\end{lemma}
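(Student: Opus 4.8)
The plan is to prove the set equality by establishing that an input $\bm{x}$ activates $\mathscr{C}$ if and only if it satisfies every half-space inequality on the right-hand side, and the natural engine for this is an induction on the layer index $i$. The reason an induction is needed, rather than a one-line substitution, is that each constraint is written using the affine coefficients $\overline{\bm{W}}_{i-1}(\mathscr{C})$ and $\bar{\bm{b}}_{i-1}(\mathscr{C})$, yet these affine forms reproduce the genuine layer output $\bm{z}_{i-1}$ only when $\bm{x}$ already induces an activation pattern consistent with $\mathscr{C}$ through layer $i-1$. A careless argument that simply replaces the true output $\bm{z}_{i-1}$ by its affine surrogate would therefore be circular, and the induction is exactly what untangles this dependency layer by layer.

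Concretely, I would prove the following invariant by induction on $i$: if $\bm{x}$ satisfies all constraints corresponding to layers $1$ through $i$, then for every $k\le i$ and every neuron $j$ the true output satisfies $z_{kj}=\bar{\bm{w}}_{kj}(\mathscr{C})^\top\bm{x}+\bar{b}_{kj}(\mathscr{C})$, equivalently $\bm{z}_k=\overline{\bm{W}}_k(\mathscr{C})^\top\bm{x}+\bar{\bm{b}}_k(\mathscr{C})$. For the base case $i=1$, the convention $\overline{\bm{W}}_0(\mathscr{C})=\bm{I}$ and $\bar{\bm{b}}_0(\mathscr{C})=\bm{0}$ makes the constraint read $\bm{w}_{1j}^\top\bm{x}+b_{1j}\ge 0$ when $s_{1j}=1$ and $\le 0$ when $s_{1j}=0$; since $z_{1j}=\sigma(\bm{w}_{1j}^\top\bm{x}+b_{1j})$, the sign condition forces $z_{1j}$ to equal either the pre-activation value or zero, which is precisely $\bar{\bm{w}}_{1j}(\mathscr{C})^\top\bm{x}+\bar{b}_{1j}(\mathscr{C})$ by definition.

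For the inductive step, assuming the invariant holds through layer $i-1$ gives $\bm{z}_{i-1}=\overline{\bm{W}}_{i-1}(\mathscr{C})^\top\bm{x}+\bar{\bm{b}}_{i-1}(\mathscr{C})$, so the true pre-activation of neuron $(i,j)$, namely $\bm{w}_{ij}^\top\bm{z}_{i-1}+b_{ij}$, coincides with the affine expression appearing inside the $i$-th constraint. That constraint then says exactly that this pre-activation is non-negative when $s_{ij}=1$ and non-positive when $s_{ij}=0$, i.e.\ that neuron $(i,j)$ is activated or inactivated in agreement with $\mathscr{C}$. Passing the pre-activation through $\sigma$ and invoking the recursive definitions of $\bar{\bm{w}}_{ij}(\mathscr{C})$ and $\bar{b}_{ij}(\mathscr{C})$---a short linear-algebra check that $\overline{\bm{W}}_{i-1}(\mathscr{C})\bm{w}_{ij}$ and $\bm{w}_{ij}^\top\bar{\bm{b}}_{i-1}(\mathscr{C})+b_{ij}$ reproduce the composed affine map---yields $z_{ij}=\bar{\bm{w}}_{ij}(\mathscr{C})^\top\bm{x}+\bar{b}_{ij}(\mathscr{C})$, closing the induction.

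Taking $i=L$ then settles both inclusions. Satisfying all constraints makes every neuron's activation state agree with $\mathscr{C}$, so $\bm{x}\in\mathcal{X}(\mathscr{C})$; conversely, if $\bm{x}$ activates $\mathscr{C}$ then the already-established affine representation of the outputs turns each sign condition back into the corresponding half-space inequality, so $\bm{x}$ satisfies the constraints. The only point demanding care is the boundary case where a pre-activation value is exactly zero, but the paper's convention that a neuron with zero pre-activation counts as both activated and inactivated makes the non-strict inequalities $\ge 0$ and $\le 0$ compatible with either value of $s_{ij}$, so no edge effects arise. I expect the bookkeeping in the recursive affine coefficients to be the only mildly technical part; the logical skeleton is otherwise a routine layer-by-layer unfolding.
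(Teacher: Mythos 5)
Your proof is correct. Note that the paper itself states this lemma as a citation from \cite{zhang2023exact} and provides no proof of it, so there is nothing internal to compare against; your layer-by-layer induction---establishing that satisfying the constraints through layer $i$ forces the true outputs $\bm{z}_k$ to coincide with the affine surrogates $\overline{\bm{W}}_k(\mathscr{C})^\top\bm{x}+\bar{\bm{b}}_k(\mathscr{C})$ for all $k\le i$, and then reading each sign condition off the true pre-activations---is the standard argument, and your identification of the potential circularity (the affine forms only represent the genuine outputs once consistency with $\mathscr{C}$ is already known for earlier layers) together with the observation that the non-strict inequalities absorb the zero-pre-activation boundary case covers the only points where such a proof could go wrong.
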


%It can be observed that the activated region $\mathcal{X}(\mathscr{C})$ is a polyhedron.
The activated region $\mathcal{X}(\mathscr{C})$ forms a polyhedron. Let  $\mathcal{I}$ denote the set of all possible activation indicators. With the above notations, the ReLU neural network can be expressed as a \textbf{continuous piecewise linear function (CPLF)}:
\begin{equation}
    y=\bm{w}(\mathscr{C})^\top \bm{x}+b(\mathscr{C}), \bm{x} \in \mathcal{X}(\mathscr{C}), \mathscr{C} \in \mathcal{I}.
\end{equation}
Note that  $\mathcal{I}$ is a finite set, as the number of activated regions is no more than $2^{\sum_{i=1}^{L} M_i}$ \cite{montufar2014number}.

\subsection{Positive Invariance and Barrier Certificate}
In this paper we consider the continuous-time system
\begin{equation}
\label{system}
    \dot{\bm{x}}=\bm{f}(\bm{x}),
\end{equation}
with $\bm{x} \in \mathbb{R}^n$ and $\bm{f}:\mathbb{R}^n\rightarrow \mathbb{R}^n$ locally Lipschitz. For any initial condition $\bm{x}_0\in \mathbb{R}^n$, there exists a maximal time interval of existence $I(\bm{x}_0)=[0,\tau_{\max})$ such that $\bm{\phi}_{\bm{x}_0}: I(\bm{x}_0)\rightarrow \mathbb{R}^n$ is the unique solution to system \eqref{system}, where $\bm{\phi}_{\bm{x}_0}(0)=\bm{x}_0$ and $\tau_{\max}$ is the explosion time with $\lim_{t\rightarrow  \tau_{\max}} \Vert \bm{\phi}_{\bm{x}_0}(t) \Vert=+\infty$.
%When $\tau_{max} = +\infty$, the solution $\bm{\phi}_{\bm{x}_0}(t)$ is forward complete.

\begin{definition}[Positive invariance]
    A set $\mathcal{C} \subseteq \mathbb{R}^n$ is positively invariant for system \eqref{system} if for all $\bm{x}_0 \in \mathcal{C}$ and all $t \in I(\bm{x}_0)$, the corresponding trajectory satisfies $\bm{\phi}_{\bm{x}_0}(t) \in \mathcal{C}$. 
\end{definition}

\begin{definition}[Barrier certificate]
\label{def: Barrier certificate}
  Given the system \eqref{system}, let the
initial set be $\mathcal{S}_I = \{\bm{x}\in\mathbb{R}^n: h_I(\bm{x}) > 0\}$ and the unsafe set be $\mathcal{S}_U = \{\bm{x}\in\mathbb{R}^n: h_U(\bm{x}) > 0\}$, 
where $h_I, h_U$ are continuous functions, and 
 both $\mathcal{S}_I, \mathcal{S}_U$ are nonempty and connected.
A barrier certificate for system \eqref{system} is a continuous function $h :\mathbb{R}^n \rightarrow \mathbb{R}$ whose 0-superlevel set $\mathcal{C} = \{\bm{x} \in \mathbb{R}^n: h(\bm{x}) \geq 0\}$ satisfies the following conditions:
\begin{enumerate}
    \item \textbf{Initial set condition}: $ \mathcal{S}_I \subset \mathcal{C}$.
    \item \textbf{Unsafe set condition}: $ \mathcal{S}_U \cap \mathcal{C} = \emptyset$.
    \item \textbf{Positively invariant condition}: $\mathcal{C}$ is positively invariant under system \eqref{system}.
\end{enumerate}
\end{definition}
Once a barrier certificate is found, it guarantees that all trajectories starting from the initial set will never enter the unsafe set.
A \textbf{ReLU neural barrier certificate} refers to a barrier certificate  represented by a neural network with ReLU activation functions.
The objective of this paper is to verify or falsify a given ReLU neural barrier certificate.

\begin{remark}
    % In fact, when  one of connected components of $\mathcal{C}$ satisfies these three properties, it's enough to  guarantee that all the trajectories starting from the initial set will never enter the unsafe set. In the following proposed verification algorithm in Sect. \ref{sec: Verification Algorithm}, the boundary propagation algorithm can find the whole boundary of  one of connected components of $\mathcal{C}$.
    % In fact, it is sufficient for just one connected component of  $\mathcal{C}$ to satisfy the three conditions in order to guarantee that all trajectories starting from the initial set will never enter the unsafe set. In the verification algorithm proposed in Sect. \ref{sec: Verification Algorithm}, the boundary propagation algorithm is designed to discover the entire boundary of such a connected component of $\mathcal{C}$.
    In fact, it suffices for a single connected component of $\mathcal{C}$ to satisfy the three conditions to ensure that all trajectories starting from the initial set will never enter the unsafe set. In the verification algorithm proposed in Sect.\ref{sec: Verification Algorithm}, the boundary propagation algorithm is employed to identify the complete boundary of such a connected component of $\mathcal{C}$.
\end{remark}

\section{Tangent Cones and Invariance Conditions}
\label{sec: Set Invariant Theories on Boundary}
Since a ReLU neural network is essentially a CPLF, this section investigates  the necessary and sufficient conditions for the positively invariant condition to the 0-superlevel set defined  by a CPLF.
We begin by reviewing some significant theorems about positively invariance and related foundational concepts.

\begin{definition}[Distance]
    Given a vector space $X$ with norm $\|\cdot\|$, the distance between two points $\bm{x}_1, \bm{x}_2 \in X$ is  
$d(\bm{x}_1, \bm{x}_2)=\Vert\bm{x}_1-\bm{x}_2\Vert$; the distance between a set $\mathcal{S} \subset X$ and a point $\bm{x} \in X$ is        
$d(\bm{x}, \mathcal{S})=\inf _{\bm{y}\in \mathcal{S}}\Vert\bm{x}-\bm{y}\Vert.$
\end{definition}

% The concept of the tangent cone holds significant importance in set-valued analysis \cite{aubin2009set} and control theory \cite{clarke2008nonsmooth}.

\begin{definition}[Tangent cones \cite{clarke2008nonsmooth,aubin2009set}]
\label{def: tangent cone}
    Let $\mathcal{S}$ be a closed subset of the Banach space $X$. 
    \begin{enumerate}
        \item The Bouligand  tangent cone or contingent cone to $\mathcal{S}$ at $\bm{x}$, denoted $T_\mathcal{S}^B(\bm{x})$, is defined as follows:
        \begin{equation*}
    T_\mathcal{S}^B(\bm{x})\triangleq\left\{\bm{v}\in X~\Big|~ \liminf _{t \rightarrow 0^{+}} \frac{d(\bm{x}+t \bm{v}, \mathcal{S}) }{t}=0\right\}.
\end{equation*}
\item The Clarke tangent cone or circatangent cone to $\mathcal{S}$ at $\bm{x}$, denoted $T_\mathcal{S}^C(\bm{x})$, is defined as follows:
        \begin{equation*}
    T_\mathcal{S}^C(\bm{x})\triangleq\left\{\bm{v}\in X~\bigg|~ \lim _{t \rightarrow 0^{+}, \bm{x}^\prime \stackrel{\mathcal{S}}{\rightarrow} \bm{x}} \frac{d(\bm{x}^\prime+t \bm{v}, \mathcal{S}) }{t}=0\right\},
\end{equation*}
where $\bm{x}^\prime \stackrel{\mathcal{S}}{\rightarrow} \bm{x}$ means the convergence is in $\mathcal{S}$.
    \end{enumerate}
\end{definition}

\begin{figure}[!htbp]
\begin{subfigure}{.24\textwidth}
  \includegraphics[width= \textwidth]{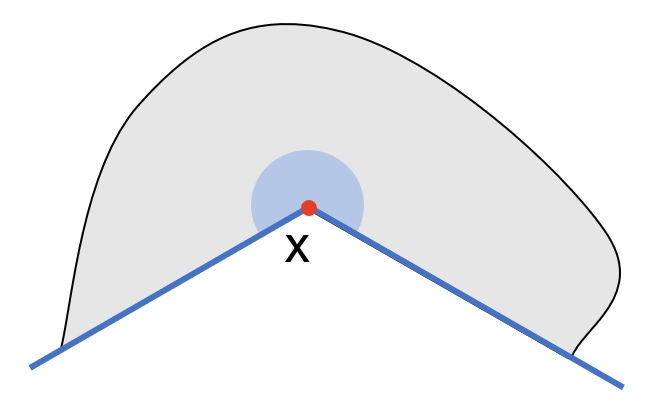}
  \subcaption*{(a) Bouligand tangent cone}
\end{subfigure}%
% \hfill
\begin{subfigure}{.24\textwidth}
  \includegraphics[width= \textwidth]{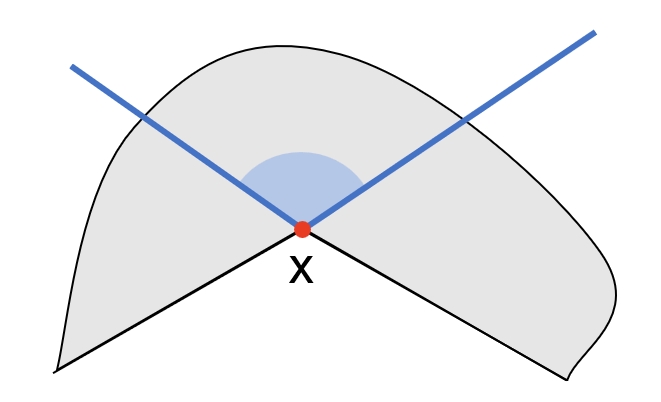}
  \subcaption*{(b) Clarke tangent cone}
\end{subfigure}%
\caption{Illustration of tangent cones at a nonsmooth boundary point}
\end{figure}

% \begin{figure}
%     \centering
%     \includegraphics[width=1\linewidth]{}
%     \caption{Illustration of tangent cone}
%     \label{fig:placeholder}
% \end{figure}

% Roughly speaking, tangent cones are the collection of the vectors that ``point inside''  the set $\mathcal{S}$ at $\bm{x}$. 
Both  tangent cones in Definition \ref{def: tangent cone} are closed cones. If $\bm{x} \in \Int{\mathcal{S}}$, then $T_\mathcal{S}^B(\bm{x})=T_\mathcal{S}^C(\bm{x}) = X$. If $\bm{x} \in \mathcal{S}^c$, then $T_\mathcal{S}^B(\bm{x})=T_\mathcal{S}^C(\bm{x}) = \emptyset$. Therefore, tangent cones are nontrivial only on the boundary $\partial \mathcal{S}$.

The following theorem is quoted from \cite{clarke2008nonsmooth} with some modifications for the context of ordinary differential equations instead of differential inclusions.  

\begin{theorem} [{\cite[Theorem 3.8 in Chapter 4]{clarke2008nonsmooth}}]
\label{Nagumo’s theorem}
     Consider the system \eqref{system} and let $\mathcal{S} \subset \mathbb{R}^n$ be a closed set, then the following assertions are equivalent:
     \begin{enumerate}[itemsep=3pt,topsep=2pt,parsep=0pt]
         \item [a.] $\mathcal{S}$ is positively invariant for the system \eqref{system};
         \item [b.] for all $\bm{x} \in \partial \mathcal{S}$, $\bm{f}(\bm{x}) \in T_{\mathcal{S}}^B(\bm{x})$;
         \item [c.] for all $\bm{x} \in \partial\mathcal{S}$, $\bm{f}(\bm{x}) \in T_{\mathcal{S}}^C(\bm{x})$.
    \end{enumerate}
\end{theorem}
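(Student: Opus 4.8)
The plan is to establish the cycle of implications (a)$\Rightarrow$(c)$\Rightarrow$(b)$\Rightarrow$(a). The middle implication (c)$\Rightarrow$(b) is immediate from the standard inclusion $T_\mathcal{S}^C(\bm{x}) \subseteq T_\mathcal{S}^B(\bm{x})$, which holds at every point of a closed set: any direction admissible in the circatangent (Clarke) sense is a fortiori admissible in the contingent (Bouligand) sense. Thus the theorem reduces to the two substantive implications (a)$\Rightarrow$(c) and (b)$\Rightarrow$(a), which I would treat separately. Throughout I would fix an initial point $\bm{x}_0$ and work on a compact subinterval of $I(\bm{x}_0)$ on which the trajectory stays in a compact set $\mathcal{K}$; since $\bm{f}$ is locally Lipschitz it admits a Lipschitz constant $L$ and a bound $\|\bm{f}\|\le K_0$ on a neighborhood of $\mathcal{K}$, and these two constants drive all estimates.

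For (a)$\Rightarrow$(c), I would use the sequential description of the Clarke tangent cone: $\bm{v}\in T_\mathcal{S}^C(\bm{x})$ iff for every sequence $\bm{x}_n\to\bm{x}$ with $\bm{x}_n\in\mathcal{S}$ and every $t_n\to 0^+$ there exist $\bm{v}_n\to\bm{v}$ with $\bm{x}_n + t_n\bm{v}_n\in\mathcal{S}$. Fix $\bm{x}\in\partial\mathcal{S}$ and such sequences. Positive invariance gives $\bm{\phi}_{\bm{x}_n}(t_n)\in\mathcal{S}$, so I set $\bm{v}_n = \bigl(\bm{\phi}_{\bm{x}_n}(t_n) - \bm{x}_n\bigr)/t_n$. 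Writing $\bm{\phi}_{\bm{x}_n}(t_n) - \bm{x}_n = \int_0^{t_n}\bm{f}(\bm{\phi}_{\bm{x}_n}(s))\,\dint s$ and subtracting $t_n\bm{f}(\bm{x}_n)$, the Lipschitz bound together with $\|\bm{\phi}_{\bm{x}_n}(s)-\bm{x}_n\|\le K_0 s$ yields a remainder of order $t_n^2$, hence $\bm{v}_n = \bm{f}(\bm{x}_n) + O(t_n)\to \bm{f}(\bm{x})$ by continuity of $\bm{f}$. Since $\bm{x}_n + t_n\bm{v}_n = \bm{\phi}_{\bm{x}_n}(t_n)\in\mathcal{S}$, this certifies $\bm{f}(\bm{x})\in T_\mathcal{S}^C(\bm{x})$.

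The hard direction, and the main obstacle, is (b)$\Rightarrow$(a). I would argue through the distance function $V(\bm{x}) = d(\bm{x},\mathcal{S})$, which is $1$-Lipschitz, and show it cannot grow away from zero along a trajectory. Write $\bm{y}(t)=\bm{\phi}_{\bm{x}_0}(t)$ with $\bm{x}_0\in\mathcal{S}$. At a time $t$ with $V(\bm{y}(t))>0$, choose a nearest point $\bm{p}\in\mathcal{S}$ to $\bm{y}(t)$; since $\bm{y}(t)\notin\mathcal{S}$, necessarily $\bm{p}\in\partial\mathcal{S}$, so hypothesis (b) applies at $\bm{p}$. The crux is the \emph{proximal normal inequality}: being a nearest point yields a constant $M$ with $(\bm{y}(t)-\bm{p})\cdot(\bm{s}-\bm{p})\le M\|\bm{s}-\bm{p}\|^2$ for all $\bm{s}\in\mathcal{S}$; feeding in the contingent approximations $\bm{s}=\bm{p}+t_k\bm{v}_k\in\mathcal{S}$ of $\bm{f}(\bm{p})\in T_\mathcal{S}^B(\bm{p})$ and letting $k\to\infty$ gives $(\bm{y}(t)-\bm{p})\cdot\bm{f}(\bm{p})\le 0$. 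A first-order expansion of $V$ along the flow then yields the upper Dini derivative bound
\begin{equation*}
\begin{aligned}
D^+V(\bm{y}(t)) &\le \frac{(\bm{y}(t)-\bm{p})\cdot\bm{f}(\bm{y}(t))}{\|\bm{y}(t)-\bm{p}\|} \\
&= \frac{(\bm{y}(t)-\bm{p})\cdot\bigl(\bm{f}(\bm{y}(t))-\bm{f}(\bm{p})\bigr)}{\|\bm{y}(t)-\bm{p}\|} + \frac{(\bm{y}(t)-\bm{p})\cdot\bm{f}(\bm{p})}{\|\bm{y}(t)-\bm{p}\|} \\
&\le L\,V(\bm{y}(t)),
\end{aligned}
\end{equation*}
using Cauchy--Schwarz and $\|\bm{f}(\bm{y}(t))-\bm{f}(\bm{p})\|\le L\|\bm{y}(t)-\bm{p}\|$ on the first term and the proximal normal inequality to kill the second. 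As $V(\bm{y}(0))=0$, a Gronwall/comparison argument forces $V(\bm{y}(t))=0$ throughout, and closedness of $\mathcal{S}$ gives $\bm{y}(t)\in\mathcal{S}$, i.e.\ positive invariance.

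I expect the delicacy to concentrate in (b)$\Rightarrow$(a): justifying the proximal normal inequality and pairing it with the merely \emph{contingent} tangency of $\bm{f}(\bm{p})$ (rather than Clarke tangency), handling possible non-uniqueness of the nearest point $\bm{p}$ (the Dini estimate needs only one nearest point per time), and running the comparison argument for the Lipschitz, almost-everywhere-differentiable map $t\mapsto V(\bm{y}(t))$ instead of a smooth one. The merely local Lipschitzness of $\bm{f}$ is accommodated by restricting to a compact time interval, which also supplies the uniform remainder estimate invoked in (a)$\Rightarrow$(c).
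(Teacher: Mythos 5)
This statement is not proved in the paper: it is imported verbatim (modulo the specialization from differential inclusions to the single-valued ODE \eqref{system}) from Clarke--Ledyaev--Stern--Wolenski, and the paper offers no argument of its own to compare against. Your proposal is a correct, self-contained rendition of the standard proof of that cited result. The cycle (a)$\Rightarrow$(c)$\Rightarrow$(b)$\Rightarrow$(a) is the right decomposition: (c)$\Rightarrow$(b) is the inclusion $T_{\mathcal{S}}^C \subseteq T_{\mathcal{S}}^B$; your (a)$\Rightarrow$(c) via the sequential characterization of the Clarke cone with $\bm{v}_n = (\bm{\phi}_{\bm{x}_n}(t_n)-\bm{x}_n)/t_n$ and the integral form of the ODE is exactly how one exploits the local Lipschitz bound; and your (b)$\Rightarrow$(a) via the proximal normal inequality $\langle \bm{y}(t)-\bm{p}, \bm{s}-\bm{p}\rangle \le \tfrac{1}{2}\|\bm{s}-\bm{p}\|^2$, fed with contingent approximants $\bm{s}=\bm{p}+t_k\bm{v}_k$, is the same distance-function monotonicity mechanism the reference uses. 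The one step you flag but do not execute deserves the care you anticipate: the comparison argument must be run for $g(t)=d(\bm{\phi}_{\bm{x}_0}(t),\mathcal{S})$, which is only locally Lipschitz, so one passes to a.e.\ differentiability, notes that $g'(t)=0$ at a.e.\ point where $g(t)=0$ (since $g\ge 0$), applies your Dini bound where $g(t)>0$, and integrates; together with the localization needed so that the nearest point $\bm{p}$ stays in the neighborhood where the Lipschitz constant $L$ is valid, this closes the argument. No genuine gap.
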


 For a CPLF $h:\mathbb{R}^n \rightarrow \mathbb{R}$, its expression can be written as 
    \begin{equation}
    \label{eq:linear function expression}
        h(\bm{x}) = \bm{w}_i^\top \bm{x} + b_i, \bm{x} \in \mathcal{X}_i, i =1, \cdots, N,
    \end{equation}
    where $\mathcal{X}_i =\{\bm{x}: \bm{A}_i \bm{x} \leq \bm{d}_i\}$ is a polyhedron (the background of polyhedron is provided in the appendix)  of full dimension, i.e., $\dim(\mathcal{X}_i) = n$, referred to as a \textit{linear region}. 
   The collection \( \{ \mathcal{X}_i \}_{i=1}^N \) partitions the entire input space \( \mathbb{R}^n \), i.e., \( \bigcup_{i=1}^N \mathcal{X}_i = \mathbb{R}^n \), and for all \( i \neq j \), \( \dim(\mathcal{X}_i \cap \mathcal{X}_j) \leq n - 1 \).
We say \( \mathcal{X}_i \) and \( \mathcal{X}_j \) are adjacent if \( \dim(\mathcal{X}_i \cap \mathcal{X}_j) = n - 1 \). In such a case, there exists a hyperplane \( \mathcal{H} \) such that \( \mathcal{X}_i \cap \mathcal{X}_j \subseteq \mathcal{H} \), and \( \mathcal{H} \cap \mathcal{X}_i \) and \( \mathcal{H} \cap \mathcal{X}_j \) are facets of \( \mathcal{X}_i \) and \( \mathcal{X}_j \), respectively. 
% \yiling{We refer readers to appendix for details of polyhedron. Intuitively, for a unit cube in $\mathbb{R}^3$, facets of it are the six unit squares.}

    % $N$ is the number of polyhedrons which partition the whole input space $\mathbb{R}^n$, i.e., $\bigcup_{i=1}^N \mathcal{X}_i = \mathbb{R}^n$, $\forall i \neq j, \dim(\mathcal{X}_i \cap \mathcal{X}_j) \leq n-1$. We say $\mathcal{X}_i$ and $\mathcal{X}_j$ are adjacent if $\dim (\mathcal{X}_i \cap \mathcal{X}_j) =n-1$, when $\mathcal{X}_i$ and $\mathcal{X}_j$ are adjacent, there exists a hyperplane $\mathcal{H}$ such that $\mathcal{X}_i \cap \mathcal{X}_j \subset \mathcal{H}$, $\mathcal{H} \cap \mathcal{X}_i$ and $\mathcal{H} \cap \mathcal{X}_j$ are their facets respectively. 
    
    %Note that there doesn't exist another hyperplane $\mathcal{H}^\prime \neq \mathcal{H}$ such that $\mathcal{H}^\prime \cap \mathcal{X}_i$ and $\mathcal{H}^\prime \cap \mathcal{X}_j$ are their facets respectively since this would be contrary to the fact that $\mathcal{X}_i$ and $\mathcal{X}_j$ are convex.
    
In this paper, we adopt  the following assumption, which is also emphasized in \cite{ames2016control,ames2019control}.
    \begin{assumption}
    \label{ass1}
For a candidate barrier certificate $h$, its   0-superlevel set $\mathcal{C} = \{\bm{x} \in \mathbb{R}^n: h(\bm{x}) \geq 0\}$  satisfies:
        $\partial \mathcal{C} = \{\bm{x}\in \mathbb{R}^n: h(\bm{x}) = 0\}$, 
        $\Int \mathcal{C} = \{\bm{x}\in \mathbb{R}^n: h(\bm{x}) > 0\}$;
         $\mathcal{C}$ contains interior points and is a regular closed set, i.e., $\Int \mathcal{C} \neq \emptyset, \overline{\Int{\mathcal{C}}}=\mathcal{C}$.
\end{assumption}

% \begin{assumption}
%     All the optimization programmings 
% \end{assumption}

   % \begin{assumption}
   %  $\partial \mathcal{C}$ is a connected set.\end{assumption}

   For ease of presentation, 
    we define valid linear regions as follows.
    
    \begin{definition}[Valid linear region]
    \label{def:Valid linear region}
        % A linear region $\mathcal{X}_i$ is \textbf{valid} if it's $n$-dimensional and the associated  hyperplane $\mathcal{H}_i =\{\bm{x}\in \mathbb{R}^n:\bm{w}_i^\top \bm{x}+b_i=0\}$ either has the intersection with the interior of $\mathcal{X}_i$ ($\mathcal{H}_i \cap \Int \mathcal{X}_i \neq \emptyset$) or  $\mathcal{H}_i \cap \mathcal{X}_i$ is a facet of $\mathcal{X}_i$. 
        % A linear region $\mathcal{X}_i$ is \textbf{valid} if it is $n$-dimensional and for the associated  hyperplane $\mathcal{H}_i =\{\bm{x}\in \mathbb{R}^n:\bm{w}_i^\top \bm{x}+b_i=0\}$,  $\mathcal{H}_i \cap \Int \mathcal{X}_i \neq \emptyset$ or  $\mathcal{H}_i \cap \mathcal{X}_i$ is a facet of $\mathcal{X}_i$. 
        A linear region $\mathcal{X}_i$ is said to be \textbf{valid} if it is $n$-dimensional and, for the associated hyperplane $\mathcal{H}_i = \{\bm{x} \in \mathbb{R}^n : \bm{w}_i^\top \bm{x} + b_i = 0\}$, either $\mathcal{H}_i \cap \Int \mathcal{X}_i \neq \emptyset$ or $\mathcal{H}_i \cap \mathcal{X}_i$ is a facet of $\mathcal{X}_i$.
    \end{definition}

% The following two propositions characterize the tangent cones to $\mathcal{C}$: the Bouligand tangent cone and the adjacent tangent cone are described in Proposition~\ref{lemma: linear function b cone}, while the Clarke tangent cone is presented in Proposition~\ref{lemma: linear function condition}.
The following two propositions respectively characterize the Bouligand tangent cone and the Clarke tangent cone to the set $\mathcal{C}$. 
\begin{proposition}
\label{lemma: linear function b cone}
 % Given a CPLF $h$ with the expression \eqref{eq:linear function expression}, its 0-superlevel set $\mathcal{C} =\{\bm{z} \in \mathbb{R}^n: h(\bm{z}) \geq 0\}$, under the Assumption \ref{ass1}, for $\bm{x} \in \partial \mathcal{C}$, the following conclusions hold:
   Given a CPLF $h$ as defined in~\eqref{eq:linear function expression}, consider its 0-superlevel set $\mathcal{C} = \{\bm{z} \in \mathbb{R}^n : h(\bm{z}) \geq 0\}$. Under Assumption~\ref{ass1}, the following assertions hold for any $\bm{x} \in \partial \mathcal{C}$:
 
    %Bouligand tangent cone at $\bm{x} \in \partial \mathcal{C}$ satisfies
\begin{itemize}
    \item[1.] if $\bm{x} \in \Int{\mathcal{X}_i},  i = 1,\cdots,  N$, then
    \begin{equation}
    \begin{aligned}
    T_\mathcal{C}^B(\bm{x}) = 
            \{\bm{v}\in \mathbb{R}^n: \bm{w}_i^\top \bm{v} \geq 0\};
    \end{aligned}
        \end{equation}

        \item[2.] if $\bm{x} \in \bigcap_{k=1}^{m} {\mathcal{X}_{i_k}} $ and $\bm{x} \not\in \overline{\mathbb{R}^n \setminus\bigcup_{k=1}^{m} \mathcal{X}_{i_k}}, i_1,\cdots, i_m \in \{1,2,\cdots, N\}$, then 
   \begin{equation}
   \label{subeq: insection b cone}
       \begin{aligned}
        &T_\mathcal{C}^B(\bm{x}) =\bigcup_{k=1}^m\{\bm{v}\in \mathbb{R}^n:
            \bigwedge_{j \in E} \bm{A}_{i_k}(j) \bm{v} \leq 0 \wedge \bm{w}_{i_k}^\top \bm{v} \geq 0\},
       \end{aligned}
   \end{equation} 
   \begin{equation}
   \label{subeq: insection b cone subset} 
              T_\mathcal{C}^B(\bm{x}) \supset  \{\bm{v}\in \mathbb{R}^n: \bigwedge_{k\in I}
            \bm{w}_{i_k}^\top \bm{v} \geq 0\}, 
   \end{equation}
    %     \begin{subequations}
    % \begin{align} 
    %         T_\mathcal{C}^B(\bm{x}) =& T_\mathcal{C}^b(\bm{x})=  \bigcup_{k=1}^m\{\bm{v}\in \mathbb{R}^n:
    %         \bigwedge_{j \in E} \bm{A}_{i_k}(j) \bm{v} \leq 0 \wedge \bm{w}_{i_k}^\top \bm{v} \geq 0\} \label{subeq: insection b cone},\\
    %         T_\mathcal{C}^B(\bm{x}) \supset & \{\bm{v}\in \mathbb{R}^n: \bigwedge_{k\in I}
    %         \bm{w}_{i_k}^\top \bm{v} \geq 0\} \label{subeq: insection b cone subset},
    % \end{align}
    %     \end{subequations} 
        where the set $E,I$ are defined as $E \triangleq\{j \in \{1,\cdots, \rows(\bm{A})\}: \bm{A}_{i_k}(j) \bm{x} = \bm{d}_{i_k}(j)\}, I \triangleq\{k \in \{1,\cdots,m\}: \mathcal{X}_{i_k} \text{ is a valid linear region}\}$.
        \item[3.] $\partial \mathcal{C} \subset \bigcup_{l \in J} \mathcal{X}_l$, where $J \triangleq\{l \in \{1,\cdots,N\}: \mathcal{X}_{l} \text{ is a valid linear region}\}$.
\end{itemize}

\end{proposition}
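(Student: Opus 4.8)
The plan is to reduce everything to the local picture near a boundary point $\bm{x}\in\partial\mathcal{C}$ and to exploit one structural fact: since the regions $\{\mathcal{X}_i\}$ are finitely many and tile $\mathbb{R}^n$, for small enough $\delta$ the ball $B(\bm{x},\delta)$ is covered exactly by the regions containing $\bm{x}$, every other region being at positive distance. Writing $\mathcal{X}_{i_1},\dots,\mathcal{X}_{i_m}$ for the regions through $\bm{x}$, continuity of $h$ together with $h(\bm{x})=0$ forces $\bm{w}_{i_k}^\top\bm{x}+b_{i_k}=0$, i.e.\ $\bm{x}\in\mathcal{H}_{i_k}$ for every $k$. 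Hence each such region is of exactly one type: $\mathcal{H}_{i_k}$ cuts $\Int\mathcal{X}_{i_k}$ (valid, two–sided); $\mathcal{H}_{i_k}\cap\mathcal{X}_{i_k}$ is a facet (valid, one–sided); or the contact has dimension $\le n-2$, in which case $\mathcal{H}_{i_k}$ is a supporting hyperplane and $\mathcal{X}_{i_k}$ lies locally in $\{h\ge 0\}$ (type $P$) or in $\{h\le 0\}$ (type $N$) — the \emph{invalid} regions. For Part~1 this is barely needed: if $\bm{x}\in\Int\mathcal{X}_i$ then $h$ is affine on a neighbourhood, so $h(\bm{x}+t\bm{v})=t\,\bm{w}_i^\top\bm{v}$; this is $\ge 0$ (point in $\mathcal{C}$, distance $0$) exactly when $\bm{w}_i^\top\bm{v}\ge 0$, and otherwise the point sits in $\{h<0\}$ at distance of order $t$ from $\mathcal{H}_i$, giving $\liminf_{t\to0^+}d/t>0$; thus $T_{\mathcal{C}}^B(\bm{x})=\{\bm{v}:\bm{w}_i^\top\bm{v}\ge 0\}$. (Assumption~\ref{ass1} rules out $\bm{w}_i=\bm{0}$, which would place $\bm{x}$ in $\Int\mathcal{C}$.)

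For \eqref{subeq: insection b cone} I would use two standard facts. First, $d(\cdot,A\cup B)=\min\{d(\cdot,A),d(\cdot,B)\}$, so for finitely many sets the contingent cone of a union is the union of the contingent cones; locally $\mathcal{C}=\bigcup_k\big(\mathcal{X}_{i_k}\cap\{\bm{w}_{i_k}^\top\bm{x}+b_{i_k}\ge0\}\big)$. Second, the contingent cone of a polyhedron at a point is the cone of directions satisfying the \emph{active} constraints; here these are the facet rows indexed by $E$ together with the now-active halfspace $\bm{w}_{i_k}^\top\bm{x}+b_{i_k}\ge0$. Combining gives $T_{\mathcal{C}}^B(\bm{x})=\bigcup_k C_k$ with $C_k=\{\bm{v}:\bigwedge_{j\in E}\bm{A}_{i_k}(j)\bm{v}\le0\wedge\bm{w}_{i_k}^\top\bm{v}\ge0\}$, which is \eqref{subeq: insection b cone}. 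The auxiliary lemma I want to isolate and reuse is: \emph{if two facet-adjacent regions are locally on opposite sides of $\{h=0\}$, then both are valid} — on their shared facet $F$ (dimension $n-1$) continuity forces $h\equiv 0$, so $F\subseteq\mathcal{H}$ of each region. Part~3 then follows quickly: suppose no region through $\bm{x}$ is valid, so every such region is type $P$ or $N$. If all are $P$ then $B(\bm{x},\delta)\subseteq\{h\ge0\}=\mathcal{C}$ and $\bm{x}\in\Int\mathcal{C}$; if all are $N$ then $\Int\mathcal{C}\cap B(\bm{x},\delta)=\emptyset$ and $\bm{x}\notin\overline{\Int\mathcal{C}}=\mathcal{C}$; both contradict $\bm{x}\in\partial\mathcal{C}$ under Assumption~\ref{ass1}. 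In the mixed case, connectedness of the region-adjacency graph of the tiling of $B(\bm{x},\delta)$ yields a facet-adjacent $P$–$N$ pair, which by the auxiliary lemma is valid — again a contradiction. Hence $\partial\mathcal{C}\subseteq\bigcup_{l\in J}\mathcal{X}_l$.

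The real work is the inclusion \eqref{subeq: insection b cone subset}, and this is the step I expect to be the main obstacle, because the valid region that ``blocks'' a direction need not be one whose tangent cone contains that direction (the blocking happens through the sign of $\bm{w}_{i_k}^\top\bm{v}$, not through $\bm{v}\in T_{\mathcal{X}_{i_k}}$). Writing $K=\{\bm{v}:\bm{w}_{i_k}^\top\bm{v}\ge0\ \forall k\in I\}$, I would first dispatch the easy directions: since $\bigcup_k T_{\mathcal{X}_{i_k}}(\bm{x})=\mathbb{R}^n$, any $\bm{v}\in K$ lies in some $T_{\mathcal{X}_{i_{k_0}}}$, and if $k_0$ is valid, type $P$, or has $\bm{w}_{i_{k_0}}^\top\bm{v}=0$, then $\bm{v}\in C_{k_0}$ and we are done. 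The only obstruction is a direction captured solely by an invalid type-$N$ region with $\bm{w}_{i_{k_0}}^\top\bm{v}<0$. To rule this out I would prove $\Int K\subseteq T_{\mathcal{C}}^B(\bm{x})$ and take closures. Fix $\bm{v}\in\Int K$, so $\bm{w}_{i_k}^\top\bm{v}>0$ strictly for every valid $k$, and suppose for contradiction that $\bm{v}$ is \emph{strictly outward}, i.e.\ $h(\bm{x}+t\bm{v})<0$ for small $t>0$. Using regularity (Assumption~\ref{ass1}) pick $\bm{y}_\varepsilon$ with $h(\bm{y}_\varepsilon)>0$ and $\|\bm{y}_\varepsilon-\bm{x}\|\le\varepsilon^2$, set $\bm{p}_\varepsilon=\bm{x}+\varepsilon\bm{v}$ (where $h<0$), and follow the chord $[\bm{p}_\varepsilon,\bm{y}_\varepsilon]$, whose direction is $\bm{d}_\varepsilon=-\varepsilon\bm{v}+O(\varepsilon^2)$. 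Along it $h$ increases through $0$ at some crossing $\bm{r}_\varepsilon\to\bm{x}$; the region carrying the $\{h>0\}$ side there is valid (it is two–sided, or by the auxiliary lemma meets an opposite–sign neighbour across a facet), call it $k_1$, and there $\bm{w}_{k_1}^\top\bm{d}_\varepsilon>0$. Substituting $\bm{d}_\varepsilon$ gives $\bm{w}_{k_1}^\top\bm{v}<\|\bm{w}_{k_1}\|\,\varepsilon$; passing to a subsequence on which the finitely many valid region $k_1$ is constant and letting $\varepsilon\to0$ yields $\bm{w}_{k_1}^\top\bm{v}\le0$, contradicting $\bm{v}\in\Int K$.

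Therefore $\bm{v}$ is not strictly outward; since every non-outward direction lies in some $C_k$ (via the covering $\bigcup_k T_{\mathcal{X}_{i_k}}=\mathbb{R}^n$ together with \eqref{subeq: insection b cone}), we get $\bm{v}\in T_{\mathcal{C}}^B(\bm{x})$, and closing up gives \eqref{subeq: insection b cone subset}. The delicate points I would be most careful about are the strictness at the crossing (handled by working inside $\Int K$ rather than on $K$), the degenerate case where $K$ is not full-dimensional (treated by a limiting argument along a full-dimensional perturbation, using closedness of $T_{\mathcal{C}}^B(\bm{x})$), and the verification that the crossing region is genuinely valid rather than merely one–sided, which is exactly where the opposite–sign auxiliary lemma is invoked.
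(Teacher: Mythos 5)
Your treatment of Part 1, of the equality \eqref{subeq: insection b cone}, and of Part 3 is sound; Part 3 in particular goes by a different (and arguably cleaner) route than the paper, which instead extracts the local description $\partial\mathcal{C}\cap B(\bm{x},\epsilon)=\bigcup_{k\in I}(\mathcal{H}_{i_k}\cap\tilde{\mathcal{X}}_{i_k})$ as a by-product of its proof of Part 2 and then covers $\partial\mathcal{C}$ by such balls. The equality \eqref{subeq: insection b cone} via locality of the contingent cone, $d(\cdot,A\cup B)=\min\{d(\cdot,A),d(\cdot,B)\}$, and the active-constraint description of the cone of a polyhedron is exactly the paper's Lemmas on $T^B_{\mathcal{S}\cap B(\bm{x},\epsilon)}$ and on unions of polyhedral cones.

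The genuine gaps are in your proof of the inclusion \eqref{subeq: insection b cone subset}, which you rightly identify as the crux. First, your contradiction hinges on the claim that the region $k_1$ carrying the $\{h>0\}$ side at the crossing $\bm{r}_\varepsilon$ is \emph{valid}; but $k_1$ may be an invalid region of your type $P$ (one lying entirely in $\{h\ge 0\}$ with contact of dimension at most $n-2$), in which case $\bm{w}_{k_1}^\top\bm{v}\le 0$ contradicts nothing, since $k_1\notin I$. Your parenthetical appeal to the opposite-sign auxiliary lemma only works if the chord crosses $\{h=0\}$ transversally through the relative interior of a shared facet with a region on which $h<0$; the chord can instead meet a face of dimension at most $n-2$, and you supply neither the general-position perturbation of $\bm{y}_\varepsilon$ needed to exclude this nor the careful choice of crossing point guaranteeing $h<0$ strictly on the incoming side. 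Second, and more seriously, the degenerate case $\Int K=\emptyset$ is not handled: your whole argument proves membership only for directions with $\bm{w}_{i_k}^\top\bm{v}>0$ strictly for all $k\in I$, and when the valid normals positively span a nontrivial subspace (a reentrant-corner configuration that Assumption~\ref{ass1} does not exclude) there are no such directions at all, so there is nothing to take the closure of; ``a limiting argument along a full-dimensional perturbation'' cannot produce approximants from the empty set. The paper avoids both problems by a purely set-theoretic argument: it shows that the cone $\mathcal{P}=\{\bm{x}+\bm{z}:\bigwedge_{k\in I}\bm{w}_{i_k}^\top\bm{z}\ge 0\}$ is literally contained in the local positive part $\tilde{\mathcal{X}}^+$ (intersecting $\tilde{\mathcal{X}}^+$ with these halfspaces deletes every boundary constraint of $\tilde{\mathcal{X}}^+$, so $\mathcal{P}\cap\tilde{\mathcal{X}}^+=\mathcal{P}$), whence $K=T_{\mathcal{P}}^B(\bm{x})\subset T_{\mathcal{C}}^B(\bm{x})$ by monotonicity of contingent cones, with no nondegeneracy or transversality needed. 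To complete your route you would need to either reprove such a containment or repair both the crossing-region validity argument and the empty-interior case.
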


\begin{proposition}
\label{lemma: linear function condition}
  % Given a CPLF $h$ with the expression \eqref{eq:linear function expression}, its 0-superlevel set $\mathcal{C} =\{\bm{z} \in \mathbb{R}^n: h(\bm{z}) \geq 0\}$, under the Assumption \ref{ass1}, the Clark tangent cone at $\bm{x} \in \partial \mathcal{C}$ is
Given a CPLF $h$ as defined in~\eqref{eq:linear function expression}, let $\mathcal{C} = \{\bm{z} \in \mathbb{R}^n : h(\bm{z}) \geq 0\}$ denote its 0-superlevel set. Under Assumption~\ref{ass1}, the Clarke tangent cone to $\mathcal{C}$ at any point $\bm{x} \in \partial \mathcal{C}$ is given by
    \begin{equation}
    \label{eq: linear tangent cone}
        T_\mathcal{C}^C(\bm{x}) = \begin{cases}
            \{\bm{v}\in \mathbb{R}^n: \bm{w}_i^\top \bm{v} \geq 0\},~\text{if} \\~~\bm{x} \in \Int{\mathcal{X}_i},  i = 1,\cdots, N \\
            \{\bm{v}\in \mathbb{R}^n: \bigwedge_{k\in I}
            \bm{w}_{i_k}^\top \bm{v} \geq 0\},~\text{if}
     \\~~\bm{x} \in \bigcap_{k=1}^{m} {\mathcal{X}_{i_k}} \wedge \bm{x} \not\in \overline{\mathbb{R}^n \setminus\bigcup_{k=1}^{m} \mathcal{X}_{i_k}}
        \end{cases}
        \end{equation}
        where $i_1,\cdots, i_m \in \{1,2,\cdots, N\}, I =\{k \in \{1,\cdots,m\}: \mathcal{X}_{i_k} \text{is a valid linear region}\}$.
\end{proposition}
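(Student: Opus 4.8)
The plan is to obtain the Clarke cone from the Bouligand cone already computed in Proposition~\ref{lemma: linear function b cone}, using the classical identity of nonsmooth analysis that the Clarke (circatangent) cone is the Painlev\'{e}--Kuratowski lower limit of the contingent cones at neighbouring points, $T_\mathcal{C}^C(\bm{x}) = \liminf_{\bm{x}'\to\bm{x},\,\bm{x}'\in\mathcal{C}} T_\mathcal{C}^B(\bm{x}')$ \cite{clarke2008nonsmooth,aubin2009set}. Writing $K \triangleq \{\bm{v}\in\mathbb{R}^n : \bigwedge_{k\in I}\bm{w}_{i_k}^\top\bm{v}\geq 0\}$ for the candidate cone, it suffices to prove the two inclusions $K\subseteq T_\mathcal{C}^C(\bm{x})$ and $T_\mathcal{C}^C(\bm{x})\subseteq K$. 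First I would fix a neighbourhood $U$ of $\bm{x}$ so small that every linear region meeting $U$ contains $\bm{x}$; this is possible because the finitely many regions not containing $\bm{x}$ are closed and hence lie at a positive distance from $\bm{x}$, and it guarantees both that $U\subseteq\bigcup_{k}\mathcal{X}_{i_k}$ and that the only regions relevant near $\bm{x}$ are $\mathcal{X}_{i_1},\dots,\mathcal{X}_{i_m}$. By continuity of $h$ and $h(\bm{x})=0$ we also have $\bm{w}_{i_k}^\top\bm{x}+b_{i_k}=0$, i.e. $\bm{x}\in\mathcal{H}_{i_k}$, for every $k$.

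For $K\subseteq T_\mathcal{C}^C(\bm{x})$ I would show that every $\bm{v}\in K$ already lies in $T_\mathcal{C}^B(\bm{x}')$ for all $\bm{x}'\in\mathcal{C}\cap U$. If $\bm{x}'\in\Int\mathcal{C}$ this is trivial since $T_\mathcal{C}^B(\bm{x}')=\mathbb{R}^n$. If $\bm{x}'\in\partial\mathcal{C}\cap U$, then by the choice of $U$ the valid regions containing $\bm{x}'$ are indexed by some $I'\subseteq\{i_k:k\in I\}$, so the subset bound Eq.~\eqref{subeq: insection b cone subset} applied at $\bm{x}'$ gives $T_\mathcal{C}^B(\bm{x}')\supseteq\{\bm{u}:\bigwedge_{l\in I'}\bm{w}_l^\top\bm{u}\geq 0\}\supseteq K\ni\bm{v}$, the middle inclusion holding because $I'$ imposes no more constraints than $I$. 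Taking the constant sequence $\bm{v}_n\equiv\bm{v}$ then certifies $\bm{v}$ as an element of the lower limit, i.e. $\bm{v}\in T_\mathcal{C}^C(\bm{x})$.

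For the reverse inclusion $T_\mathcal{C}^C(\bm{x})\subseteq K$, I would, for each valid index $k\in I$, exhibit a sequence of boundary points $\bm{x}_n\to\bm{x}$ in whose neighbourhoods $\mathcal{C}$ coincides with the single half-space $\{\bm{z}:\bm{w}_{i_k}^\top\bm{z}+b_{i_k}\geq 0\}$, so that $T_\mathcal{C}^B(\bm{x}_n)=\{\bm{u}:\bm{w}_{i_k}^\top\bm{u}\geq 0\}$. Because $\bm{v}\in T_\mathcal{C}^C(\bm{x})$ forces approximating vectors $\bm{v}_n\in T_\mathcal{C}^B(\bm{x}_n)$ with $\bm{v}_n\to\bm{v}$, and this half-space is closed, I obtain $\bm{w}_{i_k}^\top\bm{v}\geq 0$; ranging over $k\in I$ gives $\bm{v}\in K$. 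When $\mathcal{H}_{i_k}$ meets $\Int\mathcal{X}_{i_k}$ (the first type in Definition~\ref{def:Valid linear region}), I pick $\bm{x}_n\in\mathcal{H}_{i_k}\cap\Int\mathcal{X}_{i_k}$; since the relative interior of the $(n-1)$-dimensional slice $\mathcal{H}_{i_k}\cap\mathcal{X}_{i_k}$ is exactly $\mathcal{H}_{i_k}\cap\Int\mathcal{X}_{i_k}$ and $\bm{x}$ lies in its closure, such $\bm{x}_n\to\bm{x}$ exist, and near them the single affine piece $\bm{w}_{i_k}^\top\bm{z}+b_{i_k}$ governs $h$, so $\mathcal{C}$ is locally the desired half-space.

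The delicate type, and the main obstacle, is when $\mathcal{H}_{i_k}\cap\mathcal{X}_{i_k}$ is a facet $F$: then $\mathcal{X}_{i_k}$ has no boundary point of $\mathcal{C}$ in its interior, so I instead take $\bm{x}_n$ in the relative interior of $F$, restricting to the dense subset that meets exactly the two regions adjacent across $F$. The crucial claim, to be proved from Assumption~\ref{ass1}, is that $h$ changes sign across $F$; otherwise a full neighbourhood of such a facet point would lie in $\mathcal{C}$ with $h$ vanishing there, placing the point in $\Int\mathcal{C}$ while $h=0$ and contradicting $\Int\mathcal{C}=\{\bm{x}:h(\bm{x})>0\}$. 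Since the two affine pieces both vanish on the $(n-1)$-dimensional $F$, their gradients are parallel to the normal of $\mathcal{H}_{i_k}$, and the sign change then forces $\mathcal{C}$ to coincide locally with $\{\bm{z}:\bm{w}_{i_k}^\top\bm{z}+b_{i_k}\geq 0\}$, yielding $T_\mathcal{C}^B(\bm{x}_n)=\{\bm{u}:\bm{w}_{i_k}^\top\bm{u}\geq 0\}$ as needed. Finally, the interior case of the proposition ($\bm{x}\in\Int\mathcal{X}_i$) is the special instance $m=1$, $I=\{1\}$, to which the same local half-space argument applies and for which $K$ reduces to $\{\bm{v}:\bm{w}_i^\top\bm{v}\geq 0\}$.
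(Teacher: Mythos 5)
Your proposal is correct and follows essentially the same route as the paper's proof: both identify the Clarke cone as the lower limit of the Bouligand cones at nearby points of $\mathcal{C}$ (Lemma~\ref{lemma: relationship between tangent cones}), use the inclusion~\eqref{subeq: insection b cone subset} to show the candidate cone lies in every nearby Bouligand cone, and obtain the reverse inclusion by approaching $\bm{x}$ through boundary points whose Bouligand cone is exactly a single half-space (interior points of regions cut by their zero hyperplane, and relative-interior points of facets across which $h$ changes sign by Assumption~\ref{ass1}). The only difference is organizational: the paper packages the nearby points into the index sets $I_1, I_{21}, I_{22}$ inherited from the proof of Proposition~\ref{lemma: linear function b cone} and evaluates a maximum of distances over the resulting categories, whereas you argue the two inclusions separately with the same witnesses.
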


% By the equivalence of assertion a and d in Theorem \ref{Nagumo’s theorem}, the sufficient and necessary invariance condition for the 0-superlevel set of a CPLF under the system \eqref{system} is clarified in Theorem \ref{thm: linear function invariance}.

Based on the equivalence of assertions (a) and (c) in Theorem~\ref{Nagumo’s theorem}, the necessary and sufficient condition for the positive invariance of the 0-superlevel set of a CPLF under system~\eqref{system} is established in Theorem~\ref{thm: linear function invariance}.
\begin{theorem}
    \label{thm: linear function invariance}
         % Consider the system \eqref{system} and the set $\mathcal{C} =\{\bm{x}\in \mathbb{R}^n: h(\bm{x})\geq 0\}$, where $h$ is a CPLF and has the expression \eqref{eq:linear function expression}, under the Assumption \ref{ass1}, $\mathcal{C}$ is positively invariant for the system \eqref{system} if and only if for each valid linear region $\mathcal{X}_{i}$,
         Consider the system~\eqref{system} and the set $\mathcal{C} = \{\bm{x} \in \mathbb{R}^n : h(\bm{x}) \geq 0\}$, where $h$ is a CPLF as defined in~\eqref{eq:linear function expression}. Under Assumption~\ref{ass1}, the set $\mathcal{C}$ is positively invariant for the system~\eqref{system} if and only if, for each valid linear region $\mathcal{X}_i$, 
         \begin{equation}
         \label{eq: linear invariance condition}
             \bm{w}_{i}^\top \bm{f}   (\bm{x}) \geq 0, \forall \bm{x} \in \partial \mathcal{C} \cap \mathcal{X}_{i}.
         \end{equation}
\end{theorem}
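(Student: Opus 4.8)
The plan is to derive the stated condition directly by combining Nagumo's theorem (the equivalence of assertions (a) and (c) in Theorem~\ref{Nagumo’s theorem}) with the explicit description of the Clarke tangent cone given in Proposition~\ref{lemma: linear function condition}. By Nagumo, $\mathcal{C}$ is positively invariant if and only if $\bm{f}(\bm{x}) \in T_\mathcal{C}^C(\bm{x})$ for every $\bm{x} \in \partial\mathcal{C}$. Since the Clarke cone at a boundary point is, by Proposition~\ref{lemma: linear function condition}, a polyhedral cone cut out by the inequalities $\bm{w}_{i_k}^\top \bm{v} \geq 0$ associated with the valid linear regions containing $\bm{x}$, the membership $\bm{f}(\bm{x}) \in T_\mathcal{C}^C(\bm{x})$ unfolds into a finite list of scalar inequalities. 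The whole argument then reduces to showing that quantifying these inequalities ``over all boundary points, for each valid region touching the point'' is the same as quantifying ``over each valid region $\mathcal{X}_i$, for each $\bm{x} \in \partial\mathcal{C} \cap \mathcal{X}_i$'', which is a matter of re-indexing.

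Concretely, I would fix $\bm{x} \in \partial\mathcal{C}$ and split into the two cases of Proposition~\ref{lemma: linear function condition}. If $\bm{x} \in \Int\mathcal{X}_i$, then $T_\mathcal{C}^C(\bm{x}) = \{\bm{v} : \bm{w}_i^\top\bm{v}\geq 0\}$, so $\bm{f}(\bm{x})\in T_\mathcal{C}^C(\bm{x})$ is equivalent to $\bm{w}_i^\top\bm{f}(\bm{x})\geq 0$; here I would first note that $\mathcal{X}_i$ is necessarily valid, because Assumption~\ref{ass1} gives $h(\bm{x})=0$, hence $\bm{w}_i^\top\bm{x}+b_i = 0$ and $\bm{x}\in\mathcal{H}_i\cap\Int\mathcal{X}_i\neq\emptyset$. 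If instead $\bm{x}$ lies in the intersection $\bigcap_{k=1}^m\mathcal{X}_{i_k}$ of all regions containing it (so that $\bm{x}\notin\overline{\mathbb{R}^n\setminus\bigcup_{k=1}^m\mathcal{X}_{i_k}}$), then $T_\mathcal{C}^C(\bm{x})=\{\bm{v}:\bigwedge_{k\in I}\bm{w}_{i_k}^\top\bm{v}\geq 0\}$ with $I$ indexing the valid regions among the $\mathcal{X}_{i_k}$, so $\bm{f}(\bm{x})\in T_\mathcal{C}^C(\bm{x})$ is equivalent to $\bm{w}_{i_k}^\top\bm{f}(\bm{x})\geq 0$ for every valid $\mathcal{X}_{i_k}$ containing $\bm{x}$. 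In both cases the membership condition at $\bm{x}$ is exactly: $\bm{w}_i^\top\bm{f}(\bm{x})\geq 0$ for every valid region $\mathcal{X}_i$ with $\bm{x}\in\mathcal{X}_i$.

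For the forward direction I would assume positive invariance, take an arbitrary valid region $\mathcal{X}_i$ and $\bm{x}\in\partial\mathcal{C}\cap\mathcal{X}_i$, and read off $\bm{w}_i^\top\bm{f}(\bm{x})\geq 0$ from the per-point equivalence just established (in the intersection case, $\mathcal{X}_i$ being valid places its index in $I$). For the converse I would assume \eqref{eq: linear invariance condition} and verify $\bm{f}(\bm{x})\in T_\mathcal{C}^C(\bm{x})$ at an arbitrary $\bm{x}\in\partial\mathcal{C}$: each inequality defining $T_\mathcal{C}^C(\bm{x})$ corresponds to a valid region containing $\bm{x}$, and the hypothesis applied to that region and this point supplies exactly that inequality, so that Nagumo's theorem then yields positive invariance. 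Assertion~3 of Proposition~\ref{lemma: linear function b cone} guarantees $\partial\mathcal{C}\subset\bigcup_{l\in J}\mathcal{X}_l$, confirming that every boundary point is covered by valid regions so the condition is not vacuous.

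The main obstacles are bookkeeping rather than analytic. The delicate points are: (i) verifying that the two cases of Proposition~\ref{lemma: linear function condition} genuinely exhaust $\partial\mathcal{C}$, which follows because, as the regions partition $\mathbb{R}^n$, a point lying in only one region must lie in its interior, while a point on a polyhedral boundary is shared by at least two regions; and (ii) confirming the validity of the region in the interior-point case via Assumption~\ref{ass1}, so that the quantifier ``over valid regions'' in the theorem correctly captures the constraint at every boundary point. Once these are in place, the theorem is just the re-indexed restatement of Nagumo's condition under the Clarke-cone formula.
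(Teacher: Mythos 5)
Your proposal is correct and follows essentially the same route as the paper: combine the equivalence of assertions (a) and (c) in Theorem~\ref{Nagumo’s theorem} with the Clarke tangent cone formula of Proposition~\ref{lemma: linear function condition}, split boundary points into the interior-of-a-region case and the intersection case, and re-index the resulting inequalities over valid linear regions. You are somewhat more explicit than the paper about the "only if" direction, the exhaustiveness of the two cases, and why the region in the interior case is automatically valid, but these are refinements of the same argument rather than a different one.
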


% Recall that a ReLU network is essentially a CPLF, applying the Theorem \ref{thm: linear function invariance} and utilizing the notations in Section \ref{subsec: ReLU}, we can get the sufficient and necessary invariance condition for the 0-superlevel set represented by a ReLU network under the system \eqref{system}. 
Recall that a ReLU neural network is essentially a CPLF. By applying Theorem~\ref{thm: linear function invariance} and using the notations introduced in Section~\ref{subsec: ReLU}, we derive the necessary and sufficient condition for the positive invariance of the 0-superlevel set represented by a ReLU neural network under system~\eqref{system}.
%Notice that some activated regions of a Relu network is not necessarily $n$-dimensional, therefore we only consider activated regions with dimension $n$ in the following theorem. 

\begin{theorem}
    \label{thm: ReLU invariance}
         % Consider the system \eqref{system} and the set $\mathcal{C} =\{\bm{x}: h(\bm{x})\geq 0\}$, where $h$ is a ReLU network stated in Section \ref{subsec: ReLU}, under the Assumption \ref{ass1}, $\mathcal{C}$ is positively invariant for the system \eqref{system} if and only if for each activation indicator $\mathscr{C}$ such that $\dim(\mathcal{X}(\mathscr{C})) =n$ and $\mathcal{X}(\mathscr{C})$ is a valid linear region,
         Consider the system~\eqref{system} and the set $\mathcal{C} = \{\bm{x} \in \mathbb{R}^n: h(\bm{x}) \geq 0\}$, where $h$ is a ReLU neural network as described in Section~\ref{subsec: ReLU}. Under Assumption~\ref{ass1}, the set $\mathcal{C}$ is positively invariant for system~\eqref{system} if and only if, for each activation indicator $\mathscr{C}$ such that 
         % $\dim(\mathcal{X}(\mathscr{C})) = n$ and 
         $\mathcal{X}(\mathscr{C})$ is a valid linear region,
         \begin{equation}
         \label{eq: relu invariance condition}
             \bm{w}(\mathscr{C})^\top \bm{f}(\bm{x}) \geq 0, \forall \bm{x} \in \partial \mathcal{C} \cap \mathcal{X}(\mathscr{C}).
         \end{equation}
\end{theorem}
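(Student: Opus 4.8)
The plan is to derive Theorem~\ref{thm: ReLU invariance} as a direct specialization of Theorem~\ref{thm: linear function invariance}. Since Section~\ref{subsec: ReLU} already expresses the ReLU network $h$ as a piecewise-affine map $y = \bm{w}(\mathscr{C})^\top\bm{x} + b(\mathscr{C})$ on the activated regions $\mathcal{X}(\mathscr{C})$, the only real work is to recast this representation as a bona fide CPLF of the form~\eqref{eq:linear function expression} and then check that the hypotheses and conclusion of Theorem~\ref{thm: linear function invariance} transcribe verbatim into the ReLU notation.

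First I would construct the CPLF partition. The family $\{\mathcal{X}(\mathscr{C})\}_{\mathscr{C}\in\mathcal{I}}$ is only a cover of $\mathbb{R}^n$, not a partition in the sense required by~\eqref{eq:linear function expression}: some $\mathcal{X}(\mathscr{C})$ are empty or of dimension $< n$, and several distinct indicators may describe the same full-dimensional polyhedron, because a neuron whose pre-activation is exactly zero counts as both activated and inactivated. I would therefore retain only the indicators $\mathscr{C}_1,\dots,\mathscr{C}_N$ whose activated regions are $n$-dimensional, set $\mathcal{X}_i := \mathcal{X}(\mathscr{C}_i)$, $\bm{w}_i := \bm{w}(\mathscr{C}_i)$ and $b_i := b(\mathscr{C}_i)$, and verify three facts: (i) each $\mathcal{X}_i$ is a full-dimensional polyhedron, which is immediate from the polyhedral description of $\mathcal{X}(\mathscr{C})$ in Section~\ref{subsec: ReLU}; (ii) the $\mathcal{X}_i$ cover $\mathbb{R}^n$ with pairwise intersections of dimension at most $n-1$, since every point lies in the closure of some full-dimensional activated region and any two distinct full-dimensional cells can only share a lower-dimensional face of the underlying hyperplane arrangement; and (iii) on each overlap the two affine pieces coincide, which follows from the continuity of the ReLU network. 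Together these give exactly the CPLF structure~\eqref{eq:linear function expression}, and whenever two indicators yield the same full-dimensional cell they must induce the same affine piece on its interior, so $\bm{w}(\mathscr{C})$ is well defined on that cell and the left-hand side of~\eqref{eq: relu invariance condition} is unambiguous.

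With this identification, the notion of valid linear region from Definition~\ref{def:Valid linear region} applied to $\mathcal{X}_i$ uses precisely the hyperplane $\{\bm{x}: \bm{w}(\mathscr{C}_i)^\top\bm{x}+b(\mathscr{C}_i)=0\}$, so the phrase ``$\mathcal{X}(\mathscr{C})$ is a valid linear region'' in the statement matches the CPLF notion one-to-one. Invoking Theorem~\ref{thm: linear function invariance} then yields that $\mathcal{C}$ is positively invariant if and only if $\bm{w}_i^\top\bm{f}(\bm{x})=\bm{w}(\mathscr{C}_i)^\top\bm{f}(\bm{x})\ge 0$ for all $\bm{x}\in\partial\mathcal{C}\cap\mathcal{X}_i$ over every valid $\mathcal{X}_i$, which is exactly~\eqref{eq: relu invariance condition}. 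The main obstacle is the bookkeeping in the previous paragraph, namely ensuring that discarding empty, lower-dimensional, and redundant indicators loses no information. The reassurance is item~3 of Proposition~\ref{lemma: linear function b cone}: $\partial\mathcal{C}\subset\bigcup_{l\in J}\mathcal{X}_l$ with $J$ indexing valid (hence full-dimensional) regions, so every boundary point already lies in a retained cell. Consequently, restricting the check to the retained indicators covers all of $\partial\mathcal{C}$, and the resulting equivalence is genuinely both necessary and sufficient.
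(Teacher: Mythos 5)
Your proposal is correct and follows the same route the paper takes: the paper derives Theorem~\ref{thm: ReLU invariance} as an immediate specialization of Theorem~\ref{thm: linear function invariance}, justified only by the one-line remark that a ReLU network is a CPLF in the form~\eqref{eq:linear function expression}. Your additional bookkeeping — discarding lower-dimensional and redundant activation indicators, checking that coinciding full-dimensional cells induce the same affine piece, and matching Definition~\ref{def:Valid linear region} to the phrase ``$\mathcal{X}(\mathscr{C})$ is a valid linear region'' — makes explicit what the paper leaves implicit, and is sound.
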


\begin{remark}
\label{remark: compare to hinge}
    In fact, with the equivalence of assertions (a) and (b) in Theorem \ref{Nagumo’s theorem}, one can also derive a necessary and sufficient condition for the positive invariance of the 0-superlevel set of a ReLU neural network, as investigated in \cite{zhang2023exact}. However, compared to our condition in Theorem \ref{thm: ReLU invariance}, the condition in \cite{zhang2023exact} has to additionally enumerate all possible intersection combinations of activated regions $\mathcal{X}(\mathscr{C})$ that intersect the boundary $\partial \mathcal{C}$, which significantly increases computational complexity.  For instance, if there are $n$ activated regions $\mathcal{X}(\mathscr{C}_1), \cdots, \mathcal{X}(\mathscr{C}_n)$ with a nonempty intersection $\bigcap_{k=1}^n \mathcal{X}(\mathscr{C}_k) \neq \emptyset$,  the total number of intersection combinations among these regions  is $\sum_{k=1}^n {n\choose k} =2^n-1$.
    However, according to Definition \ref{def:Valid linear region}, the number of valid linear regions that must be enumerated in our method is fewer than $n$.
    % \laode{While the number of valid linear regions having to be enumerated in our method is less than $n$ according to Definition \ref{def:Valid linear region}.}
\end{remark}
\section{Verification Algorithm}
\label{sec: Verification Algorithm}
% In this section we present  our verification algorithm based on the necessary and sufficient condition in Theorem \ref{thm: ReLU invariance}.
%  The algorithm proceeds in three steps: We first search for an initial valid linear region in Sect. \ref{subsect: Enumerating linear regions Intersecting Boundary}; using this initial region, in Sect. \ref{subsect: Boundary Propagation Algorithm} we leverage the boundary propagation algorithm to enumerate all valid linear regions, i.e., compute $\mathcal{A}\triangleq\{\mathscr{C}: 
%  %\dim(\mathcal{X}(\mathscr{C})) =n, 
%  \mathcal{X}(\mathscr{C}) \text{ is a valid linear region}\}$; finally in Sect. \ref{subsec: Verification in Each Activation Region}, for each valid activation indicator $\mathscr{C} \in \mathcal{A}$, we verify whether $\bm{w}(\mathscr{C})^\top \bm{f}(\bm{x}) \geq 0, \forall \bm{x} \in \partial \mathcal{C} \cap \mathcal{X}(\mathscr{C})$, if all valid activation indicators satisfy, then $\mathcal{C}$ is positively invariant for system \eqref{system}, otherwise if there exists one valid activation indicator $\mathscr{C}^\prime \in \mathcal{A}$ fails, then positively invariant condition of $\mathcal{C}$ doesn't hold.

In this section we present  our verification algorithm based on the necessary and sufficient condition in Theorem \ref{thm: ReLU invariance}.
 The algorithm proceeds in three steps: we first search for an initial valid linear region (Sect. \ref{subsect: Enumerating linear regions Intersecting Boundary}); using this initial region, the boundary propagation algorithm enumerates all valid linear regions, i.e., compute $\mathcal{A}\triangleq\{\mathscr{C}: 
 %\dim(\mathcal{X}(\mathscr{C})) =n, 
 \mathcal{X}(\mathscr{C}) \text{ is a valid linear region}\}$ (Sect. \ref{subsect: Boundary Propagation Algorithm}); finally, for each valid activation indicator $\mathscr{C} \in \mathcal{A}$, we verify whether $\bm{w}(\mathscr{C})^\top \bm{f}(\bm{x}) \geq 0$ holds for all $\bm{x} \in \partial \mathcal{C} \cap \mathcal{X}(\mathscr{C})$ (Sect. \ref{subsec: Verification in Each Activation Region}). If this condition holds for every $\mathscr{C} \in \mathcal{A}$, then $\mathcal{C}$ is positively invariant for system \eqref{system}; otherwise, if any $\mathscr{C}^\prime \in \mathcal{A}$ violates the condition, $\mathcal{C}$ is not positively invariant. Figure \ref{fig:enumeration_algorithm} illustrates the procedure of enumerating all valid linear regions.

\begin{figure}[htbp]
    \centering
    \includegraphics[width=\linewidth]{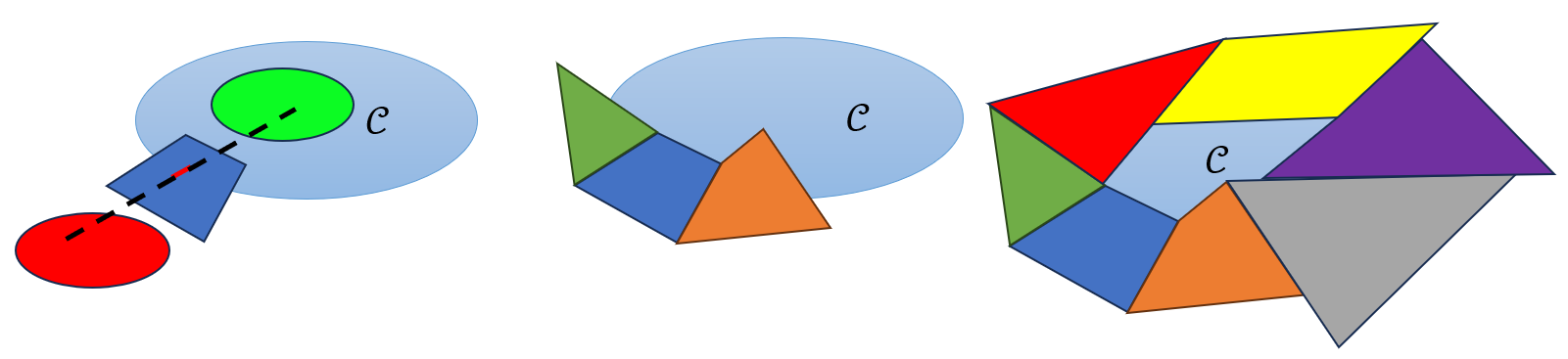}
    \caption{
    % Illustration of our Algorithm. We first randomly select two points and shrink the length of the segment until we can find an initial valid linear region. Then, we expand the region through its boundary to find other valid regions. Repeat this procedure and we can enumerate all the regions that cover the boundary of the safe set.
    % Illustration of the procedure of enumerating all valid linear regions. We first randomly select two points and shrink the length of the segment until finding an initial valid linear region. Then, by using boundary propagation algorithm, we can enumerate all the valid linear region in its neighborhood. Repeat this procedure and we can enumerate all the valid linear regions.
    Illustration  of enumerating valid linear regions:
    We begin by randomly selecting two points---one from the initial set (colored in green) and one from the unsafe set (colored in red)---and iteratively shrinking the segment connecting them until an initial valid linear region is found. 
    % Then, using the boundary propagation algorithm, we enumerate all valid linear regions in its neighborhood. By repeating this process, all valid linear regions can eventually be identified.
The boundary propagation algorithm then expands from this region to identify neighboring valid linear regions. 
% Repeating this process yields all valid linear regions.
By iteratively applying
this propagation step, the algorithm eventually enumerates all valid linear regions.
    }
    \label{fig:enumeration_algorithm}
\end{figure}

\subsection{Searching for an Initial Valid Linear Region}
\label{subsect: Enumerating linear regions Intersecting Boundary}

We use the following procedure to search for an initial valid linear region.

\begin{itemize}
    \item[1.] Randomly select points until  two points $\bm{x}_1,\bm{x}_2$ are found such that $h(\bm{x}_1)h(\bm{x}_2) <0$. For example,  $\bm{x}_1$ can be sampled from the unsafe set and $\bm{x}_2$ from the initial set.
    \item[2.] Utilize the method of the bisection to shrink the length of the line segment $\{t\bm{x}_1 +(1-t)\bm{x}_2: 0\leq t \leq 1\}$, until $\Vert \bm{x}_1 -\bm{x}_2 \Vert \leq \epsilon$, where $\epsilon>0$ is a predefined threshold.  Specifically, suppose initially $h(\bm{x}_1)<0, h(\bm{x}_2)>0$, if $h(\frac{\bm{x}_1+\bm{x}_2}{2}) <0$, then $\bm{x}_1 := \frac{\bm{x}_1+\bm{x}_2}{2}$, else $\bm{x}_2 := \frac{\bm{x}_1+\bm{x}_2}{2}$.
    \item[3.] Compute the interval hull $\Inthull(\{\bm{x}_1,\bm{x}_2\})$ of $\bm{x}_1$ and $\bm{x}_2$, 
    % and use Interval Bound Propagation (IBP) to determine  the activation indicators which are activated. Suppose the input of $j$-th neuron in $i$-th layer is $[l_1, l_2]$, create a candidate activation indicator $\mathscr{C}^\prime$: 
    and use Interval Bound Propagation (IBP) to determine  the  candidate activation indicator $\mathscr{C}^\prime$. For each neuron $j$ in layer $i$ with input interval $[l_1, l_2]$:
    \begin{itemize}
        \item if $l_1>0$, set $\bm{s}_i(j) := 1$;
        \item if $l_2<0$, set $\bm{s}_i(j) := 0$;
        % \item \textcolor{blue}{else, then $\bm{s}_i(j) := 0$ and create a new activation indicator $\mathscr{C}^\prime$ such that the values before $j$-th neuron in $i$-th layer are same, but $\bm{s}^\prime_i(j) := 1$};(\textcolor{red}{This step should be removed})
         \item else, set $\bm{s}_i(j) := -1$.
        
    \end{itemize}
    \item[4.] Generate all feasible activation indicators $\mathscr{C}$ from $\mathscr{C}^\prime$ by replacing  each $-1$ in $\bm{s}_i(j)$ with both  1 and 0.
 % activation indicators $\mathscr{C}$ from $\mathscr{C}^\prime$ by replacing  each $-1$ in $\bm{s}_i(j)$ with both  1 and 0, creating all feasible combinations.
    %i.e., the value of neuron indicators can not be -1.
    \item[5.] 
    %Examine each activation indicator $\mathscr{C}$ whether $\mathcal{X}(\mathscr{C})$ is valid by the following \textbf{Valid Test}. When find one, marked as $\mathscr{C}_0$. 
    
%     For each feasible activation  indicator 
% $\mathscr{C}$, check whether 
%  $\mathcal{X}(\mathscr{C})$ passes  the \textbf{Valid Test} below. Once a valid indicator is found, denote it by $\mathscr{C}_0$, and the corresponding $\mathcal{X}(\mathscr{C}_0)$ is taken as the initial valid linear region.
 For each feasible activation indicator $\mathscr{C}$, check whether $\mathcal{X}(\mathscr{C})$ passes the \textbf{Valid Test} below. If a valid indicator is found, denote it by $\mathscr{C}_0$, and take the corresponding region $\mathcal{X}(\mathscr{C}_0)$ as the initial valid linear region.
\end{itemize}

% \paragraph{Test 1: determining whether $\mathcal{X}(\mathscr{C})$ is a valid linear region}
\paragraph{Valid Test}
When $\mathcal{X}(\mathscr{C})$ is an $n$-dimensional valid linear region, it follows that $\dim(\partial \mathcal{C} \cap \mathcal{X}(\mathscr{C})) =n-1$. Therefore, the  validity of $\mathcal{X}(\mathscr{C})$ can be verified by checking whether this dimensional condition holds. Note that the redundant case where $\partial \mathcal{C} \cap \mathcal{X}(\mathscr{C}) = \mathcal{X}(\mathscr{C})$ can also be included, but this does not  affect the correctness of our method.

Given an activated region  of the form $\mathcal{X}(\mathscr{C}) =\{\bm{x} \in \mathbb{R}^n: \bm{A} \bm{x} \leq \bm{d}\}$, define the set $\mathcal{P} = \partial \mathcal{C} \cap \mathcal{X}(\mathscr{C})$ and let $\Tilde{\bm{A}} = [\bm{A};\bm{w}^\top(\mathscr{C});-\bm{w}^\top(\mathscr{C})], \Tilde{\bm{d}} = [\bm{d};-\bm{b}(\mathscr{C});\bm{b}(\mathscr{C})]$, then $\mathcal{P} =\{ \bm{x} \in \mathbb{R}^n: \Tilde{\bm{A}} \bm{x} \leq \Tilde{\bm{d}}\}$ is also a polyhedron. 
In a system of linear inequalities $\bm{A}\bm{x} \leq \bm{d}$, the inequality $\bm{A}(j)\bm{x} \leq \bm{d}(j), j \in \{1,\cdots, \rows(\bm{A})\}$, is called an \textit{implicit equality} if $\bm{A}(j)\bm{x} =\bm{d}(j) $ holds for all solutions of the system $\bm{A}\bm{x} \leq \bm{d}$. We denote by $\bm{A}^= \bm{x}\leq \bm{d}^=$ the system consisting of all implicit equalities within  $\bm{A}\bm{x} \leq \bm{d}$.

To identify the implicit equalities in $\mathcal{P} =\{ \bm{x} \in \mathbb{R}^n: \Tilde{\bm{A}} \bm{x} \leq \Tilde{\bm{d}}\}$, we solve the following pair of LPs for each row $j = 1, \cdots, \rows(\Tilde{\bm{A}})$:
  \begin{equation*}   
\begin{aligned}
        &\min_{\bm{x}} \Tilde{\bm{A}}(j)\bm{x}  &~~~~~~~~~~~~~~&\max_{\bm{x}} \Tilde{\bm{A}}(j)\bm{x}&\\
     &\text{s.t.}~ \Tilde{\bm{A}} \bm{x} \leq \Tilde{\bm{d}} &~~~~~~~~~~~~~~&\text{s.t.}~ \Tilde{\bm{A}} \bm{x} \leq \Tilde{\bm{d}}& 
\end{aligned}
\end{equation*}
If the optimal values of both LPs are equal, then $\Tilde{\bm{A}}(j) \bm{x} \leq \Tilde{\bm{d}}(j)$ is an implicit equality.
According to \cite[Theorem 4.17]{conforti2014integer}, the dimension of $\mathcal{P}$ satisfies $\dim(\mathcal{P}) = n -\rank(\Tilde{\bm{A}}^=)$. Therefore, if $\rank(\Tilde{\bm{A}}^=)=1$, then $\dim(\mathcal{P}) = n-1$.

\subsection{Boundary Propagation Algorithm}
\label{subsect: Boundary Propagation Algorithm}
Next we introduce the boundary propagation algorithm, which can enumerate all valid activation indicators, i.e., the set $\mathcal{A}\triangleq\{\mathscr{C}: %\dim(\mathcal{X}(\mathscr{C})) =n, 
\mathcal{X}(\mathscr{C}) \text{ is a valid linear region}\}$. This algorithm proceeds as follows:

\begin{enumerate}
    \item Initialize the set of valid activation indicators as $\mathcal{A} :=\{\mathscr{C}_0\}$ and the visited set as $\mathcal{B} := \emptyset$.
    %temporary set $\mathcal{T} := \emptyset$.
    \item Select an activation indicator $\mathscr{C}$ in $\mathcal{A} \setminus \mathcal{B}$ and add $\mathscr{C}$ to $\mathcal{B}$.
    Apply the method from \cite{marechal2017efficient} to remove the redundant constraints in $\mathcal{X}(\mathscr{C}) = \{\bm{x} \in \mathbb{R}^n: \bm{A} \bm{x} \leq \bm{d}\}$.
     For each $j = 1, \cdots, \rows(\bm{A})$, solve the  following LP feasibility problem
     %\ref{eq: find intersection} in the Appendix \ref{app: Optimization Programmings} 
     to find the facet of $\mathcal{X}(\mathscr{C})$ which has the intersection with $\partial \mathcal{C}$.
%                             \begin{equation*}   
%                  \label{eq: find intersection}
% \begin{split}
%      &\min_{\bm{x}} 0  \\
%      \text{s.t.}~& \begin{cases}
%               \bm{w}(\mathscr{C})^\top \bm{x} + b(\mathscr{C}) =0\\
%          \bm{A}(j) \bm{x} = \bm{d}(j)\\
%         \bm{A}(k) \bm{x} \leq \bm{d}(k), k \in \{1,\cdots, \rows(\bm{A})\}\setminus\{j\}
%      \end{cases}
%     \end{split}
% \end{equation*}
                            \begin{equation*}   
                 \label{eq: find intersection}
\begin{cases}
              \bm{w}(\mathscr{C})^\top \bm{x} + b(\mathscr{C}) =0\\
         \bm{A}(j) \bm{x} = \bm{d}(j)\\
        \bm{A}(k) \bm{x} \leq \bm{d}(k), k \in \{1,\cdots, \rows(\bm{A})\}\setminus\{j\}
     \end{cases}
\end{equation*}

 % if the LP \eqref{eq: find intersection} has  feasible solutions, choose a feasible solution $\bm{x}^\star$, find all the activation indicators that $\bm{x}^\star$ actives. For indicators which can pass the \textbf{Valid Test}, add them to $\mathcal{A}$.
 If the  above LP  feasibility problem admits a feasible solution 
$\bm{x}^\star$, then enumerate all feasible activation indicators that are activated by 
$\bm{x}^\star$. For each indicator that passes the \textbf{Valid Test}, add it to the set 
$\mathcal{A}$.
\item The algorithm terminates when 
$\mathcal{A} = \mathcal{B}$, meaning all  valid activation indicators have been explored and no new ones have been found.
% When $\mathcal{A} = \mathcal{B}$, which means all the activation indicators in $\mathcal{A}$ have been visited and no more new indicators found, in this cense we terminate the progress.

      \end{enumerate}

% \textcolor{red}{
% \begin{remark}
%     For a point $\bm{x}$, the activation indicator it activates should not contain unstable neurons. It means that for each unstable neuron we assign it 1 or 0 for further verification.
% \end{remark}
% }
\begin{remark}
    When the system dimension is 2,  the number of facets of $\mathcal{X}(\mathscr{C})$ that intersect $\partial \mathcal{C}$ is at most 4. Therefore once four such facets have been identified, it is unnecessary to examine the remaining facets.
    % when we find 4 facets intersecting with $\partial \mathcal{C}$, there is no need to search for the remaining facets.
\end{remark}
  
%   \yiling{
% \begin{remark}
%   Here we propose a simple but effective heuristic method for facet enumeration. Specifically, in Boundary Propagation Algorithm \ref{subsect: Boundary Propagation Algorithm} we store all the activation indicators which have been enumerated before, and we check whether the new indicators generated during the process have appeared before. If so, we neglect it. This simple procedure helps us avoid checking the validness of many duplicate indicators and further enumerating them, which greatly reduces the time costs.
% \end{remark}
% }

% For the partitioning of the input space for ReLU neural Network, it has a special property: every $n$-dimensional linear region share the same facet with all its adjacent linear regions, since the partition of the input space can be seen as generated through a series of hyperplanes layer by layer \cite{raghu2017expressive,serra2018bounding}. Besides $\partial \mathcal{C}$ is included in the union of valid linear regions (see Proposition \ref{lemma: linear function b cone}), if $\partial \mathcal{C}$ is assumed to be connected, then given an initial valid linear region, we can examine each of its facet whether contains boundary points to enumerate all the valid linear regions around it, repeat this, we can find all the valid linear regions. The following theorem summarizes this assertion.

For ReLU neural networks, the partitioning of the input space exhibits a special structure: every $n$-dimensional linear region shares a facet with each of its adjacent linear regions. This property arises from the fact that the partitioning is generated through a sequence  of hyperplanes, layer by layer \cite{raghu2017expressive, serra2018bounding}.
Moreover, since  $\partial \mathcal{C}$ is contained within the union of valid linear regions (see Proposition~\ref{lemma: linear function b cone}), if we further assume that $\partial \mathcal{C}$ is connected, then given an initial valid linear region, one can examine its facets to identify those  intersecting the boundary. By recursively expanding to adjacent regions via these shared  facets, we can  enumerate all valid linear regions. 
The following theorem formalizes this claim.
\begin{theorem}
\label{thm: bound connected}
    Assuming that $\partial \mathcal{C}$ is  connected, then the boundary propagation algorithm can identify  all  valid linear regions.
\end{theorem}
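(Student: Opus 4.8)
The plan is to recast the boundary propagation algorithm as a graph search over the valid linear regions, and then to show that connectivity of $\partial\mathcal{C}$ forces the search graph to be connected, so that the traversal started at $\mathscr{C}_0$ reaches every valid region. Concretely, I would introduce a graph $G$ whose vertices are the valid linear regions and whose edges join $\mathcal{X}_l$ and $\mathcal{X}_{l'}$ whenever they share a facet $F$ with $F\cap\partial\mathcal{C}\neq\emptyset$. The argument then factors into two independent claims: (i) once the algorithm visits a region it adds all of that region's $G$-neighbours, so it explores exactly the connected component of $\mathscr{C}_0$ in $G$; and (ii) $G$ is connected.

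Claim (i) is essentially built into the definition of $G$. When $\mathscr{C}$ is moved from $\mathcal{A}$ to $\mathcal{B}$, redundancy removal exposes each genuine facet $F$ of $\mathcal{X}(\mathscr{C})$ as a surviving row $j$ of $\bm{A}\bm{x}\le\bm{d}$. If $F\cap\partial\mathcal{C}\neq\emptyset$, then because $F\subseteq\mathcal{X}(\mathscr{C})$ the feasibility LP for row $j$ admits a witness $\bm{x}^\star\in F\cap\partial\mathcal{C}$; writing $\mathcal{X}_{l'}$ for the region on the other side of $F$, we have $F\subseteq\mathcal{X}_{l'}$, so $\bm{x}^\star$ activates $\mathscr{C}_{l'}$, which is then enumerated and, being valid, added to $\mathcal{A}$. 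Since there are finitely many valid regions and the loop runs until $\mathcal{A}=\mathcal{B}$, the algorithm terminates having discovered precisely the component of $\mathscr{C}_0$ in $G$.

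For claim (ii) I would first invoke Proposition~\ref{lemma: linear function b cone}(3) to get the finite closed cover $\partial\mathcal{C}=\bigcup_{l\in J}P_l$, where $P_l\triangleq\partial\mathcal{C}\cap\mathcal{X}_l$. A standard fact---a connected space that is a finite union of closed sets has a connected intersection graph, since otherwise the two vertex classes would give disjoint nonempty closed sets covering $\partial\mathcal{C}$---then shows the coarse graph $G_\cap$ (edge iff $P_l\cap P_{l'}\neq\emptyset$) is connected. As $\partial\mathcal{C}$ is a finite polyhedral complex it is locally path connected, so its connectedness upgrades to path connectedness, and I would use paths between representative interior points of cells to realize each $G_\cap$-adjacency as a finite sequence of crossings of the $(n-2)$-dimensional ridges of $\partial\mathcal{C}$.

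The hard part is bridging $G_\cap$ and $G$: an intersection $P_l\cap P_{l'}\neq\emptyset$ might occur only along a face of dimension $\le n-2$, whereas $G$ records only facet crossings. To close the gap I would perturb the connecting path to avoid the codimension-$\ge 2$ skeleton of $\partial\mathcal{C}$, forcing it to pass between cells through relative interiors of ridges; at each such crossing the quoted ReLU property---that adjacent regions share a full facet---makes the two incident regions share the facet $F$ carrying the ridge, yielding an honest $G$-edge with $F\cap\partial\mathcal{C}\neq\emptyset$. Making this routing rigorous, namely excluding transitions that are forced through lower-dimensional, non-facet contact and separately handling the degenerate $n=2$ case where ridges are isolated vertices, is the main obstacle; the rest is bookkeeping about LP witnesses and the finiteness of the region count.
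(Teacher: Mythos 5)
Your overall strategy---connectedness of $\partial\mathcal{C}$ forces the reachability graph of valid regions to be connected, so the traversal from $\mathscr{C}_0$ exhausts them---is the same idea as the paper's, which runs the argument as a one-paragraph contradiction: if some valid region $\mathscr{C}'$ were missed, then $\partial\mathcal{C}\cap\mathcal{X}(\mathscr{C}')$ and the union of the boundary pieces of the remaining valid regions would split $\partial\mathcal{C}$ into two nonempty relatively open pieces, contradicting connectedness. So the routes are not genuinely different; yours is a graph-search reformulation of the same separation argument.

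The problem is that you have not actually closed the proof: you explicitly flag the passage from the intersection graph $G_\cap$ to the facet graph $G$ as ``the main obstacle'' and leave it as a plan (perturb paths off the codimension-$\ge 2$ skeleton). That is precisely the step where all the content lives---two valid regions can meet $\partial\mathcal{C}$ and each other only along a face of dimension $\le n-2$, and nothing in your claim (ii) rules out that every $G_\cap$-edge between the two putative components of $G$ is of this degenerate type. A proposal that names the crux but defers it is a proposal with a genuine gap. Moreover, your claim (i) mismodels the algorithm in a way that makes the gap look harder than it is: when the facet LP returns a witness $\bm{x}^\star$, the algorithm enumerates \emph{all} feasible activation indicators activated by $\bm{x}^\star$, i.e.\ every region containing $\bm{x}^\star$, not only the single region across the facet. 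If $\bm{x}^\star$ lies on a ridge or vertex, the algorithm therefore reaches valid regions that touch the current one only in codimension $\ge 2$, which is exactly the mechanism that handles the degenerate contacts you worry about. Your graph $G$ should have been defined with edges for any shared point of $\partial\mathcal{C}$ lying on a facet of one of the two regions (closer to $G_\cap$ than to facet-adjacency), after which the finite-closed-cover connectivity argument you cite does most of the work. As it stands, the proposal neither proves claim (ii) for your $G$ nor correctly reduces the theorem to it. (To be fair, the paper's own proof is also terse on why the two pieces are relatively open and disjoint, but it does not leave the step explicitly unresolved.)
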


\subsection{Verification and Falsification in Each Valid Linear Region}
\label{subsec: Verification in Each Activation Region}

After enumerating all valid linear regions, we can verify or falsify the three conditions in Definition \ref{def: Barrier certificate} ---  positively invariant, initial set  and unsafe set conditions --- within each valid linear region.

\subsubsection{the Positively Invariant Condition}

The necessary and sufficient condition in Theorem \ref{thm: ReLU invariance} can be described as a quantified formula:
\begin{equation}
\label{eq:verificaiton Clarke}
\begin{aligned}
        % \forall \mathcal{X}(\mathscr{C}), \mathscr{C} \in \mathcal{A}, \forall \bm{x} \, &\left( \bm{x} \in \partial \mathcal{C} \cap \mathcal{X}(\mathscr{C}) \right. \\  &\left.\rightarrow \bm{w}(\mathscr{C})^\top \bm{f}(\bm{x}) \geq 0 \right). 
\forall  \mathscr{C} \in \mathcal{A}, \forall \bm{x} \, &\left( \bm{x} \in \partial \mathcal{C} \cap \mathcal{X}(\mathscr{C}) 
%\right. \\  &\left.
\rightarrow \bm{w}(\mathscr{C})^\top \bm{f}(\bm{x}) \geq 0 \right). 
\end{aligned}
    % \forall i \in \{1,\cdots,M\}, \forall j_i \in \{1,\cdots,N_i\}, \forall \bm{x}.\bm{x} \in \mathcal{B}_{ij} \rightarrow \nabla h_{ij}(\bm{x}) \bm{f}(\bm{x}) \geq 0.
\end{equation}

We can translate \eqref{eq:verificaiton Clarke} into an SMT problem that determines whether the following quantifier-free formula with Disjunctive Normal Form (DNF) is satisfiable (SAT) or not (UNSAT).

\begin{equation}
% \tag{CSMT}
\label{eq:verificaiton Clarke SMT} 
\begin{aligned}
        \bigvee_{\mathscr{C}\in \mathcal{A}} 
             &\left(\bm{w}(\mathscr{C})^\top \bm{x} + b(\mathscr{C}) =0 \wedge
         \bm{x} \in \mathcal{X}(\mathscr{C}) \right.\\ &\left. \wedge \bm{w}(\mathscr{C})^\top \bm{f}(\bm{x}) < 0 \right).   
\end{aligned}
\end{equation}
If the SMT problem \eqref{eq:verificaiton Clarke SMT} is proven to be UNSAT, then the Boolean value of \eqref{eq:verificaiton Clarke} is true,  implying that  $\mathcal{C}$ is positively invariant under system \eqref{system}. 
% \yiling{[The description of using z3 and dreal, i.e., for polynomial and non-polynomial systems will be added to experiment part.]}
When the system \eqref{system} is polynomial, 
the SMT \eqref{eq:verificaiton Clarke SMT} can be solved using an SMT solver that employs Cylindrical Algebraic Decomposition \cite{caviness2012quantifier}, such as Z3 \cite{de2008z3}. In this case,  verification by solving \eqref{eq:verificaiton Clarke SMT} is both sound and complete: an UNSAT return implies positive invariance of $\mathcal{C}$, while a SAT return indicates the falsification of positive invariance. When the system \eqref{system} is non-polynomial, 
 we can use dReal \cite{gao2013dreal}, an SMT solver that supports non-polynomial functions such as  trigonometric and exponential functions. dReal employs $\delta$-complete decision
procedures, returning either UNSAT or $\delta$-SAT, where $\delta$
is a user-specified numerical error bound. Therefore,  positive invariance verification using dReal is sound: when  dReal returns UNSAT for SMT \eqref{eq:verificaiton Clarke SMT},  $\mathcal{C}$ is positively invariant. However, a ``$\delta$-SAT'' result does not necessarily falsify positive invariance due to  potential numerical errors.

The quantified formula \eqref{eq:verificaiton Clarke} can also be described as a series of optimization problems:
for each valid linear region $\mathcal{X}(\mathscr{C}), \mathscr{C} \in \mathcal{A}$, we solve the following optimization.
%\ref{eq: nonlinear verification} in the Appendix.
\begin{equation}   
                 \label{eq: nonlinear verification}
\begin{split}
     &\min_{\bm{x}} \bm{w}(\mathscr{C})^\top \bm{f}(\bm{x})  \\
     \text{s.t.}~& \begin{cases}
              \bm{w}(\mathscr{C})^\top \bm{x} + b(\mathscr{C}) =0,\\
         \bm{x} \in \mathcal{X}(\mathscr{C}).
     \end{cases}
    \end{split}
\end{equation}
If  the optimal value of \eqref{eq: nonlinear verification} is non-negative, the positive invariance condition is satisfied  for $\mathscr{C}$. When all activation indicators in $\mathcal{A}$ pass the verification, we can deduce that $\mathcal{C}$ is positively invariant for system \eqref{system}; otherwise, it is not. 
% \yiling{[The description of solving optimization of linear and nonlinear systems is mentioned in the experiment[the words in orange]. Here is the duplicate.]}
When the system \eqref{system} is linear,  the optimization is an LP, therefore if the optimal value of \eqref{eq: nonlinear verification} exists, it can be obtained by the simplex method or interior-point method.
However, when the the system \eqref{system} is nonlinear, the optimization \eqref{eq: nonlinear verification} may be nonconvex, making it challenging to obtain the optimal value.  Whereas, we  we can utilize \eqref{eq: nonlinear verification} to falsify the positive invariance, Specifically, if there exists a valid linear region $\mathcal{X}(\mathscr{C}), \mathscr{C} \in \mathcal{A}$  such that \eqref{eq: nonlinear verification}  yields a negative feasible value,  then $\mathcal{C}$ is not positively invariant.

\begin{remark}
\label{remark: Bouligand cannot opt}
   The condition based on the Bouligand   tangent cone  cannot be directly translated into a series of optimization problems, in contrast to condition derived from the Clarke tangent cone. This limitation arises from the presence of strict inequalities --- specifically, the restriction ``for $\bm{x} \in \Int{\mathcal{X}_i}$ ($\bm{A}_i\bm{x} < \bm{d}_i)$'' in first assertion of Proposition \ref{lemma: linear function b cone} --- which are not permitted  in standard optimization formulations. The detailed explanation can be found in the appendix.
\end{remark}

\subsubsection{the Initial and Unsafe Set Conditions}
We can also encode the verification of the initial and unsafe set conditions as SMT and optimization problems like the positively invariant condition. To this end, we introduce the following propositions, which provide necessary and sufficient conditions for these conditions.
\begin{proposition}
\label{prop: initial condition}
    % If $\mathcal{S}_I \cap \partial \mathcal{C} =\emptyset$ and $\mathcal{S}_I \cap \Int \mathcal{C} \neq \emptyset$, then $\mathcal{S}_I \subset \mathcal{C}$.
      $\mathcal{S}_I \subset \mathcal{C}$  if and only if $\mathcal{S}_I \cap \partial \mathcal{C} =\emptyset$ and $\mathcal{S}_I \cap \Int \mathcal{C} \neq \emptyset$.
      % \textcolor{red}{Please check its relationship with Lemma 2 in \cite{xue2016under}.}
\end{proposition}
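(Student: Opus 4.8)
The plan is to prove both implications of the equivalence, exploiting two structural facts. First, the initial set $\mathcal{S}_I = \{\bm{x} : h_I(\bm{x}) > 0\}$ is open, since $h_I$ is continuous and $\mathcal{S}_I$ is a strict superlevel set, and it is connected by hypothesis in Definition~\ref{def: Barrier certificate}. Second, by Assumption~\ref{ass1} the boundary satisfies $\partial \mathcal{C} = \{\bm{x} : h(\bm{x}) = 0\}$, so that, using continuity of $h$, the set $\mathbb{R}^n \setminus \partial \mathcal{C} = \{\bm{x} : h(\bm{x}) > 0\} \cup \{\bm{x} : h(\bm{x}) < 0\}$ decomposes as a union of two disjoint open sets, with $\{\bm{x} : h(\bm{x}) > 0\} = \Int \mathcal{C}$.

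For the forward direction, I would assume $\mathcal{S}_I \subset \mathcal{C}$. Since $\mathcal{S}_I$ is a nonempty open set contained in $\mathcal{C}$, and $\Int \mathcal{C}$ is by definition the largest open subset of $\mathcal{C}$, it follows that $\mathcal{S}_I \subseteq \Int \mathcal{C}$. Nonemptiness of $\mathcal{S}_I$ then yields $\mathcal{S}_I \cap \Int \mathcal{C} = \mathcal{S}_I \neq \emptyset$, and the disjointness of interior and boundary gives $\mathcal{S}_I \cap \partial \mathcal{C} \subseteq \Int \mathcal{C} \cap \partial \mathcal{C} = \emptyset$. This direction requires nothing beyond elementary point-set topology and the openness of $\mathcal{S}_I$.

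For the reverse direction, I would assume $\mathcal{S}_I \cap \partial \mathcal{C} = \emptyset$ and $\mathcal{S}_I \cap \Int \mathcal{C} \neq \emptyset$. The first hypothesis places $\mathcal{S}_I$ inside $\mathbb{R}^n \setminus \partial \mathcal{C} = \{h > 0\} \cup \{h < 0\}$, so that $\mathcal{S}_I = (\mathcal{S}_I \cap \{h > 0\}) \cup (\mathcal{S}_I \cap \{h < 0\})$ exhibits $\mathcal{S}_I$ as a union of two relatively open, disjoint subsets. Connectedness of $\mathcal{S}_I$ forces one of these pieces to be empty; since the second hypothesis guarantees $\mathcal{S}_I \cap \{h > 0\} = \mathcal{S}_I \cap \Int \mathcal{C} \neq \emptyset$, the piece $\mathcal{S}_I \cap \{h < 0\}$ must vanish. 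Hence $\mathcal{S}_I \subseteq \{h > 0\} = \Int \mathcal{C} \subseteq \mathcal{C}$, which is the desired inclusion.

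The main obstacle, though a modest one, lies in the reverse direction, where the conclusion genuinely depends on the connectedness of $\mathcal{S}_I$: without it, $\mathcal{S}_I$ could split across $\{h > 0\}$ and $\{h < 0\}$ while still avoiding $\partial \mathcal{C}$ and meeting $\Int \mathcal{C}$, and the implication would fail. The key conceptual step is recognizing that the identity $\partial \mathcal{C} = \{h = 0\}$ supplied by Assumption~\ref{ass1} is precisely what turns $\mathbb{R}^n \setminus \partial \mathcal{C}$ into a clean disjoint open cover, against which the connectedness of $\mathcal{S}_I$ can be invoked.
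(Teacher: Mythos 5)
Your proof is correct and follows essentially the same route as the paper's: the forward direction exploits the openness of $\mathcal{S}_I$ (the paper via an explicit $\delta$-ball around a hypothetical boundary point, you via the ``largest open subset'' characterization of $\Int\mathcal{C}$), and the reverse direction invokes connectedness of $\mathcal{S}_I$ against a disjoint open cover. Your cover $\{h>0\}\cup\{h<0\}$ coincides with the paper's $\Int\mathcal{C}\cup\mathcal{C}^c$ under Assumption~\ref{ass1}, so the two arguments are the same in substance.
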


\begin{proposition}
\label{prop: unsafe condition}
    % If $\mathcal{S}_U \cap \partial \mathcal{C} =\emptyset$ and $\mathcal{S}_U \cap  \mathcal{C}^c \neq \emptyset$, then $\mathcal{S}_I \cap \mathcal{C} =\emptyset$.
 $\mathcal{S}_U \cap \mathcal{C} =\emptyset$ if and only if $\mathcal{S}_U \cap \partial \mathcal{C} =\emptyset$ and $\mathcal{S}_U \cap  \mathcal{C}^c \neq \emptyset$.
\end{proposition}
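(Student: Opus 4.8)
The plan is to prove the two directions separately, with the backward implication relying crucially on the connectedness of $\mathcal{S}_U$ guaranteed in Definition~\ref{def: Barrier certificate}. The forward direction is immediate. Assuming $\mathcal{S}_U \cap \mathcal{C} = \emptyset$, and using that $\mathcal{C}$ is closed so $\partial \mathcal{C} \subseteq \mathcal{C}$, I would note $\mathcal{S}_U \cap \partial \mathcal{C} \subseteq \mathcal{S}_U \cap \mathcal{C} = \emptyset$. Moreover $\mathcal{S}_U \cap \mathcal{C} = \emptyset$ is equivalent to $\mathcal{S}_U \subseteq \mathcal{C}^c$, whence $\mathcal{S}_U \cap \mathcal{C}^c = \mathcal{S}_U \neq \emptyset$ since $\mathcal{S}_U$ is nonempty. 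This settles one direction without using connectedness.

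For the backward direction, I would first record the disjoint decomposition $\mathbb{R}^n = \Int \mathcal{C} \sqcup \partial \mathcal{C} \sqcup \mathcal{C}^c$, which holds because $\mathcal{C}$ is closed, so $\mathcal{C}^c$ is open and $\mathcal{C} = \Int \mathcal{C} \sqcup \partial \mathcal{C}$. Under the hypothesis $\mathcal{S}_U \cap \partial \mathcal{C} = \emptyset$, this immediately gives the containment $\mathcal{S}_U \subseteq \Int \mathcal{C} \cup \mathcal{C}^c$, a union of two disjoint \emph{open} sets.

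The key step is then to invoke connectedness. The sets $\mathcal{S}_U \cap \Int \mathcal{C}$ and $\mathcal{S}_U \cap \mathcal{C}^c$ are relatively open in $\mathcal{S}_U$, disjoint, and together cover $\mathcal{S}_U$; since $\mathcal{S}_U$ is connected, one of them must be empty. Hence either $\mathcal{S}_U \subseteq \Int \mathcal{C}$ or $\mathcal{S}_U \subseteq \mathcal{C}^c$. The former case would force $\mathcal{S}_U \cap \mathcal{C}^c = \emptyset$ because $\Int \mathcal{C} \subseteq \mathcal{C}$, contradicting the hypothesis $\mathcal{S}_U \cap \mathcal{C}^c \neq \emptyset$. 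Therefore $\mathcal{S}_U \subseteq \mathcal{C}^c$, i.e.\ $\mathcal{S}_U \cap \mathcal{C} = \emptyset$, as required.

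The main obstacle is not computational but rather the topological bookkeeping: one must confirm that $\mathcal{C}^c$ and $\Int \mathcal{C}$ are genuinely open and disjoint so that the connectedness argument applies, and it is Assumption~\ref{ass1}—which identifies $\partial \mathcal{C}$ with $\{h = 0\}$ and $\Int \mathcal{C}$ with $\{h > 0\}$—that legitimizes the clean three-way partition of $\mathbb{R}^n$. I expect no difficulty beyond making this partition precise, and I note that the analogous Proposition~\ref{prop: initial condition} for the initial set admits essentially the same argument.
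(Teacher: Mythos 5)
Your proposal is correct and follows essentially the same route as the paper: the backward direction is the same connectedness argument based on the partition $\mathbb{R}^n = \Int\mathcal{C} \sqcup \partial\mathcal{C} \sqcup \mathcal{C}^c$ (you phrase it directly, the paper by contradiction), and the forward direction differs only in a trivial detail — you use closedness of $\mathcal{C}$ to get $\partial\mathcal{C} \subseteq \mathcal{C}$, while the paper uses openness of $\mathcal{S}_U$ to contradict a boundary point; both work.
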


The  condition $\mathcal{S}_I \cap \partial \mathcal{C} =\emptyset$ can  also be described as a quantified formula:

\begin{equation}
\label{eq:verificaiton initial}
\begin{aligned}
    \forall  \mathscr{C} \in \mathcal{A}, \forall \bm{x} \, \left( \bm{x} \in \partial \mathcal{C} \cap \mathcal{X}(\mathscr{C}) \rightarrow h_I(\bm{x}) \leq 0 \right). 
\end{aligned}
\end{equation}
% Similarly, the quantified formula \eqref{eq:verificaiton initial} can also be translated into the SMT and optimization problem like the positively invariant condition. For simplicity, we present them in the appendix.
Similar to the positive invariance condition, the quantified formula \eqref{eq:verificaiton initial} can be encoded into SMT and optimization problems. For simplicity, we present the details in the appendix.

To verify  $\mathcal{S}_I \cap \Int \mathcal{C} \neq \emptyset$, we can randomly select a point $\bm{x} \in \mathcal{S}_I$ and check whether $h_I(\bm{x})>0$. If both conditions are satisfied, we conclude that 
$\mathcal{S}_I \subset \mathcal{C}$. Otherwise, the verification fails.

Note that the condition in Proposition \ref{prop: unsafe condition} for verifying the unsafe set condition 
$\mathcal{S}_U \cap \mathcal{C} =\emptyset$ is analogous to that used for verifying the initial set condition. Therefore, a similar verification procedure can be applied to the unsafe set condition as well.

% Similarly, For each valid linear region $\mathcal{X}(\mathscr{C}), \mathscr{C} \in \mathcal{A}$, the verification condition can be transformed into the following optimization problem. 
% %\ref{eq: nonlinear verification for unsafe} in the Appendix \ref{app: Optimization Programmings}.
% \begin{equation}   
%                  \label{eq: nonlinear verification for unsafe}
% \begin{split}
%      &\max_{\bm{x}} h_U(\bm{x})  \\
%      \text{s.t.}~& \begin{cases}
%               \bm{w}(\mathscr{C})^\top \bm{x} + b(\mathscr{C}) =0,\\
%          \bm{x} \in \mathcal{X}(\mathscr{C}).
%      \end{cases}
%     \end{split}
% \end{equation}
% If all the optimal values of \eqref{eq: nonlinear verification for unsafe} for each valid linear region is non-positive, then $\mathcal{S}_I \cap \partial \mathcal{C} =\emptyset$. Further we can randomly select a point $\bm{x} \in \mathcal{S}_U$, if $h(\bm{x})<0$, then we have verified $\mathcal{S}_U \cap \mathcal{C} =\emptyset$. When one of this two conditions is unsatisfied, the verification is failed.

\section{Experiments}
\label{sec: Experiments}
% In this section, we evaluate the proposed method to show the efficacy of our verification algorithm. Specifically, we experiment on 4 systems, namely Arch3, Complex, Linear4d and Decay, in which Linear4d is a linear system while others are non-linear. We give all the model descriptions, experimental setup and some of the experiment results in the appendix.
In this section, we evaluate the proposed method to demonstrate the validity  and effectiveness of our verification algorithm. Specifically, we conduct experiments on four systems: Arch3, Complex, Linear4d, and Decay. Among them, Linear4d is a linear system, while the others are nonlinear. System descriptions, experimental settings, and detailed  experimental results are provided in the appendix.

For linear systems, we only solve the optimization problem \eqref{eq: nonlinear verification} to check the positive invariance condition, as it reduces to an LP whose optimal value can be reliably obtained when it exists.
For nonlinear systems, we first solve the optimization \eqref{eq: nonlinear verification}. If there exists a valid linear region  for which the suboptimal value is negative, then the positive invariance condition is falsified. Otherwise, if all instances of \eqref{eq: nonlinear verification}  yield nonnegative suboptimal values for each $\mathscr{C} \in \mathcal{A}$, we proceed to solve the SMT problem \eqref{eq:verificaiton Clarke SMT} using the dReal solver \cite{gao2013dreal}. If dReal returns UNSAT for every region, positive invariance is verified. However, if it returns $\delta$-SAT for any region, the validity of the certificate remains inconclusive. The same procedure is applied to verify the initial set and unsafe set conditions.
% get nonnegative suboptimal value for each $\mathscr{C} \in \mathcal{A}$, then we solve the SMT problem \eqref{eq:verificaiton Clarke SMT} with the SMT solver dReal \cite{gao2013dreal}. If it returns UNSAT for each region, the positive invariance property is verified; or if it returns $\delta$-SAT for one region, the validness of the certificate is unknown.}

We compare our method with that in \cite{zhang2023exact}, as it is, to the best of our knowledge, the only method that does not rely on the Heaviside step function assumption for the derivative of the ReLU activation. 
% Notice that our method can both verify and falsify the barrier certificate, while the method in \cite{zhang2023exact} can only verify, meaning that when the barrier certificate doesn't pass the test, they can't give any result about the correctness of barrier certificate. Moreover, the verified result returned by \cite{zhang2023exact} is guaranteed only when all the optimizations in their program are returned optimal values by solvers, in other words, if one of them is returned a suboptimal value, which is common in nonlinear programming, the verified result is questionable.
Notably, our method supports both verification and falsification of barrier certificates, whereas the method in \cite{zhang2023exact} only performs verification. Consequently, if a certificate fails their test, no conclusion can be drawn about its correctness. Furthermore, the correctness guarantees in \cite{zhang2023exact} hold only if all the involved optimization problems return optimal solutions. In practice, however, solvers often yield suboptimal results in nonlinear programming, potentially undermining the reliability of their verification.

\begin{figure}[bp]
\begin{subfigure}{.24\textwidth}
  \includegraphics[width= \textwidth]{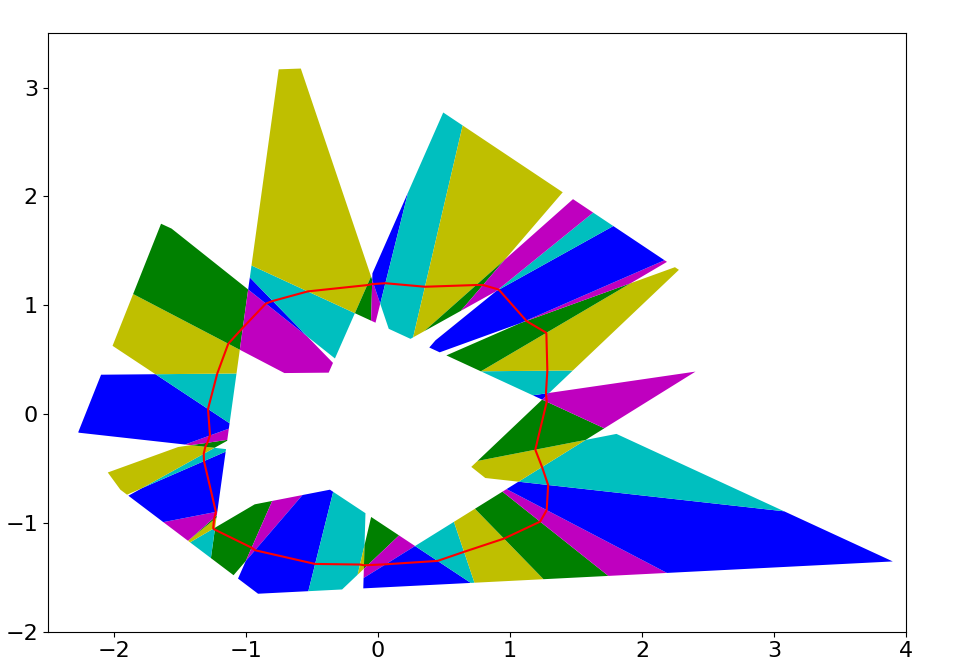}
  \subcaption*{(a)}
\end{subfigure}%
% \hfill
\begin{subfigure}{.24\textwidth}
  \includegraphics[width= \textwidth]{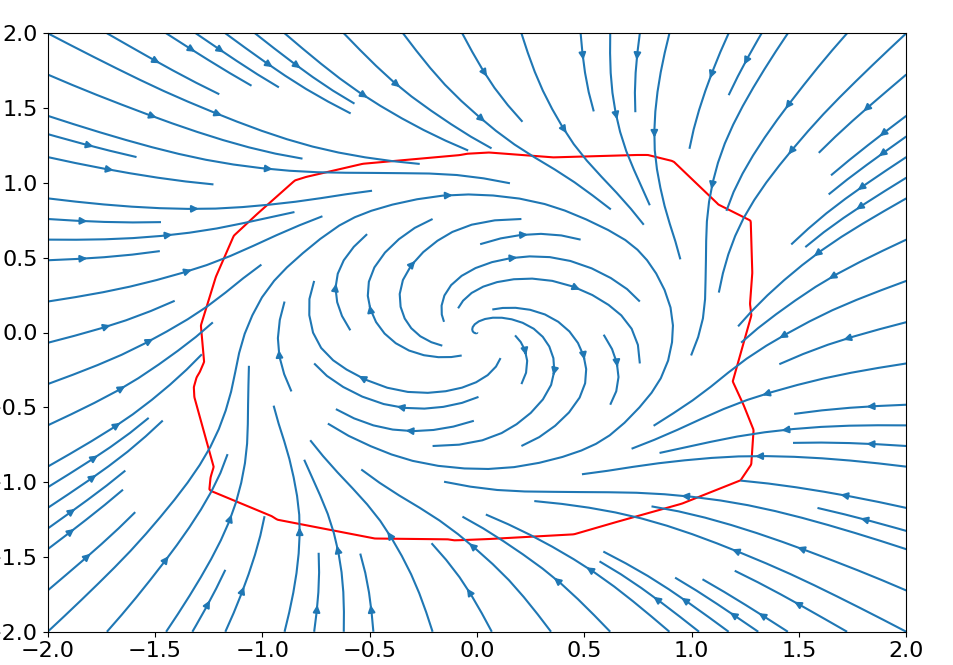}
  \subcaption*{(b)}
\end{subfigure}%
% \hspace{5mm}
% \begin{subfigure}{.20\textwidth}
%   \includegraphics[width= \textwidth]{}
%   \subcaption*{(c)}
% \end{subfigure}%
% % \hfill
% \begin{subfigure}{.20\textwidth}
%   \includegraphics[width= \textwidth]{}
%   \subcaption*{(d)}
% \end{subfigure}%
\caption{Enumerated valid linear regions
%of the parts intersecting the 0-level set 
and the vector fields of  system  Arch3. %We show the 0-level set boundary in red, valid linear regions in different colors such as blue, green, pink, yellow and cyan. (a) is a barrier certificate of the Arch3 system with 1 hidden layer with 32 neurons. (b) shows the barrier certificate is verified as trajectories starting from the 0-superlevel set remain in it forever.
}
% (a) visualizes the successful result of 1-hidden layer with 32 neurons of Arch3 system while (c) visualizes the failed one. As is shown in (b), the barrier certificate in (a) verifies the positive invariance of the domain. However in (d), the one in (c) fails to verify because there are trajectories running out of the boundary.}
\label{fig:Arch3Expo}
\end{figure}

% Next, we test our method for different cases. 

% The verification time and valid tag of barrier certificates for each case  are shown in Table \ref{tab:BFComparison}. We can see that all the cases can be verified or falsified by our method, while the method in \cite{zhang2023exact} can only verify the three of seven barrier certificates for the two-dimensional system Arch3, and these results are reliable only when  all optimal values can be obtained for the optimizations in their program, however, this is hard to be achieved and certified for nonlinear programming. 
% For other cases, they either return the ``unknown'' result or crushed down because out of memory. Fig. \ref{fig:Arch3Expo} showcases the valid linear regions enumerated by our boundary propagation algorithm. For running time, our method demonstrates superior performance in most cases, especially for high-dimensional systems, this mainly because our method just has to enumerate far less regions than the method in \cite{zhang2023exact}, as mentioned in Remark \ref{remark: compare to hinge}.
The running time
% \footnote{\laode{Each experiment was repeated five times. Since the variation in running times was negligible, only the average values are reported.}} 
and validity of barrier certificates for each case are summarized in Table \ref{tab:BFComparison}.  Our method successfully verifies or falsifies all cases, whereas the approach in \cite{zhang2023exact} can only verify five out of seven barrier certificates in the two-dimensional system Arch3. In other cases, it either returns ``unknown'' or crashes due to memory exhaustion. Furthermore, their results are reliable only when all optimal values of the underlying optimizations can be obtained, which is difficult to rigorously verify in nonlinear programming. 
 In terms of running time, 
our method consistently outperforms theirs, particularly in high-dimensional systems.
 This improvement stems from the fact that our algorithm enumerates far fewer regions than \cite{zhang2023exact}, as discussed in Remark \ref{remark: compare to hinge}. Figure \ref{fig:Arch3Expo} illustrates the valid linear regions enumerated by our boundary propagation algorithm for the first barrier certificate in system Arch3.

\begin{table}[!t]
\caption{
% Verification time of barrier certificates in seconds. 
% the column HNN represents the architecture of hidden neurons in each NN.  
% We denote total running time as $t_{our}$ and barrier certificate validity  as $\text{V}_{our}$ in our method, while using $t_Z,\text{V}_Z$ for the counterparts in \cite{zhang2023exact}. $-$ means that the program is crushed down 
% because it is out of memory.~\cmark~means that the barrier certificate is verified (\textcolor{Maroon}{$\checkmark$} means verified with the assumption that all optimal values can be obtained for the optimizations) while~\xmark~is falsified.~\namark~means that the validity of barrier certificate is unknown.
The first two columns list the system names and 
the  neural network architectures of the barrier certificates, where HNN denotes the number of neurons in each hidden layer. 
We denote the total running time and barrier certificate validity in our method as $t_{our}$ and $\text{V}_{our}$, respectively, and use $t_Z$ and $\text{V}_Z$ for the corresponding results from \cite{zhang2023exact}. A dash ($-$) indicates that the program crashed due to memory exhaustion. A~\cmark~indicates that the barrier certificate is verified (\textcolor{Maroon}{$\checkmark$} means verified under the assumption that all optimal values of the underlying optimizations can be obtained), while~\xmark~indicates falsification. A~\namark~denotes that the validity of the barrier certificate is unknown.
}

\label{tab:BFComparison}
\begin{tabular}{llllll}
\toprule
Case  & HNN  & $t_{our}$& $t_Z$ & $\text{V}_{our}$ & $\text{V}_{Z}$  \\ \cmidrule(r){1-6}
 \multirow{ 7}{*}{\thead{Arch3\\(2-d)\\(nonlinear)}}
 & 32   & 3.89 & 24.10 & \cmark  & \textcolor{Maroon}{$\checkmark$}      \\ 
 & 64   & 28.40 & 141.01 & \cmark & \textcolor{Maroon}{$\checkmark$}  \\ 
 & 96  & 94.75 & 528.36 & \cmark & \textcolor{Maroon}{$\checkmark$} \\
 & 128   & 204.78 & 198.17 & \xmark  & \namark \\
 & 32-32  & 19.75 & 41.10 & \cmark & \textcolor{Maroon}{$\checkmark$} \\
 & 64-64  & 215.99 & 283.00 & \cmark & \textcolor{Maroon}{$\checkmark$} \\
 & 96-96  & 774.27 & 1644.60 & \xmark & \namark\\
 \cmidrule(r){1-6}
     \multirow{ 7}{*}{\thead{Complex\\(3-d)\\(nonlinear)}} & 32  & 5.29 & 135.22 & \cmark & \namark     \\    
 & 64 & 92.36 & 698.35 & \cmark & \namark     \\     
 & 96  & 469.53 & 1377.00 & \cmark & \namark     \\
 & 32-32  & 79.57 & 1132.23 & \cmark & \namark     \\
 % & 64-64  & 1784.32 & - & \cmark & -     \\
 & 64-64  & 500.08 &$-$ & \xmark & $-$     \\
 & 96-96  & 19686.48 & $-$ & \cmark & $-$     \\
 \cmidrule(r){1-6}
 \multirow{7}{*}{\thead{Linear4d\\(4-d)\\(linear)}}
 & 16  &  9.61 & 286.46  & \cmark   &  \namark   \\
 & 32 &  88.12 & 1110.54  & \cmark  &  \namark   \\
 & 48  &  9992.90 & $-$ & \cmark   & $-$   \\
 & 8-8  &  7.84 & 212.19  & \cmark   &  \namark   \\
 & 12-12  &  55.05 & 593.78  & \cmark &  \namark   \\
 & 16-16  &  524.73 & 2243.71  & \xmark  &  \namark   \\
 & 24-24  &  22571.44 &$-$  & \xmark  &  $-$ \\

 \cmidrule(r){1-6}
 \multirow{4}{*}{\thead{Decay\\(6-d)\\(nonlinear)}}
 % & 10  & 2552.13 & 1540.17  & \cmark  &  \namark   \\
 & 20  & 19842.22 &$-$  & \cmark   & $-$   \\
 & 8-8  & 347.72 & 4381.28  & \cmark   &  \namark   \\ 
 & 10-10  & 1953.08 & 4035.94  & \xmark  &  \namark   \\
 & 12-12  & 31112.56 & $-$  & \xmark  &  $-$   \\ 
 \bottomrule
\end{tabular}
\end{table}

\section{Conclusion}
\label{sec: conclusion}

In this paper, we proposed a necessary and sufficient condition for verifying ReLU neural barrier certificates and developed a corresponding algorithm that supports both verification and falsification. To the best of our knowledge, this is the first method capable of falsifying ReLU neural barrier certificates, thereby filling a critical gap for their reliable application in safety-critical systems. Through numerical experiments, we demonstrated that our method achieves superior verification performance compared to existing approaches, particularly in high-dimensional systems.

\bibliography{aaai2026}

\newpage
  \clearpage

\renewcommand\thesubsection{\Alph{subsection}}

\section*{Appendix}

\subsection{
%Risks for using Heaviside function as derivative of ReLU 
Explanation for the Lack of Mathematical Rigor in the  Derivative Assumption}

% In this section, we clarify why the assumption that the derivative of the ReLU function is the Heaviside function lacks mathematical rigor for verifying positive invariance. We first character the tangent cones for the 0-superlevel set of a continuously differentiable function in following lemma.
In this section, we explain why assuming that the derivative of the ReLU function is the Heaviside function is mathematically non-rigorous for verifying positive invariance. To this end, we first characterize the tangent cones of the 0-superlevel set of a continuously differentiable function in the following lemma.
\begin{lemma}[{\cite[Theorem 2.4.7]{clarke1990optimization}}]
\label{lemma: single regular}
    Let $\mathcal{S}= \{\bm{z} \in \mathbb{R}^n: h(\bm{z}) \geq 0\}$, where  $h:\mathbb{R}^n \rightarrow \mathbb{R}$ is  continuously differentiable, and let $\bm{x}$ be a point satisfying $h(\bm{x}) =0$. Suppose that $\nabla h(\bm{x}) \neq \bm{0}$, then $\mathcal{S}$ is regular at $\bm{x}$ and 
    \begin{equation*}
        T_{\mathcal{S}}^B(\bm{x})=T_{\mathcal{S}}^C(\bm{x}) = \{\bm{v} \in \mathbb{R}^n: \nabla h(\bm{x}) \bm{v} \geq 0\}.
    \end{equation*}
\end{lemma}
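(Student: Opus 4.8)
The plan is to prove that both cones coincide with the halfspace $H \triangleq \{\bm{v} \in \mathbb{R}^n : \nabla h(\bm{x})\bm{v} \geq 0\}$ by establishing a ring of inclusions. I will use the general fact, immediate from Definition~\ref{def: tangent cone}, that $T_{\mathcal{S}}^C(\bm{x}) \subseteq T_{\mathcal{S}}^B(\bm{x})$: restricting the Clarke limit to the constant sequence $\bm{x}' \equiv \bm{x}$ forces the Bouligand liminf to vanish. It therefore suffices to show $T_{\mathcal{S}}^B(\bm{x}) \subseteq H$ and $H \subseteq T_{\mathcal{S}}^C(\bm{x})$; chaining these gives $H \subseteq T_{\mathcal{S}}^C(\bm{x}) \subseteq T_{\mathcal{S}}^B(\bm{x}) \subseteq H$, whence all three sets are equal and, since $T_{\mathcal{S}}^B(\bm{x}) = T_{\mathcal{S}}^C(\bm{x})$, the set $\mathcal{S}$ is regular at $\bm{x}$.

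For $T_{\mathcal{S}}^B(\bm{x}) \subseteq H$, I would take $\bm{v} \in T_{\mathcal{S}}^B(\bm{x})$ and use the defining liminf to extract $t_k \to 0^+$ together with nearest points $\bm{y}_k \in \mathcal{S}$ (attained since $\mathcal{S}$ is closed) satisfying $\bm{y}_k = \bm{x} + t_k \bm{v} + o(t_k)$. Writing $\bm{y}_k = \bm{x} + t_k \bm{v}_k$ with $\bm{v}_k \to \bm{v}$, the first-order expansion of $h$ at $\bm{x}$ together with $h(\bm{x}) = 0$ and $h(\bm{y}_k) \geq 0$ gives $t_k \nabla h(\bm{x})\bm{v}_k + o(t_k) \geq 0$; dividing by $t_k$ and letting $k \to \infty$ yields $\nabla h(\bm{x})\bm{v} \geq 0$, i.e. $\bm{v} \in H$.

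The heart of the argument, and the step I expect to be the main obstacle, is $H \subseteq T_{\mathcal{S}}^C(\bm{x})$, because the Clarke cone demands a \emph{uniform} limit over all base points $\bm{x}' \to \bm{x}$ in $\mathcal{S}$, not merely along rays emanating from $\bm{x}$. I would first treat a strict direction $\bm{v}$ with $\nabla h(\bm{x})\bm{v} > 0$. For such $\bm{v}$, the fundamental theorem of calculus gives $h(\bm{x}' + t\bm{v}) - h(\bm{x}') = t\int_0^1 \nabla h(\bm{x}' + st\bm{v})\bm{v}\,\mathrm{d}s$, and continuity of $\nabla h$ (here the $C^1$ hypothesis, rather than mere differentiability, is essential) makes the integrand uniformly close to $\nabla h(\bm{x})\bm{v} > 0$ once $\bm{x}'$ is near $\bm{x}$ and $t$ is small. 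Since $h(\bm{x}') \geq 0$ for $\bm{x}' \in \mathcal{S}$, this forces $h(\bm{x}' + t\bm{v}) \geq 0$, so $\bm{x}' + t\bm{v} \in \mathcal{S}$ and the distance quotient is exactly $0$; hence every strict direction lies in $T_{\mathcal{S}}^C(\bm{x})$.

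Finally I would remove the strictness. Because $\nabla h(\bm{x}) \neq \bm{0}$, the direction $\bm{w} = \nabla h(\bm{x})^\top$ satisfies $\nabla h(\bm{x})\bm{w} = \Vert \nabla h(\bm{x})\Vert^2 > 0$, so for any $\bm{v} \in H$ the perturbations $\bm{v} + \epsilon \bm{w}$ are strict directions and thus belong to $T_{\mathcal{S}}^C(\bm{x})$. Letting $\epsilon \to 0^+$ and invoking the closedness of the Clarke cone (noted after Definition~\ref{def: tangent cone}) yields $\bm{v} \in T_{\mathcal{S}}^C(\bm{x})$; equivalently, $H$ is the closure of the open halfspace already shown to lie in the closed cone $T_{\mathcal{S}}^C(\bm{x})$. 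This closes the ring of inclusions and establishes both the stated identity and regularity at $\bm{x}$.
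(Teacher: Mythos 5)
Your proof is correct, but note that the paper itself offers no proof of this lemma: it is imported verbatim as Theorem 2.4.7 of Clarke's book, so there is no in-paper argument to compare against. Your blind reconstruction is the standard one and it is sound at every step: the inclusion $T_{\mathcal{S}}^C(\bm{x}) \subseteq T_{\mathcal{S}}^B(\bm{x})$ by specializing the Clarke limit to the constant base sequence $\bm{x}' \equiv \bm{x}$ (legitimate since $h(\bm{x})=0$ puts $\bm{x}\in\mathcal{S}$); the inclusion $T_{\mathcal{S}}^B(\bm{x}) \subseteq H$ via nearest points in the closed set $\mathcal{S}$ (attained because we are in $\mathbb{R}^n$) and a first-order expansion of $h$; and the key inclusion $H \subseteq T_{\mathcal{S}}^C(\bm{x})$, handled first for strict directions via the integral form of the mean value theorem, where you correctly identify that the \emph{uniform} positivity of the integrand over base points $\bm{x}'$ near $\bm{x}$ is exactly where continuity of $\nabla h$ (not mere differentiability) and the hypothesis $\nabla h(\bm{x})\neq\bm{0}$ enter, and then for boundary directions by perturbing along $\nabla h(\bm{x})^\top$ and using closedness of the Clarke cone. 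Chaining the three inclusions gives equality of both cones with the halfspace and hence regularity. What your version buys over the paper's citation is self-containedness and an explicit accounting of which hypotheses are load-bearing; what the citation buys is brevity, since the result is classical. No gaps.
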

Combining this lemma with Theorem \ref{Nagumo’s theorem}, we obtain the following proposition.
\begin{proposition}
\label{pro: diff h}
    Let $\mathcal{S}= \{\bm{z} \in \mathbb{R}^n: h(\bm{z}) \geq 0\}$,  where  $h:\mathbb{R}^n \rightarrow \mathbb{R}$ is  continuously differentiable. Suppose that for all $\bm{x}\in \mathcal{S}$ with $h(\bm{x}) =0$, we have $\nabla h(\bm{x}) \neq \bm{0}$. Then $\mathcal{S}$ is positively invariant under the system~\eqref{system} if and only if 
    \begin{equation*}
    \nabla h(\bm{x}) \bm{f}(\bm{x}) \geq 0, \forall \bm{x} \in \{\bm{x}\in \mathbb{R}^n:h(\bm{x}) =0\}.
    \end{equation*}  
\end{proposition}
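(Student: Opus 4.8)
The plan is to reduce the statement to Theorem~\ref{Nagumo’s theorem} by substituting the explicit tangent-cone formula supplied by Lemma~\ref{lemma: single regular}. First I would observe that $\mathcal{S} = \{\bm{z} : h(\bm{z}) \geq 0\}$ is closed, since $h$ is continuous (being continuously differentiable) and $\mathcal{S}$ is the preimage of the closed set $[0,\infty)$. This places us within the hypotheses of Theorem~\ref{Nagumo’s theorem}, whose equivalence of assertions (a) and (c) tells us that positive invariance of $\mathcal{S}$ under system~\eqref{system} is equivalent to the tangency condition $\bm{f}(\bm{x}) \in T_{\mathcal{S}}^C(\bm{x})$ holding at every boundary point $\bm{x} \in \partial \mathcal{S}$.

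The key preliminary step is to identify the boundary precisely as $\partial \mathcal{S} = \{\bm{x} : h(\bm{x}) = 0\}$. The inclusion $\partial \mathcal{S} \subseteq \{h = 0\}$ follows from continuity alone: any boundary point lies in both $\overline{\mathcal{S}}$ and $\overline{\mathcal{S}^c}$, forcing $h(\bm{x}) \geq 0$ and $h(\bm{x}) \leq 0$ simultaneously. The reverse inclusion is where the nondegeneracy hypothesis enters: given $h(\bm{x}) = 0$ with $\nabla h(\bm{x}) \neq \bm{0}$, the first-order expansion $h(\bm{x} - t\nabla h(\bm{x})) = -t\|\nabla h(\bm{x})\|^2 + o(t)$ shows that points with $h < 0$ lie arbitrarily close to $\bm{x}$; hence $\bm{x} \notin \Int \mathcal{S}$ while $\bm{x} \in \mathcal{S}$, so $\bm{x} \in \partial \mathcal{S}$.

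With the boundary identified, I would invoke Lemma~\ref{lemma: single regular} at each $\bm{x} \in \partial \mathcal{S}$: since $h(\bm{x}) = 0$ and, by hypothesis, $\nabla h(\bm{x}) \neq \bm{0}$, the Clarke tangent cone is exactly $T_{\mathcal{S}}^C(\bm{x}) = \{\bm{v} : \nabla h(\bm{x}) \bm{v} \geq 0\}$. Substituting $\bm{v} = \bm{f}(\bm{x})$ shows that the membership $\bm{f}(\bm{x}) \in T_{\mathcal{S}}^C(\bm{x})$ is literally the scalar inequality $\nabla h(\bm{x}) \bm{f}(\bm{x}) \geq 0$. Chaining the equivalences---positive invariance $\Leftrightarrow$ tangency on all of $\partial \mathcal{S}$ $\Leftrightarrow$ the gradient inequality on $\{h = 0\}$---then yields the claimed biconditional in both directions simultaneously.

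The argument is essentially a routine assembly once the two quoted results are in hand; I expect the only point requiring genuine care to be the reverse boundary inclusion, where the assumption $\nabla h \neq \bm{0}$ on the zero level set is actually used. Its role is to rule out the degenerate possibility that a zero of $h$ could be an interior point of $\mathcal{S}$, which would exclude it from $\partial \mathcal{S}$ and thereby decouple the gradient inequality on $\{h=0\}$ from the tangency condition that Theorem~\ref{Nagumo’s theorem} actually certifies.
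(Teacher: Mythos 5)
Your proposal is correct and follows exactly the route the paper intends: the paper gives no detailed argument, stating only that the proposition follows by combining Lemma~\ref{lemma: single regular} with Theorem~\ref{Nagumo’s theorem}, which is precisely the assembly you carry out. Your extra care in establishing $\partial\mathcal{S}=\{\bm{x}:h(\bm{x})=0\}$ via the nondegeneracy of the gradient is a sound and welcome filling-in of the one step the paper leaves implicit.
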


% Some existing literature (e.g., \cite{zhao2022verifying}) utilize  Proposition \ref{pro: diff h} to verify the positive invariance of the 0-superlevel set of ReLU neural barrier certificate, however, ReLU neural barrier certificate does not match the assumption that ``$h$ is a continuously differentiable function'', as it is inherently non-differentiable at certain points. Even when forcibly defining its derivative as the Heaviside function, since the Heaviside function is not continuous, this still mismatches the he assumption that ``$h$ is a continuously differentiable function''. Hence, Proposition \ref{pro: diff h} can not be applied to verify the positive invariance of the 0-superlevel set of ReLU neural barrier certificate.

Some existing work (e.g., \cite{zhao2022verifying}) applies Proposition \ref{pro: diff h} to verify the positive invariance of the 0-superlevel set of a ReLU neural barrier certificate. However, this is inconsistent with the requirement that $h$ is continuously differentiable, since a ReLU neural barrier certificate is inherently non-differentiable at certain points. Even if the derivative of ReLU function is forcibly defined as the Heaviside function, the latter’s discontinuity still violates the requirement of continuous differentiability. Therefore, Proposition \ref{pro: diff h}  is not applicable to the verification of ReLU neural barrier certificates.

\subsection{Background of Polyhedron}
\begin{definition}[Affine hull]
Given a set of points $\mathcal{S} = \{\bm{x}_1, \bm{x}_2, \ldots, \bm{x}_k\} \subset \mathbb{R}^n$, the affine hull of $\mathcal{S}$ is defined as
\begin{equation*}
    \operatorname{aff}(\mathcal{S}) = \left\{ \sum_{i=1}^k \lambda_i \bm{x}_i \,\middle|\, \sum_{i=1}^k \lambda_i = 1,\ \lambda_i \in \mathbb{R} \right\}.
\end{equation*}
The dimension of an affine hull is the dimension of the corresponding linear vector space.
\end{definition}
\begin{definition}[Polyhedron, \cite{ziegler2012lectures}]
   Polyhedron is an intersection of closed halfspaces: a set $\mathcal{P} \subseteq \mathbb{R}^d$ presented in the form
    $\mathcal{P}=\left\{\boldsymbol{x} \in \mathbb{R}^d: \bm{A}\bm{x} \leq\bm{d}\right\} \quad \text { for some } \bm{A} \in \mathbb{R}^{m \times d},\bm{d} \in \mathbb{R}^m$.
The dimension of a polyhedron is the dimension of its affine hull: $\operatorname{dim}( \mathcal{P})\triangleq\operatorname{dim}(\operatorname{aff}( \mathcal{P}))$.
\end{definition}

\begin{definition}[Face]
Let $ \mathcal{P} \subseteq \mathbb{R}^d$ be a convex polyhedron. A linear inequality $\bm{c}^\top \bm{x} \leq c_0$ is valid for $ \mathcal{P}$ if it is satisfied for all points $\boldsymbol{x} \in  \mathcal{P}$. A face of $ \mathcal{P}$ is any set of the form
\begin{equation*}
  \mathcal{F}= \mathcal{P} \cap\left\{\boldsymbol{x} \in \mathbb{R}^d: \bm{c}^\top \bm{x}=c_0\right\},   
\end{equation*}
where $\bm{c}^\top \bm{x} \leq c_0$ is a valid inequality for $ \mathcal{P}$. The dimension of a face is the dimension of its affine hull: $\operatorname{dim}( \mathcal{F})\triangleq\operatorname{dim}(\operatorname{aff}( \mathcal{F}))$.
\end{definition}

Faces of a polyhedron $\mathcal{P}$ with  dimensions 0, 1, $\dim( \mathcal{P})-2$, and $\dim( \mathcal{P})-1$ are called \textit{vertices}, \textit{edges}, \textit{ridges}, and \textit{facets}, respectively. The \textit{relative interior} of $\mathcal{P}$, denoted  $\relind(\mathcal{P})$, is defined as 
as the interior of $\mathcal{P}$ with respect to the embedding of $ \mathcal{P}$ into its affine hull $\aff(\mathcal{P})$, in which $\mathcal{P}$ is full-dimensional.

% Given a $d$-dimensional polyhedron $\mathcal{P}$ in $\mathbb{R}^n$ and a hyperplane $\mathcal{H}=\{\bm{x}: \bm{w}^\top \bm{x} + b=0, \bm{w} \neq \emptyset\}$, the relationship between $\mathcal{P}$ and $\mathcal{H}$ has $4$ forms:
% \begin{itemize}
%     \item $\mathcal{H}$ has the intersection with the interior of $\mathcal{P}$, i.e., $\mathcal{H} \cap \Int \mathcal{P} \neq \emptyset$.
%     \item $\mathcal{H} \cap \mathcal{P}$ is a facet of $\mathcal{P}$.
%     \item $\mathcal{H} \cap \mathcal{P}$ is a face but not facet of $\mathcal{P}$.
%     \item $\mathcal{H}$ has no intersection with $\mathcal{P}$.
    
% \end{itemize}

\subsection{Other Definitions of Tangent Cones}
The tangent cones in Definition \ref{def: tangent cone} can also be  defined in the sense of set-valued analysis: 

\begin{equation}
    T_\mathcal{S}^B(\bm{x})\triangleq\limsup_{t \rightarrow 0^{+}}  \frac{\mathcal{S}-\bm{x}}{t},    
\end{equation}

\begin{equation}
    T_\mathcal{S}^C(\bm{x})\triangleq\liminf _{t \rightarrow 0^{+}, \bm{x}^\prime \stackrel{\mathcal{S}}{\rightarrow} \bm{x}}  \frac{\mathcal{S}-\bm{x}}{t},
\end{equation}
where $\frac{\mathcal{S}-\bm{x}}{t} \triangleq
\{\frac{\bm{z}-\bm{x}}{t}, \bm{z} \in \mathcal{S}\} $, and for details on limits of sets and set-valued maps, we refer the reader to \cite{aubin2009set}.

Equivalently, the tangent cones can be characterized in terms of sequences:
\begin{itemize}
    \item $\bm{v} \in T_{\mathcal{S}}^B(\bm{x})$ if and only if $\exists t_n \rightarrow 0^+$ and $\exists \bm{v}_n \rightarrow \bm{v}$ such that $\forall n \in \mathbb{N}^+,\bm{x}+t_n\bm{v}_n \in \mathcal{S}$.

     \item $\bm{v} \in T_{\mathcal{S}}^C(\bm{x})$ if and only if $\forall t_n \rightarrow 0^+$, $\forall \bm{x}_n \stackrel{\mathcal{S}}{\rightarrow} \bm{x}$, $\exists \bm{v}_n \rightarrow \bm{v}$ such that $\forall n \in \mathbb{N}^+,\bm{x}_n+t_n\bm{v}_n \in \mathcal{S}$.
\end{itemize}

\subsection{Detailed Explanation for Remark \ref{remark: Bouligand cannot opt}}

In Proposition \ref{lemma: linear function b cone}, we state that for any $\bm{x} \in \partial \mathcal{C}$, 
\begin{itemize}
    \item[1.] if $\bm{x} \in \Int{\mathcal{X}_i},  i = 1,\cdots,  N$, then
    \begin{equation*}
    \begin{aligned}
    T_\mathcal{C}^B(\bm{x}) = 
            \{\bm{v}\in \mathbb{R}^n: \bm{w}_i^\top \bm{v} \geq 0\};
    \end{aligned}
        \end{equation*}
        \item[2.] if $\bm{x} \in \bigcap_{k=1}^{m} {\mathcal{X}_{i_k}} $ and $\bm{x} \not\in \overline{\mathbb{R}^n \setminus\bigcup_{k=1}^{m} \mathcal{X}_{i_k}}, i_1,\cdots, i_m \in \{1,2,\cdots, N\}$, then 
   \begin{equation*}
       \begin{aligned}
        &T_\mathcal{C}^B(\bm{x}) =\bigcup_{k=1}^m\{\bm{v}\in \mathbb{R}^n:
            \bigwedge_{j \in E} \bm{A}_{i_k}(j) \bm{v} \leq 0 \wedge \bm{w}_{i_k}^\top \bm{v} \geq 0\},
       \end{aligned}
   \end{equation*} 
        where the set $E$ is defined as $E \triangleq\{j \in \{1,\cdots, \rows(\bm{A})\}: \bm{A}_{i_k}(j) \bm{x} = \bm{d}_{i_k}(j)\}$.
\end{itemize}
% Therefore the verification of the condition ``for all $\bm{x} \in \partial \mathcal{S}$, $\bm{f}(\bm{x}) \in T_{\mathcal{S}}^B(\bm{x})$'' in Theorem \ref{Nagumo’s theorem} can be divided into two parts: the verification in each interior of linear regions containing the boundary (case 1) and their possible combinations (case 2). 
Thus, verifying the condition ``for all $\bm{x} \in \partial \mathcal{S}$, $\bm{f}(\bm{x}) \in T_{\mathcal{S}}^B(\bm{x})$'' in Theorem~\ref{Nagumo’s theorem} can be divided into two parts: (1) verification within the interior of each linear region intersecting the boundary, and (2) verification at their intersections.

To verify the condition ``for all $\bm{x} \in \partial \mathcal{S} \cap \Int{\mathcal{X}_i}$, $\bm{f}(\bm{x}) \in T_{\mathcal{S}}^B(\bm{x})$'', the corresponding ``optimization formulation'' should be 
\begin{equation*}   
\begin{split}
     &\min_{\bm{x}} \bm{w}_i^\top \bm{f}(\bm{x})  \\
     \text{s.t.}~& \begin{cases}
              \bm{w}_i^\top \bm{x} + b(\mathscr{C}) =0,\\
        \bm{A}_i \bm{x} < \bm{d}_i.
     \end{cases}
    \end{split}
\end{equation*}
However, the strict inequalities ``$\bm{A}_i \bm{x} < \bm{d}_i$'' are not permitted  in standard optimization formulations (see \url{https://yalmip.github.io/inside/strictinequalities/}). To make the optimization  problem computationally tractable, these constraints  must be relaxed to  non-strict ones: $\bm{A}_i \bm{x} \leq \bm{d}_i$. 
% However, this will make the necessary and sufficient condition using Bouligand tangent become only sufficient, thereby only used to verify but not falsify.
This relaxation, however, transforms  the condition based on Bouligand tangent cone from necessary and sufficient to merely sufficient. As a result, the optimization method in \cite{zhang2023exact} can only verify, but not falsify, the positive invariance.

\subsection{Verifying and Falsifying $\mathcal{S}_I \cap \partial \mathcal{C} =\emptyset$ via SMT and Optimization}

We can translate \eqref{eq:verificaiton initial} into an SMT problem that determines whether the following quantifier-free formula with Disjunctive Normal Form (DNF) is satisfiable (SAT) or not (UNSAT).

\begin{equation}
% \tag{CSMT}
\label{eq:verificaiton initial SMT} 
\begin{aligned}
        \bigvee_{\mathscr{C}\in \mathcal{A}} 
             &\left(\bm{w}(\mathscr{C})^\top \bm{x} + b(\mathscr{C}) =0 \wedge
         \bm{x} \in \mathcal{X}(\mathscr{C})  \wedge h_I(\bm{x}) > 0 \right).   
\end{aligned}
\end{equation}
If the SMT problem \eqref{eq:verificaiton initial SMT} is proven to be UNSAT, then the Boolean value of \eqref{eq:verificaiton initial} is true,  implying that  $\mathcal{S}_I \cap \partial \mathcal{C} =\emptyset$; In the opposite, when \eqref{eq:verificaiton initial SMT} is SAT, then  the Boolean value of \eqref{eq:verificaiton initial} is false,  implying that  $\mathcal{S}_I \cap \partial \mathcal{C} \neq \emptyset$.

The quantified formula \eqref{eq:verificaiton initial} can also be described as a series of optimization problems: 
for each  valid linear region $\mathcal{X}(\mathscr{C}), \mathscr{C} \in \mathcal{A}$, we solve 
% \ref{eq: nonlinear verification for initial} in the Appendix \ref{app: Optimization Programmings}.
\begin{equation}   
                 \label{eq: nonlinear verification for initial}
\begin{split}
     &\max_{\bm{x}} h_I(\bm{x})  \\
     \text{s.t.}~& \begin{cases}
              \bm{w}(\mathscr{C})^\top \bm{x} + b(\mathscr{C}) =0,\\
         \bm{x} \in \mathcal{X}(\mathscr{C}).
     \end{cases}
    \end{split}
\end{equation}
If all  the optimal values of \eqref{eq: nonlinear verification for initial} are non-positive for all $\mathscr{C} \in \mathcal{A}$, then $\mathcal{S}_I \cap \partial \mathcal{C} =\emptyset$.
If there exists a valid linear region $\mathcal{X}(\mathscr{C}), \mathscr{C} \in \mathcal{A}$  such that \eqref{eq: nonlinear verification for initial}  yields a positive feasible value,  then $\mathcal{S}_I \cap \partial \mathcal{C} \neq\emptyset$.

The verifying and falsifying of the condition $\mathcal{S}_U \cap \partial \mathcal{C} =\emptyset$ are similar to the ones for $\mathcal{S}_I \cap \partial \mathcal{C} =\emptyset$, just simply replacing $h_I$ with $h_U$ in \eqref{eq:verificaiton initial SMT} and \eqref{eq: nonlinear verification for initial}.

\subsection{Auxiliary Lemmas and Proofs}
\begin{lemma}
\label{lemma: open ball tangent cone}
    Let $\mathcal{S}$ be a closed subset of the Banach space $X$ with norm $\Vert \cdot \Vert$, for all $\epsilon >0$,
    \begin{equation}
    \begin{aligned}
        T_{\mathcal{S}}^B(\bm{x}) = T_{\mathcal{S} \cap B(\bm{x},\epsilon)}^B(\bm{x}).
    \end{aligned}
    \end{equation}
\end{lemma}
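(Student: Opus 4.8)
The plan is to establish this as a \emph{locality} property of the Bouligand tangent cone, proving the two inclusions separately after disposing of a degenerate case. Write $\mathcal{S}_\epsilon := \mathcal{S}\cap B(\bm{x},\epsilon)$; since $B(\bm{x},\epsilon)$ is a closed ball and $\mathcal{S}$ is closed, $\mathcal{S}_\epsilon$ is closed, so both $T_{\mathcal{S}}^B(\bm{x})$ and $T_{\mathcal{S}_\epsilon}^B(\bm{x})$ are well defined. First I would handle the case $\bm{x}\in\mathcal{S}^c$: then $d(\bm{x},\mathcal{S})>0$, and the reverse triangle inequality gives $d(\bm{x}+t\bm{v},\mathcal{S})\geq d(\bm{x},\mathcal{S})-t\Vert\bm{v}\Vert$, so $d(\bm{x}+t\bm{v},\mathcal{S})/t\to+\infty$ as $t\to 0^+$ for every $\bm{v}$, whence $T_{\mathcal{S}}^B(\bm{x})=\emptyset$; since $\mathcal{S}_\epsilon\subseteq\mathcal{S}$ forces $\bm{x}\in\mathcal{S}_\epsilon^c$ as well, $T_{\mathcal{S}_\epsilon}^B(\bm{x})=\emptyset$ too and the identity is trivial. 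I may therefore assume $\bm{x}\in\mathcal{S}$, and then $\bm{x}\in\mathcal{S}_\epsilon$ as well.

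The inclusion $T_{\mathcal{S}_\epsilon}^B(\bm{x})\subseteq T_{\mathcal{S}}^B(\bm{x})$ is immediate from monotonicity of the distance function under set inclusion: because $\mathcal{S}_\epsilon\subseteq\mathcal{S}$ we have $d(\bm{y},\mathcal{S})\leq d(\bm{y},\mathcal{S}_\epsilon)$ for every $\bm{y}$, hence $0\leq \liminf_{t\to 0^+} d(\bm{x}+t\bm{v},\mathcal{S})/t \leq \liminf_{t\to 0^+} d(\bm{x}+t\bm{v},\mathcal{S}_\epsilon)/t$. If $\bm{v}$ lies in the right-hand cone, the upper bound is $0$, squeezing the left-hand $\liminf$ to $0$, so $\bm{v}\in T_{\mathcal{S}}^B(\bm{x})$.

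For the reverse inclusion $T_{\mathcal{S}}^B(\bm{x})\subseteq T_{\mathcal{S}_\epsilon}^B(\bm{x})$, the key claim I would establish is that the global and localized distances \emph{agree for all sufficiently small} $t$: for $0<t<\epsilon/\bigl(2(\Vert\bm{v}\Vert+1)\bigr)$ one has $d(\bm{x}+t\bm{v},\mathcal{S})=d(\bm{x}+t\bm{v},\mathcal{S}_\epsilon)$. Indeed, since $\bm{x}\in\mathcal{S}_\epsilon$ the localized distance is at most $\Vert t\bm{v}\Vert=t\Vert\bm{v}\Vert<\epsilon/2$, whereas any competitor $\bm{y}\in\mathcal{S}$ with $\Vert\bm{y}-\bm{x}\Vert>\epsilon$ satisfies $\Vert\bm{x}+t\bm{v}-\bm{y}\Vert\geq\Vert\bm{y}-\bm{x}\Vert-t\Vert\bm{v}\Vert>\epsilon/2$ by the reverse triangle inequality; hence every near-optimal point lies in $B(\bm{x},\epsilon)$, i.e.\ in $\mathcal{S}_\epsilon$, and the two infima coincide. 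Substituting this equality into the defining $\liminf$ shows that $\bm{v}\in T_{\mathcal{S}}^B(\bm{x})$ forces $\liminf_{t\to 0^+} d(\bm{x}+t\bm{v},\mathcal{S}_\epsilon)/t=0$, i.e.\ $\bm{v}\in T_{\mathcal{S}_\epsilon}^B(\bm{x})$. Equivalently, using the sequential characterization recorded earlier, from $t_n\to 0^+$ and $\bm{v}_n\to\bm{v}$ with $\bm{x}+t_n\bm{v}_n\in\mathcal{S}$ one notes $\Vert t_n\bm{v}_n\Vert\to 0$, so the iterates eventually enter $B(\bm{x},\epsilon)$ and hence $\mathcal{S}_\epsilon$; passing to this tail witnesses $\bm{v}\in T_{\mathcal{S}_\epsilon}^B(\bm{x})$.

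The main obstacle is exactly this reverse-inclusion step: one must rule out that the nearest points of $\mathcal{S}$ to the perturbed point $\bm{x}+t\bm{v}$ escape the ball $B(\bm{x},\epsilon)$. The argument hinges on two facts working together, namely that $\bm{x}\in\mathcal{S}$ supplies the cheap upper bound $t\Vert\bm{v}\Vert$ on the global distance, and that the reverse triangle inequality makes points with $\Vert\bm{y}-\bm{x}\Vert>\epsilon$ strictly more expensive once $t$ is small. Everything else reduces to routine bookkeeping with the $\liminf$.
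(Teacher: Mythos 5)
Your proof is correct and follows essentially the same route as the paper's: the easy inclusion via monotonicity of the tangent cone (equivalently, of the distance function) under set inclusion, and the reverse inclusion by observing that for small $t$ nothing outside $B(\bm{x},\epsilon)$ can matter — your "the two distances coincide for small $t$" claim and your sequential tail argument are both just cleaner packagings of the paper's patched-sequence construction. As a bonus, your version sidesteps two small blemishes in the paper's write-up (the division by $\Vert\bm{v}\Vert$, which breaks when $\bm{v}=\bm{0}$, and the unaddressed case $\bm{x}\notin\mathcal{S}$).
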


\begin{proof}
    $T_{\mathcal{S}}^B(\bm{x}) \supset T_{\mathcal{S} \cap B(\bm{x},\epsilon)}^B(\bm{x})$ (if $\mathcal{K} \subset \mathcal{L}$ and $\bm{x} \in \overline{\mathcal{K}}$, then $T_{\mathcal{K}}^B(\bm{x}) \subset T_{\mathcal{L}}^B(\bm{x})$, see \cite[Chapter 4]{aubin2009set}). Next we prove $T_{\mathcal{S}}^B(\bm{x}) \subset T_{\mathcal{S} \cap B(\bm{x},\epsilon)}^B(\bm{x})$.

    %For $\bm{v} \in T_{\mathcal{S}}^b(\bm{x})$, $\forall t_n \rightarrow 0^+$, $\exists \bm{v}_n \rightarrow \bm{v}$ such that $\forall n \in \mathbb{N}^+, \bm{x}+t_n\bm{v}_n \in \mathcal{S}$, 
    
    For $\bm{v} \in T_{\mathcal{S}}^B(\bm{x})$, there exists a sequence $\{t_n\}$ satisfying $t_n \rightarrow 0^+$, then $\exists N_1 >0$ such that $\forall n\geq N_1, t_n \leq \frac{\epsilon}{4\Vert \bm{v}\Vert}$, since $\bm{v} \in T_{\mathcal{S}}^B(\bm{x})$, we can find $\bm{v}_n \rightarrow \bm{v}$ such that $\forall n \in \mathbb{N}^+, \bm{x}+t_n\bm{v}_n \in \mathcal{S}$. For sequence $\{\bm{v}_n\}$, $\exists N_2 >0$ such that $\forall n \geq N_2, \Vert \bm{v}_n -\bm{v} \Vert \leq \Vert \bm{v} \Vert, \Vert \bm{v}_n \Vert \leq 2\Vert \bm{v} \Vert$. Let $N=\max\{N_1,N_2\}$, set a sequence
    \begin{equation*}
        \bm{v}^\prime_n = \begin{cases}
            \frac{t_N}{t_n} \bm{v}_N, 1 \leq n \leq N\\
            \bm{v}_n, n > N,
        \end{cases}
    \end{equation*} 
    then $\forall n \in \mathbb{N}^+, \Vert \bm{x}+t_n \bm{v}^\prime_n - \bm{x} \Vert \leq \frac{\epsilon}{2}$, thus $\forall n \in \mathbb{N}^+, \bm{x}+t_n \bm{v}^\prime_n \in \mathcal{S} \cap B(\bm{x},\epsilon)$, which means $\bm{v} \in T_{\mathcal{S} \cap B(\bm{x},\epsilon)}^B(\bm{x})$.
\end{proof}

\begin{lemma}[{\cite[Theorem 6.12 in Chapter 3]{clarke2008nonsmooth}}]
\label{lemma: relationship between tangent cones}
    Let $\mathcal{S}$ be a closed subset of the Banach space $X$, a vector $\bm{v}$ which lies in $T_\mathcal{S}^C(\bm{x})$ is one which lies in $T_\mathcal{S}^B\left(\bm{x}^{\prime}\right)$ for all $\bm{x}^{\prime}$ near $x$, i.e., $\bm{v} \in T_\mathcal{S}^C(\bm{x})$ iff
\begin{equation}
  \limsup _{\bm{x}^\prime \stackrel{\mathcal{S}}{\rightarrow} \bm{x}} d\left(\bm{v}, T_\mathcal{S}^B\left(\bm{x}^{\prime}\right)\right)=0,  
\end{equation}
where $\bm{x}^\prime \stackrel{\mathcal{S}}{\rightarrow} \bm{x}$ means the convergence is in $\mathcal{S}$.
\end{lemma}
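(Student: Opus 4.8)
The plan is to recast the claimed equivalence as the set identity $T_\mathcal{S}^C(\bm x) = \liminf_{\bm x' \stackrel{\mathcal{S}}{\rightarrow} \bm x} T_\mathcal{S}^B(\bm x')$, where the right-hand side is the Kuratowski lower limit of the Bouligand cones at nearby points of $\mathcal S$. Using the sequential characterizations recalled in the appendix, the condition $\limsup_{\bm x' \stackrel{\mathcal{S}}{\rightarrow}\bm x} d(\bm v, T_\mathcal{S}^B(\bm x')) = 0$ is precisely the assertion that for every sequence $\bm x_n \stackrel{\mathcal{S}}{\rightarrow}\bm x$ there exist $\bm w_n \in T_\mathcal{S}^B(\bm x_n)$ with $\bm w_n \to \bm v$, i.e.\ that $\bm v$ belongs to that lower limit. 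It therefore suffices to prove the two inclusions. Throughout I specialize to $X=\mathbb R^n$, the setting actually used in the paper, which permits compactness arguments and metric projections; the general Banach case is treated at the end.

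For the forward inclusion ($\bm v \in T_\mathcal{S}^C(\bm x) \Rightarrow$ limsup $=0$), fix $\epsilon>0$. The definition of $T_\mathcal{S}^C(\bm x)$ furnishes $\delta>0$ such that $d(\bm x'+t\bm v,\mathcal S) \le \epsilon\, t$ for all $\bm x' \in \mathcal S \cap B(\bm x,\delta)$ and all $t\in(0,\delta]$. Fixing such an $\bm x'$ and a sequence $t_n \downarrow 0$ inside this window, I choose $\bm y_n \in \mathcal S$ with $\|\bm y_n-(\bm x'+t_n\bm v)\|\le \epsilon t_n + t_n^2$ and set $\bm w_n = (\bm y_n-\bm x')/t_n$, so that $\bm x'+t_n\bm w_n=\bm y_n\in\mathcal S$ and $\|\bm w_n-\bm v\|\le \epsilon+t_n$. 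The sequence $\{\bm w_n\}$ is bounded, hence a subsequence $\bm w_{n_k}\to\bm w$ exists with $\|\bm w-\bm v\|\le\epsilon$; the sequential characterization of the Bouligand cone along this subsequence gives $\bm w \in T_\mathcal{S}^B(\bm x')$, so $d(\bm v, T_\mathcal{S}^B(\bm x'))\le\epsilon$. Since this holds for every $\bm x'\in\mathcal S\cap B(\bm x,\delta)$ and every $\epsilon>0$, the limsup vanishes.

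For the reverse inclusion (limsup $=0 \Rightarrow \bm v\in T_\mathcal{S}^C(\bm x)$) I argue by contradiction. If $\bm v\notin T_\mathcal{S}^C(\bm x)$, the failure of the Clarke limit (whose integrand is nonnegative) yields $\alpha>0$, points $\bm x_n\stackrel{\mathcal{S}}{\rightarrow}\bm x$ and scales $t_n\downarrow 0$ with $d(\bm x_n+t_n\bm v,\mathcal S)\ge\alpha t_n$. The goal is to produce, for each $n$, a point $\bm z_n\in\mathcal S$ with $\bm z_n\to\bm x$ and a unit proximal normal $\bm\zeta_n$ to $\mathcal S$ at $\bm z_n$ satisfying $\bm\zeta_n\cdot\bm v\ge c$ for a fixed $c>0$. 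Every proximal normal annihilates the contingent cone, i.e.\ $\bm\zeta_n\cdot\bm w\le 0$ for all $\bm w\in T_\mathcal{S}^B(\bm z_n)$ (feed a contingent sequence into the proximal-normal inequality and divide by the scale). Hence, by Cauchy--Schwarz with $\|\bm\zeta_n\|=1$, $d(\bm v,T_\mathcal{S}^B(\bm z_n))=\inf_{\bm w}\|\bm v-\bm w\|\ge \inf_{\bm w}\bm\zeta_n\cdot(\bm v-\bm w)\ge \bm\zeta_n\cdot\bm v\ge c$, contradicting $\limsup_{\bm x'\stackrel{\mathcal{S}}{\rightarrow}\bm x} d(\bm v,T_\mathcal{S}^B(\bm x'))=0$ since $\bm z_n\to\bm x$.

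The delicate step, and the main obstacle, is producing $\bm z_n$ and $\bm\zeta_n$ with $\bm\zeta_n\cdot\bm v$ bounded below. The naive choice of projecting the far endpoint $\bm x_n+t_n\bm v$ onto $\mathcal S$ fails: its projection can lie a distance of order $t_n\|\bm v\|$ from $\bm x_n$, so the quadratic error in the proximal-normal inequality $\bm\zeta_n\cdot(\bm x_n-\bm y_n)\le \tfrac{1}{2 r_n}\|\bm x_n-\bm y_n\|^2$ is of the same order as the leading term and the lower bound on $\bm\zeta_n\cdot\bm v$ degenerates. The remedy is to localize: along the segment $s\mapsto \bm x_n+s\bm v$ one selects, via a first-exit/maximization argument on the $1$-Lipschitz map $s\mapsto d(\bm x_n+s\bm v,\mathcal S)$, a point where the distance to $\mathcal S$ is still small (making the quadratic correction negligible) yet is increasing in direction $\bm v$ at rate at least $\alpha$, and projects that point instead. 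In $\mathbb R^n$ nearest points exist by closedness of $\mathcal S$ and the estimates are elementary; in a general Banach space nearest points may not exist, and one replaces exact projection by approximate proximal normals obtained through Ekeland's variational principle (or the Borwein--Preiss smooth variational principle)---which is the only place finite-dimensionality is genuinely invoked.
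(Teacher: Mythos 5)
The paper never proves Lemma~\ref{lemma: relationship between tangent cones}: it is imported as a black box, cited as Theorem~6.12 in Chapter~3 of \cite{clarke2008nonsmooth}, and invoked only inside the proof of Proposition~\ref{lemma: linear function condition}. There is therefore no in-paper proof to compare against; what can be judged is whether your reconstruction is sound. In $\mathbb{R}^n$ it essentially is, and it follows the classical Cornet--Penot/proximal-normal route. The forward direction is complete as written: the uniformity in the Clarke definition gives $d(\bm{x}'+t\bm{v},\mathcal{S})\le \epsilon t$ for all $\bm{x}'\in\mathcal{S}\cap B(\bm{x},\delta)$ and $t\in(0,\delta]$, the rescaled points $\bm{w}_n$ are bounded, and compactness plus the sequential characterization of $T_{\mathcal{S}}^B$ yields $d(\bm{v},T_{\mathcal{S}}^B(\bm{x}'))\le\epsilon$. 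The reverse skeleton (proximal normals annihilate the contingent cone; Cauchy--Schwarz then forces $d(\bm{v},T_{\mathcal{S}}^B(\bm{z}_n))\ge \bm{\zeta}_n\cdot\bm{v}\ge c$) is also correct, and the step you only sketch does close along the lines you indicate: with $g(s)=d(\bm{x}_n+s\bm{v},\mathcal{S})$ (normalize $\Vert\bm{v}\Vert=1$; the case $\bm{v}=\bm{0}$ is trivial), any $s_0$ with $g(s_0)>0$ and nearest point $\bm{z}$ gives the Dini bound $\limsup_{h\to 0^+}\bigl(g(s_0+h)-g(s_0)\bigr)/h\le \bm{\zeta}\cdot\bm{v}$ where $\bm{\zeta}=(\bm{x}_n+s_0\bm{v}-\bm{z})/g(s_0)$; since $g(t_n)-g(0)\ge\alpha t_n$, restricting to $[s_1+\delta,t_n]$ with $s_1=\sup\{s:g(s)=0\}$ and $\delta$ small, the mean-value inequality for Dini derivatives produces $s_0$ with $g(s_0)>0$ and $D^+g(s_0)\ge\alpha/2$, hence $\bm{\zeta}_n\cdot\bm{v}\ge\alpha/2$ while $\bm{z}_n\to\bm{x}$. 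Note that only positivity of $g(s_0)$ is needed, not smallness: the correction term is $O\bigl(h^2/g(s_0)\bigr)$ and vanishes as $h\to0^+$, so the degeneration you worry about disappears once you compare through Dini derivatives rather than against the far endpoint.

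One genuine inaccuracy should be flagged: your closing claim that the metric projection is ``the only place finite-dimensionality is genuinely invoked'' contradicts your own forward-direction argument, which extracts a convergent subsequence from the bounded sequence $\{\bm{w}_n\}$. That compactness is a second, independent use of $\dim X<\infty$, and it is the more serious one: Ekeland-type variational principles can replace exact projections in the reverse direction, but they do not repair the forward inclusion, and in general Banach spaces it is only the inclusion $\{\bm{v}:\limsup_{\bm{x}'\stackrel{\mathcal{S}}{\rightarrow}\bm{x}}d(\bm{v},T_{\mathcal{S}}^B(\bm{x}'))=0\}\subseteq T_{\mathcal{S}}^C(\bm{x})$ (Treiman's theorem) that is known to survive, equality being a finite-dimensional result. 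So your restriction to $\mathbb{R}^n$ --- which is all the paper actually needs, since the lemma is applied to polyhedral partitions of $\mathbb{R}^n$ --- is the right call, but the final sentence asserting that the general Banach case follows by variational principles overreaches.
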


% Next lemma gives the representation of the Clarke tangent cone at the boundary of the set $\mathcal{C}=\{\bm{x}: h(\bm{x}) \geq 0\}$, where $h$ is a continuous piecewise linear function.
\begin{lemma}
\label{lemma: tangent cone of single cone}
    For a polyhedral cone $\mathcal{P}=\{\bm{z} \in \mathbb{R}^n: \bm{A}\bm{z} \leq \bm{d}
    \}$, where the apex $\bm{x}$ satisfies $\bm{A}\bm{x} = \bm{d}$, the tangent cones to $\mathcal{P}$ at the apex $\bm{x}$ are 
    \begin{equation}
        T_\mathcal{P}^B(\bm{x}) =T_\mathcal{P}^C(\bm{x}) = \{ \bm{v} \in \mathbb{R}^n: \bm{A}\bm{v} \leq \bm{0}\}.
    \end{equation}
\end{lemma}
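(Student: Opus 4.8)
The plan is to reduce both tangent-cone computations to the elementary observation that, because every constraint is active at the apex ($\bm{A}\bm{x} = \bm{d}$), the translated set $\mathcal{P} - \bm{x}$ is exactly the closed cone $K \triangleq \{\bm{v} \in \mathbb{R}^n : \bm{A}\bm{v} \leq \bm{0}\}$. Indeed, $\bm{z} \in \mathcal{P}$ iff $\bm{A}(\bm{z} - \bm{x}) = \bm{A}\bm{z} - \bm{d} \leq \bm{0}$, so $\mathcal{P} = \bm{x} + K$. This $K$ is the candidate answer for both cones, and I would verify it using the sequential characterizations of the tangent cones recorded in the appendix.

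First I would compute the Bouligand cone. For the inclusion $K \subseteq T_\mathcal{P}^B(\bm{x})$, take any $\bm{v}$ with $\bm{A}\bm{v} \leq \bm{0}$; then for every $t > 0$ we have $\bm{A}(\bm{x} + t\bm{v}) = \bm{d} + t\bm{A}\bm{v} \leq \bm{d}$, so $\bm{x} + t\bm{v} \in \mathcal{P}$, and choosing any $t_n \to 0^+$ with the constant sequence $\bm{v}_n = \bm{v}$ exhibits $\bm{v} \in T_\mathcal{P}^B(\bm{x})$. Conversely, if $\bm{v} \in T_\mathcal{P}^B(\bm{x})$ there are $t_n \to 0^+$ and $\bm{v}_n \to \bm{v}$ with $\bm{x} + t_n \bm{v}_n \in \mathcal{P}$; then $t_n \bm{A}\bm{v}_n = \bm{A}(\bm{x} + t_n \bm{v}_n) - \bm{d} \leq \bm{0}$, and dividing by $t_n > 0$ gives $\bm{A}\bm{v}_n \leq \bm{0}$. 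Letting $n \to \infty$ and using continuity of $\bm{v} \mapsto \bm{A}\bm{v}$ together with closedness of the nonpositive orthant yields $\bm{A}\bm{v} \leq \bm{0}$. Hence $T_\mathcal{P}^B(\bm{x}) = K$.

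Next I would handle the Clarke cone. The Clarke tangent cone is always contained in the Bouligand cone (immediate from Lemma~\ref{lemma: relationship between tangent cones} by taking the constant approach sequence $\bm{x}' = \bm{x}$ and using that $T_\mathcal{P}^B(\bm{x})$ is closed), so it suffices to prove $K \subseteq T_\mathcal{P}^C(\bm{x})$. Fix $\bm{v}$ with $\bm{A}\bm{v} \leq \bm{0}$ and take arbitrary sequences $t_n \to 0^+$ and $\bm{x}_n \stackrel{\mathcal{P}}{\rightarrow} \bm{x}$. Choosing again the constant sequence $\bm{v}_n = \bm{v}$, I would check $\bm{A}(\bm{x}_n + t_n \bm{v}) = \bm{A}\bm{x}_n + t_n \bm{A}\bm{v} \leq \bm{d}$, where $\bm{A}\bm{x}_n \leq \bm{d}$ holds because $\bm{x}_n \in \mathcal{P}$ and $t_n \bm{A}\bm{v} \leq \bm{0}$ because $t_n > 0$. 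Thus $\bm{x}_n + t_n \bm{v} \in \mathcal{P}$ for every $n$, which is precisely the condition defining membership in $T_\mathcal{P}^C(\bm{x})$. Combining the inclusions gives $K \subseteq T_\mathcal{P}^C(\bm{x}) \subseteq T_\mathcal{P}^B(\bm{x}) = K$, so all three sets coincide.

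The substance of the argument lies entirely in the Clarke-cone direction: the defining condition for $T^C$ quantifies over \emph{every} approach sequence $\bm{x}_n \to \bm{x}$ within $\mathcal{P}$, and a single fixed displacement $\bm{v}$ must remain feasible from each base point $\bm{x}_n$. What makes this work---and what I would highlight as the key step---is that feasibility of $\bm{v}$ uses only $\bm{A}\bm{x}_n \leq \bm{d}$, which holds at every point of $\mathcal{P}$, rather than the active condition $\bm{A}\bm{x} = \bm{d}$ available only at the apex; equivalently, $\mathcal{P} = \bm{x} + K$ is a translate of a convex cone, so the recession direction $\bm{v}$ is admissible from anywhere in $\mathcal{P}$. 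No genuine obstacle arises beyond keeping the quantifier order in the two sequential characterizations straight.
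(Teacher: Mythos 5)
Your proof is correct. For the Bouligand cone you and the paper make the same essential observation---since every constraint is active at the apex, $\mathcal{P}=\bm{x}+K$ with $K=\{\bm{v}:\bm{A}\bm{v}\leq\bm{0}\}$, so $(\mathcal{P}-\bm{x})/t=K$ for every $t>0$---though the paper phrases this through the set-valued limit $\limsup_{t\to 0^+}(\mathcal{P}-\bm{x})/t$ while you verify both inclusions with the sequential characterization. The genuine divergence is in the Clarke cone: the paper dispatches it in one line by citing the general fact that for convex sets the Clarke and Bouligand tangent cones coincide (\cite[Proposition 4.2.1]{aubin2009set}), whereas you prove $K\subseteq T_\mathcal{P}^C(\bm{x})$ directly by checking that a recession direction $\bm{v}$ of $\mathcal{P}$ remains feasible from every base point $\bm{x}_n\in\mathcal{P}$, and combine with the standard inclusion $T_\mathcal{P}^C(\bm{x})\subseteq T_\mathcal{P}^B(\bm{x})$. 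Your route is more self-contained and makes visible exactly where the argument uses only $\bm{A}\bm{x}_n\leq\bm{d}$ rather than the active condition at the apex; the paper's route is shorter but leans on an external convexity result. One minor caveat: your justification of $T_\mathcal{P}^C(\bm{x})\subseteq T_\mathcal{P}^B(\bm{x})$ via Lemma~\ref{lemma: relationship between tangent cones} with the ``constant approach sequence'' is slightly loose, since the $\limsup$ there is typically taken over $\bm{x}^\prime\neq\bm{x}$; the inclusion is cleaner from the sequential characterization of $T^C$ in the appendix (specialize the universally quantified sequence $\bm{x}_n$ to the constant sequence $\bm{x}$), and it is in any case a standard fact.
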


\begin{proof}
        $T_\mathcal{P}^B(\bm{x}) = \limsup_{t \rightarrow 0^{+}}  \frac{\mathcal{P}-\bm{x}}{t}$, set $ t\bm{v} =\bm{z}-\bm{x}$, then $\forall \bm{z}$ with $\bm{A}\bm{z} \leq \bm{d}$, $\bm{v}$ satisfies $\bm{A}(\bm{x}+t\bm{v})\leq \bm{d} \Leftrightarrow t\bm{A}\bm{v}\leq \bm{0}$. Then $\limsup_{t \rightarrow 0^{+}}  \frac{\mathcal{P}-\bm{x}}{t} = \limsup_{t \rightarrow 0^{+}} \{ \bm{v} \in \mathbb{R}^n: t\bm{A}\bm{v} \leq \bm{0}\}= \limsup_{t \rightarrow 0^{+}} \{ \bm{v} \in \mathbb{R}^n: \bm{A}\bm{v} \leq \bm{0}\} =\{ \bm{v} \in \mathbb{R}^n: \bm{A}\bm{v} \leq \bm{0}\}$. 

        Since $\mathcal{P}$ is a convex set, then $T_\mathcal{P}^C(\bm{x})=T_\mathcal{P}^B(\bm{x}) = \{ \bm{v} \in \mathbb{R}^n: \bm{A}\bm{v} \leq \bm{0}\}$ (according to \cite[Proposition 4.2.1]{aubin2009set}).
\end{proof}

\begin{lemma}
\label{lemma: tangent cone of polyhedral cone}
    For polyhedral cones $\mathcal{P}_1=\{\bm{z} \in \mathbb{R}^n: \bm{A}_1\bm{z} \leq \bm{d}_1
    \}, \cdots, \mathcal{P}_i=\{\bm{z} \in \mathbb{R}^n: \bm{A}_i\bm{z} \leq \bm{d}_{i}
    \}, \cdots, \mathcal{P}_n=\{\bm{z} \in \mathbb{R}^n: \bm{A}_n\bm{z} \leq \bm{d}_n
    \}$, they share a same apex $\bm{x}$ satisfying $\bm{A}_i\bm{z} = \bm{d}_{i}, \forall i =1,\cdots,n$, the Bouligand tangent cone to $\bigcup_{i=1}^n \mathcal{P}_i$ at $\bm{x}$ is \begin{equation}
        T_{\bigcup_{i=1}^n \mathcal{P}_i}^B(\bm{x})  = \bigcup_{i=1}^n\{ \bm{v} \in \mathbb{R}^n: \bm{A}_i\bm{v} \leq \bm{0}\}.
    \end{equation}
\end{lemma}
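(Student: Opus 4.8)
The plan is to prove the identity in two stages: first establish the purely set-theoretic fact that, for a \emph{finite} union, the Bouligand (contingent) cone commutes with the union, i.e. $T_{\bigcup_{i=1}^n \mathcal{P}_i}^B(\bm{x}) = \bigcup_{i=1}^n T_{\mathcal{P}_i}^B(\bm{x})$; then substitute the single-cone computation from Lemma~\ref{lemma: tangent cone of single cone}, which (since $\bm{x}$ is a shared apex with $\bm{A}_i\bm{x}=\bm{d}_i$) gives $T_{\mathcal{P}_i}^B(\bm{x}) = \{\bm{v}\in\mathbb{R}^n : \bm{A}_i\bm{v}\leq\bm{0}\}$ for each $i$. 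Throughout I would use the sequential characterization of the contingent cone recorded in the appendix: $\bm{v}\in T_\mathcal{S}^B(\bm{x})$ iff there exist $t_n\to 0^+$ and $\bm{v}_n\to\bm{v}$ with $\bm{x}+t_n\bm{v}_n\in\mathcal{S}$ for all $n$.

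For the inclusion $\bigcup_{i=1}^n T_{\mathcal{P}_i}^B(\bm{x}) \subseteq T_{\bigcup_{i=1}^n \mathcal{P}_i}^B(\bm{x})$, I would invoke monotonicity of the contingent cone: since $\mathcal{P}_i \subseteq \bigcup_{j=1}^n \mathcal{P}_j$ and $\bm{x}\in\mathcal{P}_i\subseteq\overline{\mathcal{P}_i}$, the cited property $\mathcal{K}\subseteq\mathcal{L}, \bm{x}\in\overline{\mathcal{K}} \Rightarrow T_\mathcal{K}^B(\bm{x})\subseteq T_\mathcal{L}^B(\bm{x})$ \cite{aubin2009set} yields $T_{\mathcal{P}_i}^B(\bm{x}) \subseteq T_{\bigcup_j\mathcal{P}_j}^B(\bm{x})$ for every $i$, and taking the union over $i$ closes this direction. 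This step is routine.

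The reverse inclusion $T_{\bigcup_{i=1}^n \mathcal{P}_i}^B(\bm{x}) \subseteq \bigcup_{i=1}^n T_{\mathcal{P}_i}^B(\bm{x})$ is where the finiteness of the family is essential, and I expect it to be the main obstacle. Given $\bm{v}$ in the left-hand cone, pick $t_n\to 0^+$ and $\bm{v}_n\to\bm{v}$ with $\bm{x}+t_n\bm{v}_n\in\bigcup_{i=1}^n\mathcal{P}_i$ for all $n$. Since there are only $n$ cones, by the pigeonhole principle at least one index $i^\star$ satisfies $\bm{x}+t_n\bm{v}_n\in\mathcal{P}_{i^\star}$ for infinitely many $n$; passing to that subsequence gives $t_{n_k}\to 0^+$ and $\bm{v}_{n_k}\to\bm{v}$ with $\bm{x}+t_{n_k}\bm{v}_{n_k}\in\mathcal{P}_{i^\star}$, so by the sequential characterization $\bm{v}\in T_{\mathcal{P}_{i^\star}}^B(\bm{x})\subseteq\bigcup_{i=1}^n T_{\mathcal{P}_i}^B(\bm{x})$. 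Combining the two inclusions and substituting Lemma~\ref{lemma: tangent cone of single cone} for each $T_{\mathcal{P}_i}^B(\bm{x})$ gives exactly $T_{\bigcup_{i=1}^n \mathcal{P}_i}^B(\bm{x}) = \bigcup_{i=1}^n\{\bm{v}\in\mathbb{R}^n : \bm{A}_i\bm{v}\leq\bm{0}\}$, as claimed. The only delicate point is that the subsequence extraction relies crucially on finiteness of the family; if infinitely many cones were allowed, the distribution of points among the sets need not concentrate on a single cone, so the argument would break.
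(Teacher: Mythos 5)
Your proof is correct, but it takes a genuinely different route from the paper's. The paper works directly with the set-valued definition $T_{\mathcal{S}}^B(\bm{x})=\limsup_{t\to 0^+}\frac{\mathcal{S}-\bm{x}}{t}$ given in the appendix: it observes that $\frac{\bigcup_{i}\mathcal{P}_i-\bm{x}}{t}=\bigcup_i\{\bm{v}: t\bm{A}_i\bm{v}\leq\bm{0}\}=\bigcup_i\{\bm{v}: \bm{A}_i\bm{v}\leq\bm{0}\}$ is \emph{independent of $t$} (precisely because $\bm{x}$ is the common apex, so all the inhomogeneous terms cancel), and the Kuratowski upper limit of a constant closed family is that set itself. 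You instead prove the general structural fact $T_{\bigcup_i\mathcal{P}_i}^B(\bm{x})=\bigcup_i T_{\mathcal{P}_i}^B(\bm{x})$ --- monotonicity for one inclusion, pigeonhole plus subsequence extraction for the other --- and then substitute Lemma~\ref{lemma: tangent cone of single cone}. Your route buys generality: the finite-union commutation holds for arbitrary closed sets, not just cones sharing an apex, and you correctly isolate finiteness as the load-bearing hypothesis (the paper's scale-invariance trick would also fail without the shared apex, but for a different reason). The paper's route is shorter because it never needs the sequential characterization or a case split over indices; the entire difference quotient collapses in one line. Both arguments are sound and both ultimately rest on the same single-cone computation, so either is acceptable here.
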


\begin{proof}
 Set $ t\bm{v} =\bm{z}-\bm{x}$, then  $\bm{z} = \bm{x}+t\bm{v}$, $\frac{\bigcup_{i=1}^n\mathcal{P}_i-\bm{x}}{t} = \bigcup_{i=1}^n \frac{\mathcal{P}_i-\bm{x}}{t} = \bigcup_{i=1}^n \{\bm{v} \in \mathbb{R}^n: \bm{A}_i(\bm{x}+t\bm{v}) \leq \bm{d}_i\} = \bigcup_{i=1}^n \{\bm{v} \in \mathbb{R}^n: t\bm{A}_i \bm{v} \leq \bm{0}\}$.  
 
 $T_{\bigcup_{i=1}^n \mathcal{P}_i}^B(\bm{x}) = \limsup_{t \rightarrow 0^{+}}  \frac{\bigcup_{i=1}^n\mathcal{P}_i-\bm{x}}{t} = \limsup_{t \rightarrow 0^{+}} \bigcup_{i=1}^n \{\bm{v} \in \mathbb{R}^n: t\bm{A}_i \bm{v} \leq \bm{0}\} = \bigcup_{i=1}^n \{\bm{v} \in \mathbb{R}^n: \bm{A}_i \bm{v} \leq \bm{0}\}$.
\end{proof}

\subsection{Proof of Proposition \ref{lemma: linear function b cone}}
\begin{proof}
 1. For $\bm{x} \in \Int{\mathcal{X}_i}$, the assertion can be proved by the following inclusion relationships.
 \begin{itemize}
     \item $\{ \bm{v}\in \mathbb{R}^n: \bm{w}_i^\top \bm{v} \geq 0\} \subset T_\mathcal{C}^B(\bm{x})$: $\forall \bm{v} \in \{ \bm{v}\in \mathbb{R}^n: \bm{w}_i^\top \bm{v} \geq 0\}$, $\exists \delta(\bm{v})>0$, $\forall t \leq \delta(\bm{v}), \bm{x} + t\bm{v} \in \mathcal{X}_i$, $h(\bm{x} + t\bm{v})=\bm{w}_i(\bm{x} + t\bm{v}) + b_i = t\bm{w}_i^\top \bm{v} \geq 0$, which means $\bm{x} + t\bm{v} \in \mathcal{C}$, $d(\bm{x} + t\bm{v}, \mathcal{C}) = 0$, therefore $\liminf _{t \rightarrow 0^{+}}\frac{d(\bm{x}+t \bm{v}, \mathcal{S}) }{t}=0$, $\bm{v} \in T_\mathcal{C}^B(\bm{x})$.
     \item $\{ \bm{v}\in \mathbb{R}^n: \bm{w}_i^\top \bm{v} \geq 0\} \supset T_\mathcal{C}^B(\bm{x})$: suppose that there exists $\bm{v}^\prime$, $\bm{v}^\prime \in T_\mathcal{C}^B(\bm{x})$ and $\bm{v}^\prime \not\in \{ \bm{v}\in \mathbb{R}^n: \bm{w}_i^\top \bm{v} \geq 0\}$, $\exists \delta>0$, $\forall t \leq \delta$, $\bm{x} + t\bm{v}^\prime\in \mathcal{X}_i$, then $h(\bm{x} + t\bm{v}^\prime)=\bm{w}_i^\top(\bm{x} + t\bm{v}^\prime) + b_i = t\bm{w}_i^\top \bm{v}^\prime < 0$, which means $\bm{x} + t\bm{v}^\prime \not\in \mathcal{C}$,  $\forall t \leq \frac{\epsilon}{2}, d(\bm{x} + t\bm{v}^\prime, \mathcal{C}) = \frac{\vert \bm{w}_i^\top(\bm{x} + t\bm{v}^\prime) + b_i\vert}{\Vert \bm{w}_i\Vert} =\frac{-\bm{w}_i^\top t\bm{v}^\prime}{\Vert \bm{w}_i\Vert}$, $\liminf _{t \rightarrow 0^{+}}\frac{d(\bm{x}+t \bm{v}^\prime, \mathcal{S}) }{t}=\frac{-\bm{w}_i^\top \bm{v}^\prime}{\Vert \bm{w}_i\Vert}>0$, which is contrary to the assumption $\bm{v}^\prime \in T_\mathcal{C}^B(\bm{x})$, thus $\{ \bm{v}\in \mathbb{R}^n: \bm{w}_i^\top \bm{v} \geq 0\} \supset T_\mathcal{C}^B(\bm{x})$.
\end{itemize}
2. For $\bm{x}$ on the intersection of linear regions, i.e., $\bm{x} \in \bigcap_{k=1}^{m} {\mathcal{X}_{i_k}} \wedge \bm{x} \not\in \overline{\mathbb{R}^n \setminus\bigcup_{k=1}^{m} \mathcal{X}_{i_k}}, i_1,\cdots, i_m \in \{1,2,\cdots, N\}$, since $\bm{x} \not\in \overline{\mathbb{R}^n \setminus\bigcup_{k=1}^{m} \mathcal{X}_{i_k}}$, then $\bm{x} \in \Int \bigcup_{k=1}^{m} \mathcal{X}_{i_k}$, there exists $\epsilon>0$ such that $B(\bm{x},\epsilon) \subset \bigcup_{k=1}^{m} \mathcal{X}_{i_k}$. 
%Next we prove $\forall \bm{v} \in \{\bm{v}\in \mathbb{R}^n: \bigcap_{k\in I}\bm{w}_{i_k}^\top \bm{v} \geq 0\}, \bm{v} \in  T_\mathcal{C}^B(\bm{x})$.

 % For a sufficiently small positive number $t$, $\bm{x} + t\bm{v} \in B(\bm{x},\epsilon)  \subset \bigcup_{k=1}^{m} \mathcal{X}_{i_k}$, the discussion below is limited in $B(\bm{x},\epsilon)$, 
 
 We define 
 $\Tilde{\mathcal{C}} \triangleq \mathcal{C} \cap B(\bm{x},\epsilon)$, 
 $\tilde{\mathcal{X}}_{i_k}\triangleq \{\bm{z} \in \mathbb{R}^n: \bigwedge_{j \in E} \bm{A}_{i_k}(j) \bm{z} \leq \bm{d}_{i_k}(j)\}$, where $E =\{j \in \{1,\cdots, \rows(A)\}: \bm{A}_{i_k}(j) \bm{x} = \bm{d}_{i_k}(j)\}$, for $k \in\{1,\cdots,m\}$, in other word, $\tilde{\mathcal{X}}_{i_k}$ remains the linear constraints passing through $\bm{x}$. In this way, all $\tilde{\mathcal{X}}_{i_k}$ are polyhedral cones sharing the same apex $\bm{x}$ and $\bigcup_{k=1}^m \tilde{\mathcal{X}}_{i_k}= \mathbb{R}^n$.

Let $\tilde{\mathcal{X}}_{i_k}^+ \triangleq \{\bm{z}\in \mathbb{R}^n:
            \bigwedge_{j \in E} \bm{A}_{i_k}(j) \bm{z} \leq \bm{d}_{i_k} \wedge \bm{w}_{i_k}^\top \bm{z} + b_{i_k} \geq 0\}$, since $\bm{x} \in \bigcap_{k=1}^{m} {\mathcal{X}_{i_k}}$, $\forall k\in\{1,\cdots,m\}, \bm{w}_{i_k}^\top \bm{x} + b_{i_k} = 0$, then all $\tilde{\mathcal{X}}_{i_k}^+$ are polyhedral cones with apex $\bm{x}$, further, $\tilde{\mathcal{C}} = \bigcup_{k=1}^{m} {\tilde{\mathcal{X}}_{i_k}}^+ \cap B(\bm{x}, \epsilon)$.   
According to Lemma \ref{lemma: open ball tangent cone} and Lemma \ref{lemma: tangent cone of polyhedral cone}, we have
    \begin{equation*}
    \begin{aligned}
         T_{\mathcal{C}}^B(\bm{x})=&T_{\mathcal{C} \cap B(\bm{x},\epsilon)}^B(\bm{x})=T_{\bigcup_{k=1}^{m} {\tilde{\mathcal{X}}_{i_k}}^+ \cap B(\bm{x},\epsilon)}^B(\bm{x}) \\=&T_{\bigcup_{k=1}^{m} {\tilde{\mathcal{X}}_{i_k}}^+}^B(\bm{x}) = \bigcup_{k=1}^{m}T_{\mathcal{X}_{i_k}^+}^B(\bm{x})\\=&\bigcup_{k=1}^m\{\bm{v}\in \mathbb{R}^n:
            \bigwedge_{j \in E} \bm{A}_{i_k}(j) \bm{v} \leq 0 \wedge \bm{w}_{i_k}^\top \bm{v} \geq 0\},
    \end{aligned}
    \end{equation*}

 Hereinafter we prove $T_\mathcal{C}^B(\bm{x}) \supset  \{\bm{v}\in \mathbb{R}^n: \bigwedge_{k\in I}
            \bm{w}_{i_k}^\top \bm{v} \geq 0\}$.
            
 For a region $\tilde{\mathcal{X}}_{i_k}$, it has three forms of relationships with the hyperplane $\mathcal{H}_{i_k}=\{\bm{z} \in \mathbb{R}^n: \bm{w}_{i_k}^\top \bm{z} + b_{i_k}=0\}$:
\begin{itemize}
    \item $\mathcal{H}_{i_k}$ has the intersection with the interior of $\tilde{\mathcal{X}}_{i_k}$, i.e., $\mathcal{H}_{i_k} \cap \Int \tilde{\mathcal{X}}_{i_k} \neq \emptyset$.
    \item $\mathcal{H}_{i_k} \cap \tilde{\mathcal{X}}_{i_k}$ is a facet of $\tilde{\mathcal{X}}_{i_k}$.
    \item $\mathcal{H}_{i_k} \cap \tilde{\mathcal{X}}_{i_k}$ is a face but not facet of $\tilde{\mathcal{X}}_{i_k}$.
\end{itemize}

For the first form, the hyperplane $\mathcal{H}_{i_k}$ can divided $\tilde{\mathcal{X}}_{i_k}$ into two parts, $\tilde{\mathcal{X}}_{i_k} \cap \{\bm{z} \in \mathbb{R}^n: \bm{w}_{i_k}^\top \bm{z} + b_{i_k}\geq 0\}$ and $\tilde{\mathcal{X}}_{i_k} \cap \{\bm{z} \in \mathbb{R}^n: \bm{w}_{i_k}^\top \bm{z} + b_{i_k}\leq 0\}$, which are marked as $\tilde{\mathcal{X}}_{i_k}^+$ and $\tilde{\mathcal{X}}_{i_k}^-$ respectively. Notice that $\forall \bm{z} \in \tilde{\mathcal{X}}_{i_k}^+, h(\bm{z}) \geq 0$ and $\forall \bm{z} \in \tilde{\mathcal{X}}_{i_k}^-, h(\bm{z}) \leq 0$.

For the second form, there are two situations $\tilde{\mathcal{X}}_{i_k} \subset  \{\bm{z} \in \mathbb{R}^n: \bm{w}_{i_k}^\top \bm{z} + b_{i_k}\geq 0\}$ and $\tilde{\mathcal{X}}_{i_k} \subset   \{\bm{z} \in \mathbb{R}^n: \bm{w}_{i_k}^\top \bm{z} + b_{i_k}\leq 0\}$, for the first situation, $\forall \bm{z} \in \tilde{\mathcal{X}}_{i_k}, h(\bm{z}) \geq 0$, and $\forall \bm{z} \in \tilde{\mathcal{X}}_{i_k}, h(\bm{z}) \leq 0$ for the second.

For the third form, there are also two situations same as the second form.

We define the index sets as follows:
\begin{itemize}
    \item $I_{1} \triangleq \{k \in \{1,\cdots,m\}$: $\tilde{\mathcal{X}}_{i_k}$ belongs to the first form$\}$;
    \item $I_{21} \triangleq \{k \in \{1,\cdots,m\}: \tilde{\mathcal{X}}_{i_k}$ belongs to the second form and $\tilde{\mathcal{X}}_{i_k} \subset  \{\bm{z} \in \mathbb{R}^n: \bm{w}_{i_k}^\top \bm{z} + b_{i_k}\geq 0\}\}$;
   \item $I_{22} \triangleq \{k \in \{1,\cdots,m\}: \tilde{\mathcal{X}}_{i_k}$ belongs to the second form and $\tilde{\mathcal{X}}_{i_k} \subset  \{\bm{z} \in \mathbb{R}^n: \bm{w}_{i_k}^\top \bm{z} + b_{i_k}\leq 0\}\}$;
    \item $I_{31} \triangleq \{k \in \{1,\cdots,m\}: \tilde{\mathcal{X}}_{i_k}$ belongs to the third form and $\tilde{\mathcal{X}}_{i_k} \subset  \{\bm{z} \in \mathbb{R}^n: \bm{w}_{i_k}^\top \bm{z} + b_{i_k}\geq 0\}\}$;
   \item $I_{32} \triangleq \{k \in \{1,\cdots,m\}: \tilde{\mathcal{X}}_{i_k}$ belongs to the third form and $\tilde{\mathcal{X}}_{i_k} \subset  \{\bm{z} \in \mathbb{R}^n: \bm{w}_{i_k}^\top \bm{z} + b_{i_k}\leq 0\}\}$;
\end{itemize}
We denote that 
\begin{equation*}
    \begin{aligned}
        \tilde{\mathcal{X}}^+ \triangleq \bigcup_{k\in I_1} \tilde{\mathcal{X}}_{i_k}^+ \cup \bigcup_{k\in I_{21} \cup I_{31}} \tilde{\mathcal{X}}_{i_k},
        \\\tilde{\mathcal{X}}^- \triangleq \bigcup_{k\in I_1} \tilde{\mathcal{X}}_{i_k}^- \cup \bigcup_{k\in I_{22} \cup I_{32}} \tilde{\mathcal{X}}_{i_k}.
    \end{aligned}
\end{equation*}
Notice that $\tilde{\mathcal{X}}^+ \cup \tilde{\mathcal{X}}^- = \mathbb{R}^n$, $\partial \tilde{\mathcal{X}}^+ = \partial \tilde{\mathcal{X}}^-= \tilde{\mathcal{X}}^+ \cap \tilde{\mathcal{X}}^-$. Moreover, for each $l \in I_1$, $\tilde{\mathcal{X}}^+_{i_l}$ and $\tilde{\mathcal{X}}^-_{i_l}$ are adjacent and share same facets; $|I_{21}| = |I_{22}|$, for each linear regions $\tilde{\mathcal{X}}_{i_p}$ where $p \in I_{21}$, there exists $q \in I_{22}$ such that $\tilde{\mathcal{X}}_{i_p}$ and $\tilde{\mathcal{X}}_{i_q}$ are adjacent and share same facets. All the facets discussed above form the boundary of $\tilde{\mathcal{X}}^+$ and $\tilde{\mathcal{X}}^-$, i.e., $\partial \tilde{\mathcal{X}}^+ = \partial \tilde{\mathcal{X}}^-=\bigcup_{k\in I}(\{\bm{z} \in \mathbb{R}^n: \bm{w}_{i_k}^\top \bm{z} + b_{i_k} = 0\} \cap \tilde{\mathcal{X}}_{i_k})$, where $I =\{k \in \{1,\cdots,m\}: \mathcal{X}_{i_k} \text{is a valid linear region}\}$, in other word, $I=I_1 \cup I_{21} \cup I_{22}$.

Next we prove $\tilde{\mathcal{C}} = \tilde{\mathcal{X}}^+ \cap B(\bm{x}, \epsilon)$.
\begin{itemize}
    \item $\tilde{\mathcal{C}} \supset \tilde{\mathcal{X}}^+ \cap B(\bm{x}, \epsilon)$: $\forall \bm{z} \in \tilde{\mathcal{X}}^+ \cap B(\bm{x},\epsilon), h(\bm{z}) \geq 0$, then $\tilde{\mathcal{C}} \supset \tilde{\mathcal{X}}^+ \cap B(\bm{x}, \epsilon)$.
    
    \item $\tilde{\mathcal{C}} \subset \tilde{\mathcal{X}}^+ \cap B(\bm{x}, \epsilon)$: we have $\forall \bm{z} \in \tilde{\mathcal{X}}^-\cap B(\bm{x},\epsilon), h(\bm{z}) \leq 0$, $\tilde{\mathcal{X}}^+ \cup \tilde{\mathcal{X}}^- =\mathbb{R}^n$. 
    
    For all $\bm{z}$ satisfying $\bm{z} \in B(\bm{x}, \epsilon)$ and $h(\bm{z}) > 0$, since $\bm{z}^\prime \not\in \tilde{\mathcal{X}}^- \cap B(\bm{x}, \epsilon)$, then $\bm{z} \in \tilde{\mathcal{X}}^+ \cap B(\bm{x}, \epsilon)$.
    
    Suppose that there exists $\bm{z}^\prime$ such that $\bm{z}^\prime \in   B(\bm{x}, \epsilon) \setminus \tilde{\mathcal{X}}^+$ and $h(\bm{z}^\prime) = 0$, since $\Int \mathcal{C} \cup \partial \mathcal{C} = \overline{\mathcal{C}} = \mathcal{C} = \overline{\Int \mathcal{C}} = \Int \mathcal{C} \cup \partial(\Int \mathcal{C}), \Int \mathcal{C} \cap \partial \mathcal{C} = \emptyset, \Int \mathcal{C} \cap \partial(\Int \mathcal{C}) = \emptyset$,  then $\partial \mathcal{C} = \partial(\Int \mathcal{C})$. $\bm{z}^\prime \in \partial \mathcal{C} = \partial(\Int \mathcal{C})$, choose a positive number $\delta$ such that $B(\bm{z}^\prime, \delta) \subset  B(\bm{x}, \epsilon) \setminus \tilde{\mathcal{X}}^+$, we have $B(\bm{z}^\prime, \delta) \cap \Int{\mathcal{C}} \neq \emptyset$, thus there exists a point $\bm{z}^{\prime\prime}$ such that $h(\bm{z}^{\prime\prime}) >0 $ and $\bm{z}^{\prime\prime} \in B(\bm{x}, \epsilon) \setminus \tilde{\mathcal{X}}^+ \subset \tilde{\mathcal{X}}^- \cap B(\bm{x}, \epsilon)$, however, this is contrary to the fact that $\forall \bm{z} \in \tilde{\mathcal{X}}^-\cap B(\bm{x},\epsilon), h(\bm{z}) \leq 0$. Therefore for all $\bm{z}$ satisfying $\bm{z} \in B(\bm{x}, \epsilon)$ and $h(\bm{z}) = 0$, $\bm{z} \in \tilde{\mathcal{X}}^+ \cap B(\bm{x}, \epsilon)$.

    To sum up, for all $\bm{z}$ satisfying $\bm{z} \in B(\bm{x}, \epsilon)$ and $h(\bm{z}) \geq 0$, $\bm{z} \in \tilde{\mathcal{X}}^+ \cap B(\bm{x}, \epsilon)$, i.e., $\tilde{\mathcal{C}} \subset \tilde{\mathcal{X}}^+ \cap B(\bm{x}, \epsilon)$.
\end{itemize}

    Since $\tilde{\mathcal{C}} = \tilde{\mathcal{X}}^+ \cap B(\bm{x}, \epsilon)$, in $B(\bm{x}, \epsilon)$, $\partial \mathcal{C} = \partial \tilde{\mathcal{X}}^+  = \bigcup_{k\in I}
            (\{\bm{z} \in \mathbb{R}^n: \bm{w}_{i_k}^\top \bm{z} + b_{i_k} = 0\} \cap \tilde{\mathcal{X}}_{i_k})$, where $I =\{k \in \{1,\cdots,m\}: \mathcal{X}_{i_k} \text{ is a valid linear region}\}$.
            With the fact that $\tilde{\mathcal{X}}^+$ is the union of cones sharing the same apex $\bm{x}$, then the boundary of $\tilde{\mathcal{X}}^+$ in $\mathbb{R}^n$ is also $\partial \tilde{\mathcal{X}}^+ = \bigcup_{k\in I}
            (\{\bm{z} \in \mathbb{R}^n: \bm{w}_{i_k}^\top \bm{z} + b_{i_k} = 0\} \cap \tilde{\mathcal{X}}_{i_k})$. 
            
            We define a new cone with the apex $\bm{x}$, $\mathcal{P} \triangleq \{\bm{x} +\bm{z}: \bigwedge_{k\in I}
            \bm{w}_{i_k}^\top \bm{z} \geq 0\}$. 
            The constraints for $\mathcal{P}$ can be written as a conjunctive clause $\bigwedge_{j=1}^{|I|} c_j$, where all the literals $c_j$ have the form $\bm{w}^\top \bm{z} \geq 0$, then $\mathcal{P} = \{\bm{x} +\bm{z}: \bigwedge_{j=1}^{|I|} c_j\}$. For $\tilde{\mathcal{X}}^+$, it is the union of cones, i.e., $\tilde{\mathcal{X}}^+ = \bigcup_{l=1} \mathcal{P}_l$, every $\mathcal{P}_l$ can be written as $\mathcal{P} = \{\bm{x} +\bm{z}: \bigwedge_{j=1}^{M_l} c_j^{(l)}\}$, where $c_j^{(l)}$ also has the form $\bm{w}^\top \bm{z} \geq 0$, ${M_l}$ is the number of constraints of $\mathcal{P}_l$. We assume here that every 
            constraint $c_j^{(l)}$ of each $\mathcal{P}_l$ is not a redundant constraint, i.e.,  $\forall \bm{z}((\bigwedge_{k=1, k\neq j}^{M_l} c_k^{(l)} ) \rightarrow c_j^{(l)})$ is false.
          $\tilde{\mathcal{X}}^+$ can be written as $\tilde{\mathcal{X}}^+ = \{ \bm{x}+ \bm{z}: \bigvee_{l=1}^{|I_1|+|I_{21}|+|I_{31}|} \bigwedge_{j=1}^{M_l} c_j^{(l)}\}$ and we have  
        \begin{equation*}
            \mathcal{P} \cap \tilde{\mathcal{X}}^+ = \{\bm{x}+\bm{z}:  (\bigwedge_{j=1}^{|I|} c_j) \wedge (\bigvee_{l=1}^{|I_1|+|I_{21}|+|I_{31}|} \bigwedge_{j=1}^{M_l} c_j^{(l)})\}.
        \end{equation*}
    For every $l$, in clause $A_l: \bigwedge_{j=1}^{M_l} c_j^{(l)}$, if it has the literals in $\bigwedge_{j=1}^{|I|} c_j$, removing those literals wouldn't change  the propositional formula $(\bigwedge_{j=1}^{|I|} c_j) \wedge (\bigvee_{l=1}^{|I_1|+|I_{21}|+|I_{31}|} \bigwedge_{j=1}^{M_l} c_j^{(l)})$, i.e., 
    \begin{equation*}
        (\bigwedge_{j=1}^{|I|} c_j) \wedge (\bigvee_{l=1}^{|I_1|+|I_{21}|+|I_{31}|} A_l) \Leftrightarrow (\bigwedge_{j=1}^{|I|} c_j) \wedge (\bigvee_{l=1}^{|I_1|+|I_{21}|+|I_{31}|} A^\prime_l),
    \end{equation*}
    where $A^\prime_l$ is a clause that removes literals in $\bigwedge_{j=1}^{|I|} c_j$ from $A_l$, for example, if $\bigwedge_{j=1}^{|I|} c_j: c_1 \wedge c_2 \wedge c_3$, $A_l: c_1 \wedge c^\prime \wedge c^{\prime\prime}$, then $A^\prime_l: c^\prime \wedge c^{\prime\prime}$. 

\begin{equation*}
    \begin{aligned}
        \mathcal{P} \cap \tilde{\mathcal{X}}^+ =& \{\bm{x}+\bm{z}:  (\bigwedge_{j=1}^{|I|} c_j) \wedge (\bigvee_{l=1}^{|I_1|+|I_{21}|+|I_{31}|} A_l)\} \\ =&\{\bm{x}+\bm{z}:(\bigwedge_{j=1}^{|I|} c_j) \wedge (\bigvee_{l=1}^{|I_1|+|I_{21}|+|I_{31}|} A^\prime_l)\}\\
=& \mathcal{P} \cap \{\bm{x}+\bm{z}: (\bigvee_{l=1}^{|I_1|+|I_{21}|+|I_{31}|} A^\prime_l)\}.
    \end{aligned}
\end{equation*}
Let $\mathcal{G} \triangleq \{\bm{x}+\bm{z}: (\bigvee_{l=1}^{|I_1|+|I_{21}|+|I_{31}|} A^\prime_l)\}$, recall that $\partial \tilde{\mathcal{X}}^+ = \bigcup_{k\in I}
            (\{\bm{z} \in \mathbb{R}^n: \bm{w}_{i_k}^\top \bm{z} + b_{i_k} = 0\} \cap \tilde{\mathcal{X}}_{i_k})$, then $\mathcal{G}$ removes all boundary constraints of $\tilde{\mathcal{X}}^+$, which means $\mathcal{G}= \mathbb{R}^n$, then $\mathcal{P} \cap \tilde{\mathcal{X}}^+ = \mathcal{P} \cap \mathcal{G} =\mathcal{P} \cap \mathbb{R}^n = \mathcal{P}$, $\mathcal{P} \subset \tilde{\mathcal{X}}^+$. According to Lemma \ref{lemma: open ball tangent cone} and properties of Bouligand tangent cone (if $\mathcal{K} \subset \mathcal{L}$ and $\bm{x} \in \overline{\mathcal{K}}$, then $T_{\mathcal{K}}^B(\bm{x}) \subset T_{\mathcal{L}}^B(\bm{x})$, see \cite[Chapter 4]{aubin2009set}), we have

 \begin{equation*}
     \begin{aligned}
         T_{\mathcal{P}}^B(\bm{x}) \subset T_{\tilde{\mathcal{X}}^+}^B(\bm{x})&= T_{\tilde{\mathcal{X}}^+ \cap B(\bm{x},\epsilon)}^B(\bm{x}) \\&=T_{\mathcal{C} \cap B(\bm{x},\epsilon)}^B(\bm{x}) = T_{\mathcal{C}}^B(\bm{x}).
     \end{aligned}
 \end{equation*}     

 By Lemma \ref{lemma: tangent cone of polyhedral cone}, $T_{\mathcal{P}}^B(\bm{x})= \{\bm{v}\in \mathbb{R}^n: \bigwedge_{k\in I}
            \bm{w}_{i_k}^\top \bm{v} \geq 0\}$, therefore $T_\mathcal{C}^B(\bm{x}) \supset  \{\bm{v}\in \mathbb{R}^n: \bigwedge_{k\in I}
            \bm{w}_{i_k}^\top \bm{v} \geq 0\}$.

3. In the proof of the second assertion, for $\bm{x} \in \bigcap_{k=1}^{m} {\mathcal{X}_{i_k}} \wedge \bm{x} \not\in \overline{\mathbb{R}^n \setminus\bigcup_{k=1}^{m} \mathcal{X}_{i_k}}, i_1,\cdots, i_m \in \{1,2,\cdots, N\}$, there exists $\epsilon>0$ such that $B(\bm{x},\epsilon) \subset \bigcup_{k=1}^{m} \mathcal{X}_{i_k}$, we have proven that in $B(\bm{x}, \epsilon)$, $\partial \mathcal{C} = \partial \tilde{\mathcal{X}}^+  = \bigcup_{k\in I}
            (\{\bm{z} \in \mathbb{R}^n: \bm{w}_{i_k}^\top \bm{z} + b_{i_k} = 0\} \cap \tilde{\mathcal{X}}_{i_k})$, therefore, $\partial \mathcal{C} \cap B(\bm{x}, \epsilon) \subset \bigcup_{k\in I}\mathcal{X}_{i_k}$. In fact, when $m=1$, the conclusion still holds, i.e., for $\bm{x} \in \Int \mathcal{X}_j, j \in \{1,\cdots,N\}$, there exists $\epsilon>0$ such that $B(\bm{x},\epsilon) \subset \mathcal{X}_j$, $\partial \mathcal{C} \cap B(\bm{x}, \epsilon) \subset \mathcal{X}_j$, here $\mathcal{X}_j$ must be a valid linear region since $\Int \mathcal{X}_j \cap \partial \mathcal{C} \neq \emptyset$.
            
 Here we give an uniform description, let $K(\bm{x}) =  \{l \in \{1,\cdots,N\}: \bm{x} \in \mathcal{X}_l\}$, $J =\{l \in \{1,\cdots,N\}: \mathcal{X}_{l} \text{ is a valid linear region}\}$, for each $\bm{x} \in \partial \mathcal{C}$,
%$J(\bm{x}) = \{i_1,\cdots, i_{m(\bm{x})} \in \{1,\cdots, N\}: \bm{x} \in \bigcap_{k=1}^{m(\bm{x})} {\mathcal{X}_{i_k}} \wedge \bm{x} \not\in \overline{\mathbb{R}^n \setminus\bigcup_{k=1}^{m(\bm{x})} \mathcal{X}_{i_k}}\}$, 
$\bm{x} \in\bigcap_{l \in K(\bm{x})} {\mathcal{X}_{l}} \wedge \bm{x} \not\in \overline{\mathbb{R}^n \setminus\bigcup_{l \in K(\bm{x})} \mathcal{X}_{l}}$, there exists $\epsilon(\bm{x})>0$ such that $B(\bm{x},\epsilon(\bm{x})) \subset \bigcup_{l \in K(\bm{x})} \mathcal{X}_{l}$, $\partial \mathcal{C} \cap B(\bm{x}, \epsilon(\bm{x})) \subset \bigcup_{l\in K(\bm{x}) \cap J}\mathcal{X}_{l}$. Therefore, $\partial \mathcal{C} = \bigcup_{\bm{x} \in \partial\mathcal{C}} (\partial \mathcal{C} \cap B(\bm{x}, \epsilon(\bm{x}))) \subset \bigcup_{\bm{x} \in \partial \mathcal{C}} \bigcup_{l \in K(\bm{x}) \cap J} \mathcal{X}_l = \bigcup_{l \in J} \mathcal{X}_l$.
            
\end{proof}

\subsection{Proof of Proposition \ref{lemma: linear function condition}}
\begin{proof}
 For the first case, according to Lemma \ref{lemma: relationship between tangent cones}, $\bm{v} \in T_\mathcal{C}^C(\bm{x})$ iff
$$
\limsup _{\bm{x}^\prime \stackrel{\mathcal{C}}{\rightarrow} \bm{x}} d\left(\bm{v}, T_\mathcal{C}^B\left(\bm{x}^{\prime}\right)\right)=0,
$$

Since $\bm{x} \in \Int{\mathcal{X}_i}$, there exists $\delta >0$ such that $\forall \epsilon \leq \delta$, $B(\bm{x}, \epsilon) \subset \Int \mathcal{X}_i,$ then $\forall \epsilon \leq \delta, \forall \bm{x}^\prime \in B(\bm{x}, \epsilon), T_\mathcal{C}^B(\bm{x}^\prime)= T_\mathcal{C}^B(\bm{x}) = \{ \bm{v}\in \mathbb{R}^n: \bm{w}_i^\top \bm{v} \geq 0\}.$ Then,
\begin{equation*}
    \begin{aligned}
      &\limsup _{\bm{x}^\prime \stackrel{\mathcal{C}}{\rightarrow} \bm{x}} d(\bm{v}, T_\mathcal{C}^B(\bm{x}^\prime)) \\
      & = \lim_{\epsilon \rightarrow 0^+}(\sup \{ d(\bm{v}, T_\mathcal{C}^B(\bm{x}^\prime)):\bm{x}^\prime \in (B(\bm{x},\epsilon) \cap \mathcal{C})\setminus \{\bm{x}\}\})  \\
      &=\lim_{\epsilon \rightarrow 0^+}(\sup \{ d(\bm{v}, T_\mathcal{C}^B(\bm{x}^\prime)):\bm{x}^\prime \in (B(\bm{x},\epsilon) \cap \partial \mathcal{C})\setminus \{\bm{x}\}\}) \\
      &= \sup \{ d(\bm{v}, T_\mathcal{C}^B(\bm{x}^\prime)):\bm{x}^\prime \in (B(\bm{x},\delta) \cap \partial \mathcal{C})\setminus \{\bm{x}\}\} \\
      & = d(\bm{v}, T_\mathcal{C}^B(\bm{x})).
    \end{aligned}
\end{equation*}
With the fact that $T_\mathcal{C}^B(\bm{x})$ is closed, $\bm{v} \in T_\mathcal{C}^B(\bm{x}) \Leftrightarrow d(\bm{v}, T_\mathcal{C}^B(\bm{x})) =0$, then $\bm{v} \in T_\mathcal{C}^C(\bm{x})\Leftrightarrow \bm{v} \in T_\mathcal{C}^B(\bm{x}),   T_\mathcal{C}^C(\bm{x}) = T_\mathcal{C}^B(\bm{x}) = \{ \bm{v}\in \mathbb{R}^n: \bm{w}_i^\top \bm{v} \geq 0\}.$

 For the second case, $\bm{x}$ on the intersection of linear regions, i.e., $\bm{x} \in \bigcap_{k=1}^{m} {\mathcal{X}_{i_k}} \wedge \bm{x} \not\in \overline{\mathbb{R}^n \setminus\bigcup_{k=1}^{m} \mathcal{X}_{i_k}}, i_1,\cdots, i_m \in \{1,2,\cdots, N\}$, since $\bm{x} \not\in \overline{\mathbb{R}^n \setminus\bigcup_{k=1}^{m} \mathcal{X}_{i_k}}$, then $\bm{x} \in \Int \bigcup_{k=1}^{m} \mathcal{X}_{i_k}$, there exists $\delta>0$ such that $B(\bm{x},\delta) \subset \bigcup_{k=1}^{m} \mathcal{X}_{i_k}$.

Herein we use the notations $\tilde{\mathcal{X}}_{i_k}, I_{1}, I_{21}, I_{22}$ defined in the proof of Proposition \ref{lemma: linear function b cone}.

Given any closed ball $B(\bm{x}, \epsilon)$ with $\epsilon \leq \delta$, For each $\tilde{\mathcal{X}}_{i_j}$ with $j \in I_{1}$, there always exists $\bm{x}^\prime \in B(\bm{x}, \epsilon) \cap \partial \mathcal{C} \cap \Int \tilde{\mathcal{X}}_{i_j}$.

Given any closed ball $B(\bm{x}, \epsilon)$ with $\epsilon \leq \delta$, For each $\tilde{\mathcal{X}}_{i_p}$ with $p \in I_{21}$ (or $p \in I_{22}$), there always exists $\bm{x}^\prime \in B(\bm{x}, \epsilon) \cap \relind(\mathcal{P}_{i_p})$, where $\mathcal{P}_{i_p} = \partial \mathcal{C} \cap \tilde{\mathcal{X}}_{i_p} = \{\bm{z}\in \mathbb{R}^n: \bm{w}_{i_p}^\top \bm{z} + b_{i_p} =0\} \cap \tilde{\mathcal{X}}_{i_p}$ is a facet of $\tilde{\mathcal{X}}_{i_p}$, further, there exists another linear region $\tilde{\mathcal{X}}_{i_q}$ with $q \in I_{22}$ (or $q \in I_{21}$) such that $\tilde{\mathcal{X}}_{i_p}$ and $\tilde{\mathcal{X}}_{i_q}$ share the same facet $\partial \mathcal{C} \cap \tilde{\mathcal{X}}_{i_p} = \partial \mathcal{C} \cap \tilde{\mathcal{X}}_{i_q}$. Here $\bm{x}^\prime \in \mathcal{X}_{i_p} \cap \mathcal{X}_{i_q} \wedge \bm{x}^\prime \not\in \overline{\mathbb{R}^n \setminus (\mathcal{X}_{i_p} \cap \mathcal{X}_{i_q})}$, according to Proposition \ref{lemma: linear function b cone}, $T_\mathcal{C}^B(\bm{x}^\prime) = \{ \bm{v} \in \mathbb{R}^n: \bm{w}_{i_p}^\top \bm{v} \geq 0\} = \{ \bm{v} \in \mathbb{R}^n: \bm{w}_{i_q}^\top \bm{v} \geq 0\}$.

\begin{equation*}
    \begin{aligned}
      &\limsup _{\bm{x}^\prime \stackrel{\mathcal{C}}{\rightarrow} \bm{x}} d(\bm{v}, T_\mathcal{C}^B(\bm{x}^\prime)) \\
      &=\lim_{\epsilon \rightarrow 0^+}(\sup \{ d(\bm{v}, T_\mathcal{C}^B(\bm{x}^\prime)):\bm{x}^\prime \in (B(\bm{x},\epsilon) \cap \mathcal{C})\setminus \{\bm{x}\}\}) \\
      &=\lim_{\epsilon \rightarrow 0^+}(\sup \{ d(\bm{v}, T_\mathcal{C}^B(\bm{x}^\prime)):\bm{x}^\prime \in (B(\bm{x},\epsilon) \cap \partial \mathcal{C})\setminus \{\bm{x}\}\}) 
    \end{aligned}
\end{equation*}
When $\epsilon \leq \delta$, 
\begin{equation*}
\begin{aligned}
     &\lim_{\epsilon \rightarrow 0^+}(\sup \{ d(\bm{v}, T_\mathcal{C}^B(\bm{x}^\prime)):\bm{x}^\prime \in (B(\bm{x},\epsilon) \cap \partial \mathcal{C})\setminus \{\bm{x}\}\}) =\\
     &\sup \{ d(\bm{v}, T_\mathcal{C}^B(\bm{x}^\prime)):\bm{x}^\prime \in (B(\bm{x},\delta) \cap \partial \mathcal{C})\setminus \{\bm{x}\}\} =\\
    &\max\left\{   
    \begin{aligned}
        &\sup \left\{ d(\bm{v}, T_\mathcal{C}^B(\bm{x}^\prime)):\bm{x}^\prime \in \Int \mathcal{X}_{i_j}\right\}, j \in I_1; \\
    &\sup \left\{ d(\bm{v}, T_\mathcal{C}^B(\bm{x}^\prime)):\bm{x}^\prime \in \mathcal{X}_{i_p} \cap \mathcal{X}_{i_q}  \right.\\
    &\left. \wedge \bm{x}^\prime \not\in\overline{\mathbb{R}^n \setminus (\mathcal{X}_{i_p} \cap \mathcal{X}_{i_q})}\right\}, p \in I_{21}, q \in I_{22};\\
    &\sup \left\{ d(\bm{v}, T_\mathcal{C}^B(\bm{x}^\prime)):\bm{x}^\prime  \in \bigcap_{l \in L} {\mathcal{X}_{i_l}} \right.\\
    &\left. \wedge \bm{x}^\prime \not\in \overline{\mathbb{R}^n \setminus\bigcup_{l\in L} \mathcal{X}_{i_l}}\right\},L \in 2^{\{1,\cdots,m\}}, L \neq \emptyset.
    \end{aligned}\right.
\end{aligned}
\end{equation*}
For simplicity, we omit the condition $\bm{x}^\prime \in (B(\bm{x},\delta) \cap \partial \mathcal{C})\setminus \{\bm{x}\}$ in the three cases of the `max' expression.  $\bm{v} \in T_\mathcal{C}^C(\bm{x})$ iff $\bm{v}$ lies in all Bouligand tangent cones appeared in the `max' expression, i.e., 
\begin{equation*}
\begin{aligned}
 \bm{v} \in& \bigcap_{j \in I_1}\{\bm{v}: \bm{w}_{i_j}^\top \bm{v} \geq 0\} \cap \bigcap_{p \in I_{21}}\{\bm{v}: \bm{w}_{i_p}^\top \bm{v} \geq 0\} 
  \\&\cap\bigcap_{q \in I_{22}}\{\bm{v}: \bm{w}_{i_q}^\top \bm{v} \geq 0\}  \\&\cap \bigcap_{\bm{x}^\prime  \in \bigcap_{l \in L} {\mathcal{X}_{i_l}} \wedge \bm{x}^\prime \not\in \overline{\mathbb{R}^n \setminus\bigcup_{l\in L} \mathcal{X}_{i_l}}} T_\mathcal{C}^B(\bm{x}^\prime)
\end{aligned}
\end{equation*}
where $L \in 2^{\{1,\cdots,m\}}, L \neq \emptyset$. According to \eqref{subeq: insection b cone subset} in Proposition \ref{lemma: linear function b cone}, we have for all $\bm{x}^\prime  \in \bigcap_{l \in L} {\mathcal{X}_{i_l}} \wedge \bm{x}^\prime \not\in \overline{\mathbb{R}^n \setminus\bigcup_{l\in L} \mathcal{X}_{i_l}}$, $\bigcap_{k \in I} \{\bm{v}: \bm{w}_{i_k}^\top \bm{v} \geq 0\} \subset T_\mathcal{C}^B(\bm{x}^\prime)$, where $I = I_1 \cup I_{21} \cup I_{22}$. Therefore, $\bm{v} \in T_\mathcal{C}^C(\bm{x}) \Leftrightarrow \bm{v} \in \{\bm{v}\in \mathbb{R}^n: \bigwedge_{k\in I}
            \bm{w}_{i_k}^\top \bm{v} \geq 0\}$.
\end{proof}

\subsection{Proof of Theorem \ref{thm: linear function invariance}}
\begin{proof}
    For all $\bm{x} \in \partial \mathcal{C}$, if there exists $i \in \{1,2, \cdots, N\}$ such that $\bm{x} \in \Int{\mathcal{X}_i}$, obviously $\mathcal{X}_i$ is a valid linear region, according to \eqref{eq: linear tangent cone}, $\bm{f}(\bm{x}) \in T_\mathcal{C}^C(\bm{x})$; if there exists $i_1,\cdots, i_m \in \{1,2,\cdots, N\}$ such that $\bm{x} \in \bigcap_{k=1}^{m} {\mathcal{X}_{i_k}} \wedge \bm{x} \not\in \overline{\mathbb{R}^n \setminus\bigcup_{k=1}^{m} \mathcal{X}_{i_k}}$, according to \eqref{eq: linear invariance condition}, $\bm{f}(\bm{x})$ satisfies $\bigwedge_{k\in I}
            \bm{w}_{i_k}^\top \bm{v} \geq 0$,  where $I =\{k \in \{1,\cdots,m\}: \mathcal{X}_{i_k} \text{is a valid linear region}\}$, in combination with \eqref{eq: linear tangent cone}, we have $\bm{f}(\bm{x}) \in T_\mathcal{C}^C(\bm{x})$. By the equivalence of assertions (a) and (c) in Theorem \ref{Nagumo’s theorem}, the conclusion holds.
\end{proof}

\subsection{Proof of Theorem \ref{thm: bound connected}}
\begin{proof}
    Let the valid linear region indicator set $\mathcal{A} = \{ \mathscr{C}_1, \cdots, \mathscr{C}_n\}, n \geq 2$, suppose there exists $\mathscr{C}^\prime \in \mathcal{A}$ such that it can't be found by the boundary propagation algorithm, then for each facet $\mathcal{F}$ of $\mathcal{X}(\mathscr{C}^\prime)$ such that $\mathcal{F} \cap \partial \mathcal{C}\neq \emptyset$, $\mathcal{F} \cap \partial \mathcal{C}$ doesn't belong to any other valid linear region, therefore the regions which have intersection with $\mathcal{X}(\mathscr{C}^\prime)\cap\partial\mathcal{C}$ are not valid linear regions, hence 
    %there doesn't exist a path to make $\mathcal{X}(\mathscr{C}^\prime)$ and other valid linear regions connected, however, this is contrary to the assumption that $\partial \mathcal{C}$ is a \laode{path connected set}.
    $\partial\mathcal{C}$ can be partitioned into two nonempty open subsets: one containing $\partial\mathcal{C} \cap \mathcal{X}(\mathscr{C}^\prime)$ and the other containing $\bigcup_{\mathscr{C}\in \mathcal{A} \setminus \{\mathscr{C}^\prime\}}\partial\mathcal{C} \cap \mathcal{X}(\mathscr{C})$. However, this is contrary to the assumption that $\partial \mathcal{C}$ is a connected set.
\end{proof}

\subsection{Proof of Proposition \ref{prop: initial condition}}

\begin{proof}
    For the ``$\Leftarrow$'' direction: Suppose that $\mathcal{S}_I \not\subset \mathcal{C}$, then $\mathcal{S}_I \cap \mathcal{C}^c \neq \emptyset$. Moreover we have $\mathcal{S}_I \cap \Int \mathcal{C} \neq \emptyset$, Besides $(\mathcal{S}_I \cap \Int\mathcal{C})\cup (\mathcal{S}_I \cap \mathcal{C}^c) = \mathcal{S}_I \cap (\Int\mathcal{C} \cup \partial \mathcal{C} \cup\mathcal{C}^c) = \mathcal{S}_I$. However the above is contrary to the fact that $\mathcal{S}_I$ is connected (a connected set is a set that cannot be partitioned into two nonempty subsets which are open in the relative topology induced on the set).

    For the ``$\Rightarrow$'' direction: Suppose that there exist a point $\bm{x}$ such that $\bm{x} \in \mathcal{S}_I \cap \partial \mathcal{C}$, since $h_I(\bm{x}) >0$ and $h_I$ is continuous, there exists $\delta >0$ such that for all $\bm{z} \in B(\bm{x},\delta)$, $h_I(\bm{z}) >0$, i.e., $B(\bm{x},\delta) \subset \mathcal{S}_I$, besides $\mathcal{S}_I \subset \mathcal{C}$, then we have $B(\bm{x},\delta) \subset \mathcal{C}$, however this is contrary to $\bm{x} \in \partial \mathcal{C}$, therefore $\mathcal{S}_I \cap \partial \mathcal{C} = \emptyset$. Since $\mathcal{S}_I \neq \emptyset, \mathcal{S}_I \subset \mathcal{C}$ and $\mathcal{S}_I \cap \partial \mathcal{C} = \emptyset$, therefore $\mathcal{S}_I \cap \Int \mathcal{C} \neq \emptyset$.
\end{proof}

\subsection{Proof of Proposition \ref{prop: unsafe condition}}

\begin{proof}
    For the ``$\Leftarrow$'' direction: Suppose that  $\mathcal{S}_U \cap \mathcal{C} \neq \emptyset$, since $\mathcal{S}_U \cap \partial \mathcal{C} =\emptyset$, then $\mathcal{S}_U \cap \Int\mathcal{C} \neq \emptyset$. Moreover we have $\mathcal{S}_U \cap  \mathcal{C}^c \neq \emptyset$, Besides $(\mathcal{S}_U \cap \Int\mathcal{C})\cup (\mathcal{S}_U \cap \mathcal{C}^c) = \mathcal{S}_U \cap (\Int\mathcal{C} \cup \partial \mathcal{C} \cup\mathcal{C}^c) = \mathcal{S}_U$. However the above is contrary to the fact that $\mathcal{S}_U$ is connected (a connected set is a set that cannot be partitioned into two nonempty subsets which are open in the relative topology induced on the set).

    For the ``$\Rightarrow$'' direction: Suppose that there exist a point $\bm{x}$ such that $\bm{x} \in \mathcal{S}_U \cap \partial \mathcal{C}$, since $h_U(\bm{x}) >0$ and $h_U$ is continuous, there exists $\delta >0$ such that for all $\bm{z} \in B(\bm{x},\delta)$, $h_U(\bm{z}) >0$, i.e., $B(\bm{x},\delta) \subset \mathcal{S}_U$, besides $\mathcal{S}_U \subset \mathcal{C}^c$, then we have $B(\bm{x},\delta) \subset \mathcal{C}^c$, however this is contrary to $\bm{x} \in \partial \mathcal{C}$, therefore $\mathcal{S}_U \cap \partial \mathcal{C} = \emptyset$. Since $\mathcal{S}_U \neq \emptyset$ and $\mathcal{S}_U \subset \mathcal{C}^c$ , therefore $\mathcal{S}_U \cap \mathcal{C}^c \neq \emptyset$.
\end{proof}

\subsection{System Descriptions}
\label{appendix:Experiment_Details}
% \subsubsection{System Description}
In this section, we provide descriptions of the systems employed in our experiments.
% We show four systems we use in our methods below.

\textbf{Arch3:} The system Arch3 is a two-dimensional nonlinear polynomial system that is commonly used as a benchmark for barrier certificate synthesis in the literature.
The dynamics of  Arch3  are given by:

\begin{equation*}
    \left[\begin{array}{c}
    \dot{x}_1 \\
    \dot{x}_2
    \end{array}\right]=\left[\begin{array}{c}
    x_1-x_1^3+x_2-x_1x_2^2 \\
    -x_1+x_2-x_1^2x_2-x_2^3
    \end{array}\right].
\end{equation*}
% Arch3 system is a 2-dimensional nonlinear polynomial system which is often used as a basic example for barrier certificate synthesis in many literature.

\textbf{Complex:} We propose a newly designed system, denoted as  Complex, which is a three-dimensional nonlinear non-polynomial system involving both trigonometric and exponential terms. 
The dynamics of  Complex are given by:
\begin{equation*}
    \left[\begin{array}{c}
    \dot{x}_1 \\
    \dot{x}_2 \\
    \dot{x}_3
    \end{array}\right]=\left[\begin{array}{c}
    -x_1(1+\sin^2(x_2)+e^{-x_3^2}) \\
    -x_2(1+\cos^2(x_3)+\tanh(x_1^2) \\
    -x_3(1+\ln(1+x_1^2+x_2^2))
    
    \end{array}\right].
\end{equation*}
% This model is used to demonstrate the effectiveness of our method on systems with transcendental functions.

% We propose a new system named Complex, which is a 3-dimensional nonlinear non-polynomial system, whose plant has both trigonometric and exponential terms. We use this model to show the efficacy of our method for models with transcendental functions.

\textbf{Linear4d:} This is a four-dimensional linear system. Due to its linear nature, LPs can be used for both verification and falsification.
The dynamics of  Linear4d are given by:
\begin{equation*}
    \left[\begin{array}{c}
    \dot{x}_1 \\
    \dot{x}_2 \\
    \dot{x}_3 \\
    \dot{x}_4
    \end{array}\right]=\left[\begin{array}{c}
    -x_1 \\
    x_1-2x_2 \\
    x_1-4x_3 \\
    x_1-3x_4
    
    \end{array}\right].
\end{equation*}

% We further propose a four-dimensional system to demonstrate the full capability of our method in performing exact verification of barrier certificates. In the case of linear systems, our method ensures sound verification, as all associated optimization problems are linear programs.

% We further propose a 4-dimensional system to show the full power of exact verification of our method. In linear systems, our method is safe for verification since all the optimization problems are linear programs.

\textbf{Decay:} This is a six-dimensional system exhibiting decay in each component. It serves to 
% extend our study to a more complex setting and to 
demonstrate the scalability of our method to higher-order systems.
The dynamics of  Decay are given by:

\begin{equation*}
    \left[\begin{array}{c}
    \dot{x}_1 \\
    \dot{x}_2 \\
    \dot{x}_3 \\
    \dot{x}_4 \\
    \dot{x}_5 \\
    \dot{x}_6 
    \end{array}\right]=\left[\begin{array}{c}
    -x_1(1 + x_1^2 + x_2^2 + x_3^2+x_4^2+x_5^2+x_6^2) \\
    -x_2(1 + x_1^2 + x_2^2 + x_3^2+x_4^2+x_5^2+x_6^2) \\
    -x_3(1 + x_1^2 + x_2^2 + x_3^2+x_4^2+x_5^2+x_6^2)  \\
    -x_4(1 + x_1^2 + x_2^2 + x_3^2+x_4^2+x_5^2+x_6^2)  \\
    -x_5(1 + x_1^2 + x_2^2 + x_3^2+x_4^2+x_5^2+x_6^2)  \\
    -x_6(1 + x_1^2 + x_2^2 + x_3^2+x_4^2+x_5^2+x_6^2) 
    
    \end{array}\right].
\end{equation*}

\subsection{Experimental Settings}
% The barrier certificates that we use for verification are trained using the method in \cite{zhao2020synthesizing}, with their code available in Github~\footnote{\url{https://github.com/zhaohj2017/HSCC20-Repeatability}}. The hyperparameters for training are shown in Table \ref{tab:hyperparageneral} and Table \ref{tab:hyperexplicit}. We refer readers to \cite{zhao2020synthesizing} for detailed explanations for these hyper-parameters. 
The neural barrier certificates used for verification are trained following the method in \cite{zhao2020synthesizing}, with the corresponding code available on GitHub\footnote{\url{https://github.com/zhaohj2017/HSCC20-Repeatability}}. 
% The hyper-parameters for training are listed in Table~\ref{tab:hyperparageneral} and Table~\ref{tab:hyperexplicit}. We refer the reader to \cite{zhao2020synthesizing} for detailed explanations of these hyper-parameters.

% \begin{table}[!htbp]
% \centering
% \caption{General hyper-paarameters for training neural barrier certificates}
% \begin{tabular}{ll}
% \toprule
% Hyper-Parameters & Value \\ \midrule
% LEARNING\_RATE & 0.01  \\
% LOSS\_OPT\_FLAG & 1e-16 \\
% TOL\_MAX\_GRAD & 6 \\
% EPOCHS & 10 \\
% TOL\_INIT & 0.0  \\
% TOL\_SAFE & 0.0 \\
% TOL\_BOUNDARY & 0.05 \\
% TOL\_LIE & 0.0 \\
% TOL\_NORM\_LIE & 0.0 \\
% WEIGHT\_LIE & 1 \\
% WEIGHT\_NORM\_LIE & 0 \\
% DECAY\_LIE & 1 \\
% DECAY\_INIT & 1 \\
% DECAY\_UNSAFE & 1 \\
% NUM\_RESTART & 2\\
% \bottomrule         
% \label{tab:hyperparageneral}
% \end{tabular}
% \end{table}

% For comparison with \cite{zhang2023exact}, we set the domain of each system as $[-2,2]^n$, initial set as $[-0.2,0.2]^n$ and unsafe set as $||x||_2 = \sqrt{\sum_{i=1}^n x_i^2}\geq 0.5$ \laode{$\{\bm{x} =(x_i)_{1\leq i \leq n}\in \mathbb{R}^n:  \sqrt{\sum_{i=1}^n x_i^2}\geq 0.5\}$}, where $n$ is the dimension of system. We note that our method can be used for unbounded cases.
For comparison with \cite{zhang2023exact}, we set the domain of each system as $[-3,3]^n$, the initial set as %\yiling{$(-0.2,0.2)^n$}, 
$\{\bm{x} =(x_i)_{1\leq i \leq n}\in \mathbb{R}^n: -\sum_{i=1}^n x_i^2 +0.04 \geq0\}$ and the unsafe set as $\{\bm{x} =(x_i)_{1\leq i \leq n}\in \mathbb{R}^n: -\sum_{i=1}^n (x_i-3)^2 +1 \geq0\}$ for all the systems, where $n$ is the dimension of system. We remark that our method can also be applied to unbounded cases.

% The experimental environment is shown below.

% The training of barrier certificates is experimented on 64-bit Windows PC with 12th Gen Intel(R) Core(TM) i9-12900K 3.20 GHz processor, 64GB RAM and NVIDIA GeForce RTX 3080 GPU.

% The verification of barrier certificates is run on Hyper-V Ubuntu 22.04 LTS (Jammy Jellyfish), with dynamical RAM ranging from 512 MB to 2048576 MB, 12 virtual CPUs. 

% The training of barrier certificates was conducted on a 64-bit Windows PC 
% equipped with a 12th Generation Intel(R) Core(TM) i9-12900K processor (3.20 GHz), 64 GB of RAM, and an NVIDIA GeForce RTX 3080 GPU. For verification, a Hyper-V virtual machine running Ubuntu 22.04 LTS (Jammy Jellyfish) was utilized, configured with dynamic RAM allocation ranging from 512 MB to 2,048,576 MB and 12 virtual CPUs.

The verification of barrier certificates was conducted on a Hyper-V virtual machine running Ubuntu 22.04 LTS (Jammy Jellyfish), configured with dynamic RAM allocation with at least 512 MB and 12 virtual CPUs. The host system was a 64-bit Windows PC equipped with a 12th Generation Intel(R) Core(TM) i9-12900K processor (3.20 GHz), 64 GB of RAM. 

All optimization problems in this work are formulated using the Python package Pydrake \cite{drake} on Ubuntu 22.04 LTS (Jammy Jellyfish) and solved with the nonlinear optimization solver NLopt \cite{NLopt}.

% All the programs in this paper are formulated using the python package Pydrake\cite{drake} for Ubuntu 22.04 LTS (Jammy Jellyfish) and solved by nonlinear optimization solver Nlopt\cite{NLopt}.

\subsection{Detailed Experiment Results}
\label{appendix:experiment_results}
% We present additional experimental results that were not included in the main text due to space constraints. Table~\ref{tab:BFTest} reports both the enumeration time and the number of valid linear regions. We observe that the time for enumeration increases as the number of valid linear region and total number of hidden neurons grow. The verification time for the positive invariance condition, reported in Table~\ref{tab:BFVerification}, is computed as the sum of the time spent solving the optimization and SMT problems, both of which are also individually reported. 
% In all cases, the SMT solving time is consistently greater than the optimization time, with a notably large gap observed in the system Decay. 
We present additional experimental results that were omitted from the main text due to space constraints. Table~\ref{tab:BFTest} reports both the enumeration time and the number of valid linear regions. We observe that the enumeration time increases with the number of valid linear regions. 
% and the total number of hidden neurons. 
The verification time for the positive invariance condition, reported in Table~\ref{tab:BFVerification}, is computed as the sum of the time spent solving the optimization and SMT problems, both of which are also reported individually. In all cases, the SMT solving time is consistently greater than the optimization time, with a particularly large gap observed for the system Decay.
% The time for SMT problems is consistently longer than the optimization in all cases. Especially, the gap of the time is quite large in system Decay. 
% Table~\ref{tab:BFInitialUnsafe} reports the validity of initial and unsafe conditions and the run-time of verification of these conditions.
% Table~\ref{tab:BFInitialUnsafe} presents whether the initial and unsafe conditions are satisfied, as well as the time required to verify them. Note that it reports results only for those barrier certificates that satisfy the positive invariance condition. All such certificates also satisfy the initial and unsafe conditions. As observed across the three tables, the cost of enumeration dominates the overall computation time compared to the time required for verification. 
Table~\ref{tab:BFInitialUnsafe} presents the satisfaction of the initial and unsafe conditions, together with the time required to verify them. Note that the results are reported only for barrier certificates that satisfy the positive invariance condition. All such certificates are also observed to satisfy the initial and unsafe conditions. Across the three tables, the cost of enumeration clearly dominates the overall computation time, compared to the time required for verification.

\begin{table}[htbp]
\caption{This table  reports the neural network architecture of barrier certificates, the number of valid linear regions ($N$), and the enumeration time ($t_e$). Here, $\sigma$ denotes the ReLU activation function.}

\label{tab:BFTest}
\centering
\begin{tabular}{llll}
\toprule
Case & NN Architecture  & $N$ & $t_e$    \\ \cmidrule(r){1-4}
 \multirow{ 7}{*}{Arch3}
 &  2-32-$\sigma$-1         & 48 & 3.64     \\
 &  2-64-$\sigma$-1  & 98 & 27.53     \\
 &  2-96-$\sigma$-1         & 146 & 92.87      \\
 &  2-128-$\sigma$-1        & 188 & 204.62      \\  
 &  2-32-$\sigma$-32-$\sigma$-1 & 80 & 18.93       \\ 
 &  2-64-$\sigma$-64-$\sigma$-1 & 206 & 214.68        \\ 
 & 2-96-$\sigma$-96-$\sigma$-1 & 348 & 772.49        \\  
 &  2-128-$\sigma$-128-$\sigma$-1 & 408 & 1650.81         \\ 
 \cmidrule(r){1-4}
 \multirow{ 7}{*}{Complex} & 3-32-$\sigma$-1 & 168 & 3.95      \\ 
 & 3-64-$\sigma$-1 & 1109 & 77.28        \\ 
 & 3-96-$\sigma$-1 & 2584 & 415.72       \\ 
 & 3-32-$\sigma$-32-$\sigma$-1 & 878 & 66.85       \\ 
 % & 3-64-$\sigma$-64-$\sigma$-1 & 5638 & 1730.21 & 54.06    \\
 & 3-64-$\sigma$-64-$\sigma$-1 & 2069 & 498.18   \\
& 3-96-$\sigma$-96-$\sigma$-1 & 15558 & 18506.32    \\ 
 \cmidrule(r){1-4}
 \multirow{7}{*}{Linear4d}
 & 4-16-$\sigma$-1  &  308 & 8.29      \\
 & 4-32-$\sigma$-1 &  1204 & 80.18     \\
 & 4-48-$\sigma$-1  & 8762 &  9912.56   \\
 & 4-8-$\sigma$-8-$\sigma$-1  &  246 & 5.85       \\
 & 4-12-$\sigma$-12-$\sigma$-1  &  880 & 48.82     \\
 & 4-16-$\sigma$-16-$\sigma$-1  &  2699 & 522.99      \\
 & 4-24-$\sigma$-24-$\sigma$-1  &  11522 & 22561.19   \\
 \cmidrule(r){1-4}
 \multirow{4}{*}{Decay}
 % & 6-10-$\sigma$-1 & 462 & 24.61 & 2527.51   \\ 
  & 6-20-$\sigma$-1 & 7087 & 18601.99    \\
 & 6-8-$\sigma$-8-$\sigma$-1 & 687 & 46.40     \\ 
 & 6-10-$\sigma$-10-$\sigma$-1 & 3052 & 1950.76    \\ 
 & 6-12-$\sigma$-12-$\sigma$-1 & 7728 & 31099.56       \\ 
 \bottomrule
\end{tabular}
\end{table}

\begin{table}[H]
\caption{
% Verification of Positive Invariance Condition. 
% This table reports time of solving optimization problems($t_{opt}$), SMT problems($t_{dReal}$), and the total time($t_v$) of them. “$-$” indicates that SMT procedure is skipped: for linear system Arch3, since the optimization can both verify and falsify, there is no need for processing SMT; for other nonlinear systems, when the optimization falsify the positive invariance, there is also no need for processing SMT.
This table reports the time for solving optimization problems ($t_{opt}$), SMT problems ($t_{dReal}$), and their total time ($t_v$). The symbol “$-$” indicates that the SMT procedure is skipped: for the linear system Arch3, since the optimization alone can both verify and falsify, there is no need to invoke SMT; for other nonlinear systems, when the optimization already falsifies positive invariance, SMT processing is also unnecessary.
%This occurs when the optimization procedure alone is sufficient: for falsification in nonlinear systems and for verification in linear systems.
}
% For nonlinear systems, $-$ means that the optimization is sufficient for falsification. For linear systems, $-$ means that the optimization is sufficient for verification. And thus SMT verification is skipped.}
% An \textcolor{ForestGreen}{S} means that the barrier certificate skips the SMT optimization since it is verified under linear systems while an  \textcolor{Maroon}{S} means that the barrier certificate is falsified through the optimization and thus it skips the SMT verification.}
\label{tab:BFVerification}
\centering
\begin{tabular}{lllll}
\toprule
Case & NN Architecture  & $t_{opt}$ & $t_{dReal}$ & $t_v$   \\ \cmidrule(r){1-5}
 \multirow{ 7}{*}{Arch3}
 &  2-32-$\sigma$-1         & 0.02 & 0.07 & 0.09    \\
 &  2-64-$\sigma$-1 & 0.05 & 0.26 & 0.31   \\
 &  2-96-$\sigma$-1        & 0.08 & 0.58 & 0.67     \\
 &  2-128-$\sigma$-1        & 0.12 & $-$ & 0.12    \\  
 &  2-32-$\sigma$-32-$\sigma$-1 & 0.05 & 0.26 & 0.31     \\ 
 &  2-64-$\sigma$-64-$\sigma$-1& 0.15& 1.10& 1.26      \\ 
 & 2-96-$\sigma$-96-$\sigma$-1 & 0.61 & $-$ & 0.61       \\  
 % &  2-128-$\sigma$-128-$\sigma$-1 & \xmark & - & 1.01      
  \cmidrule(r){1-5}
 \multirow{ 6}{*}{Complex} & 3-32-$\sigma$-1 & 0.08 & 0.43 & 0.51      \\ 
 & 3-64-$\sigma$-1 & 0.66 & 5.10 & 5.76      \\ 
 & 3-96-$\sigma$-1 & 1.79 & 18.35 & 20.15      \\ 
 & 3-32-$\sigma$-32-$\sigma$-1 & 0.64 & 4.19 & 4.83       \\ 
 % & 3-64-$\sigma$-64-$\sigma$-1 & 5638 & 1730.21 & 54.06    \\
 & 3-64-$\sigma$-64-$\sigma$-1 & 1.85 & $-$ & 1.85 \\
& 3-96-$\sigma$-96-$\sigma$-1 & 32.55 & 436.13 & 318.38  \\ 
 \cmidrule(r){1-5}
 \multirow{7}{*}{Linear4d}
 & 4-16-$\sigma$-1  & 0.11 & $-$  & 0.11    \\
 & 4-32-$\sigma$-1 & 0.49  & $-$  & 0.49     \\
 & 4-48-$\sigma$-1  & 5.87 & $-$ & 5.87    \\
 & 4-8-$\sigma$-8-$\sigma$-1  & 0.11 & $-$  & 0.11    \\
 & 4-12-$\sigma$-12-$\sigma$-1  & 0.45 & $-$  & 0.45   \\
 & 4-16-$\sigma$-16-$\sigma$-1  & 1.73  &$-$ & 1.73    \\
 & 4-24-$\sigma$-24-$\sigma$-1  & 10.18 & $-$  & 10.18 \\
 \cmidrule(r){1-5}
 \multirow{4}{*}{Decay}
 % & 6-10-$\sigma$-1 & 462 & 24.61 & 2527.51   \\ 
  & 6-20-$\sigma$-1 & 4.33 & 724.62  & 729.99   \\
 & 6-8-$\sigma$-8-$\sigma$-1 & 0.47 & 183.49 & 183.96     \\ 
 & 6-10-$\sigma$-10-$\sigma$-1 & 2.31 & $-$ & 2.31    \\ 
 & 6-12-$\sigma$-12-$\sigma$-1 & 12.99 & $-$ & 12.99     \\ 
 % \cmidrule(r){1-8}
 % \multirow{ 3}{*}{Hi-Order}& 8 & 8-8-ReLU-1 & 312 & 265.5881 & 0.3433 & 265.9387 & Fail  \\ 
 % & 8 & 8-8-ReLU-8-ReLU-1[Type I] & 9937 & 43659.2834 & 14.9974 & 43674.2929 & Fail      \\ 
 %  & 8 & 8-8-ReLU-8-ReLU-1[Type II] & - & >96h & - & >96h & -      \\ 
 \bottomrule
\end{tabular}
\end{table}

\begin{table}[H]
\caption{This table  reports the validity tags of the initial and unsafe conditions, denoted as $V_I$ and $V_U$, respectively, along with the total verification time, denoted as $t$. A ~\cmark~ indicates  that the barrier certificate satisfies the corresponding condition.}

\label{tab:BFInitialUnsafe}
\centering
\begin{tabular}{lllll}
\toprule
Case & NN Architecture  & $V_I$ & $V_U$ & $t$   \\ \cmidrule(r){1-5}
 \multirow{ 5}{*}{Arch3}
 &  2-32-$\sigma$-1         & \cmark & \cmark & 0.15    \\
 &  2-64-$\sigma$-1  & \cmark & \cmark & 0.54   \\
 &  2-96-$\sigma$-1         & \cmark & \cmark & 1.19     \\
 &  2-32-$\sigma$-32-$\sigma$-1 & \cmark & \cmark & 0.49     \\ 
 &  2-64-$\sigma$-64-$\sigma$-1 & \cmark & \cmark & 1.28      \\  \cmidrule(r){1-5}
 \multirow{ 5}{*}{Complex} & 3-32-$\sigma$-1 & \cmark & \cmark & 0.82      \\ 
 & 3-64-$\sigma$-1 & \cmark & \cmark & 9.31      \\ 
 & 3-96-$\sigma$-1 & \cmark & \cmark & 33.63      \\ 
 & 3-32-$\sigma$-32-$\sigma$-1 & \cmark & \cmark & 7.87       \\ 
 % & 3-64-$\sigma$-64-$\sigma$-1 & \cmark & \cmark & 109.25    \\
& 3-96-$\sigma$-96-$\sigma$-1 & \cmark & \cmark & 742.12  \\ 
 \cmidrule(r){1-5}
 \multirow{5}{*}{Linear4d}
 & 4-16-$\sigma$-1 & \cmark & \cmark  & 1.21    \\
 & 4-32-$\sigma$-1 & \cmark & \cmark  & 7.44     \\
 & 4-48-$\sigma$-1  & \cmark & \cmark & 74.45    \\
 & 4-8-$\sigma$-8-$\sigma$-1 & \cmark & \cmark  & 1.87   \\
 & 4-12-$\sigma$-12-$\sigma$-1  & \cmark & \cmark  & 5.77   \\
 % & 4-16-$\sigma$-16-$\sigma$-1 & \cmark & \cmark  & 21.32    \\
 % & 4-24-$\sigma$-24-$\sigma$-1   & \cmark & \cmark & 106.38 \\
 \cmidrule(r){1-5}
 \multirow{2}{*}{Decay}
 % & 6-10-$\sigma$-1 & 462 & 24.61 & 2527.51   \\ 
  & 6-20-$\sigma$-1 & \cmark & \cmark & 510.24   \\
 & 6-8-$\sigma$-8-$\sigma$-1 & \cmark & \cmark & 117.21     \\ 
 % \cmidrule(r){1-8}
 % \multirow{ 3}{*}{Hi-Order}& 8 & 8-8-ReLU-1 & 312 & 265.5881 & 0.3433 & 265.9387 & Fail  \\ 
 % & 8 & 8-8-ReLU-8-ReLU-1[Type I] & 9937 & 43659.2834 & 14.9974 & 43674.2929 & Fail      \\ 
 %  & 8 & 8-8-ReLU-8-ReLU-1[Type II] & - & >96h & - & >96h & -      \\ 
 \bottomrule
\end{tabular}
\end{table}

\end{document}